\documentclass[smallextended]{svjour3}
\smartqed

\usepackage{balance}
\usepackage{microtype}
\usepackage{enumitem}
\usepackage{xspace}
\usepackage{xcolor}
\usepackage{hyperref}
\usepackage{boxedminipage}
\usepackage{amsmath,amsfonts,latexsym}
\usepackage{graphicx}
\usepackage[sort]{cite}
\usepackage{comment}

\renewcommand{\paragraph}[1]{\vspace{6pt}\noindent\textbf{#1}}

\newcommand{\poly}{\ensuremath{{\sf poly}}\xspace}
\newcommand{\ppt}{\ensuremath{\text{p.p.t.}}\xspace}
\newcommand{\msg}{\ensuremath{{{\sf m}}}\xspace}

\newcommand{\comm}{\ensuremath{{{\sf com}}}\xspace}

\newcommand{\mcal}[1]{\ensuremath{\mathcal {#1}}}

\newcommand{\cc}{\ensuremath{{C}}\xspace}
\newcommand{\mcc}{\ensuremath{C}\xspace}

\newcommand{\nizk}{\ensuremath{{\sf nizk}}\xspace}
\newcommand{\prf}{\ensuremath{{\sf PRF}}\xspace}

\newcommand{\Coin}{\ensuremath{{{\sf Coin}}}\xspace}

\newcommand{\sk}{\mathsf{sk}}
\newcommand{\pk}{\mathsf{pk}}

\renewcommand{\k}{\ensuremath{\secparam}}
\newcommand{\secparam}{\ensuremath{{\kappa}}}
\newcommand{\numrounds}{\ensuremath{{\lambda}}}
\newcommand{\comsize}{\ensuremath{{\lambda}}}
\newcommand{\compsec}{\ensuremath{{\chi}}}

\newcommand{\algA}{{\color{black}\ensuremath{\mcal{A}}}\xspace}

\newcommand{\algC}{{\color{black}\ensuremath{\mcal{C}}}\xspace}

\newcommand{\algR}{{\color{black}\ensuremath{\mcal{R}}}\xspace}
\newcommand{\algZ}{{\ensuremath{{\mcal{Z}}}}\xspace}
\newcommand{\AZ}{{\ensuremath{({\mcal{A}}, \mcal{Z})}}\xspace}

\newcommand{\Fsort}{\ensuremath{\mcal{F}_{\rm mine}}\xspace}
\newcommand{\Fmine}{{\color{black}\ensuremath{\mcal{F}_{\rm mine}}}\xspace}
\newcommand{\mine}{{\ensuremath{\tt mine}}\xspace}
\newcommand{\ver}{{\ensuremath{\tt verify}}\xspace}

\newcommand{\negl}{{\sf negl}}

\newcommand{\lang}{{\ensuremath{{\mathcal{L}}}}\xspace}
\newcommand{\keygen}{{\ensuremath{{\sf Gen}}}\xspace}

\newcommand{\view}{\textsf{view}}

\newcommand{\N}{\mathbb{N}}

\newcommand{\protideal}{\ensuremath{\Pi_{\rm ideal}}\xspace}

\newcommand{\crs}{{\ensuremath{{\sf crs}}}\xspace}
\newcommand{\wtcrs}{{\ensuremath{\overline{\sf crs}}}\xspace}
\newcommand{\trap}{{\ensuremath{\tau}}\xspace}

\newcommand{\epk}{{\ensuremath{{\sf epk}}}\xspace}

\newcommand{\stmt}{{\ensuremath{{\sf stmt}}}\xspace}

\newcommand{\exec}{\textsf{EXEC}}

\newcommand{\ignore}[1]{}
\newcommand{\ignorepsync}[1]{}

\newcommand{\elaine}[1]{{\footnotesize\color{magenta}[Elaine:
#1]}}

\newcommand{\hubert}[1]{{\footnotesize\color{red}[Hubert: #1]}}
\newcommand{\rl}[1]{{\footnotesize\color{cyan}[Ling: #1]}}
\newcommand{\kartik}[1]{{\footnotesize\color{blue}[Kartik: #1]}}

\renewcommand{\elaine}[1]{}
\renewcommand{\rl}[1]{}


\newcommand{\hidex}[1]{#1}

\newcommand{\MType}{{\ensuremath{\tt T}}\xspace}
\newcommand{\Status}{{\ensuremath{\tt Status}}\xspace}
\newcommand{\Propose}{{\ensuremath{\tt Propose}}\xspace}

\newcommand{\Vote}{{\ensuremath{\tt Vote}}\xspace}
\newcommand{\Commit}{{\ensuremath{\tt Commit}}\xspace}

\newcommand{\Terminate}{{\ensuremath{\tt Terminate}}\xspace}
\newcommand{\Cert}{{\ensuremath{\mathcal{C}}}\xspace}
\newcommand{\WC}[1]{{\ensuremath{W_{c,#1}}}}
\newcommand{\WH}[1]{{\ensuremath{W_{h,#1}}}}
\newcommand{\nodeS}{node~$S$\xspace}

\begin{document}

\title{Communication Complexity of Byzantine Agreement,
  Revisited\thanks{This work is partially supported by The Federmann Cyber Security Center in conjunction with the Israel National Cyber Directorate.
	T-H. Hubert Chan was partially supported by the Hong Kong RGC under the grant 17200418.
		}}
\author{Ittai Abraham \and
        T-H.\ Hubert Chan \and
        Danny Dolev \and
        Kartik Nayak \and
        Rafael Pass \and
        Ling Ren \and
        Elaine Shi}

\institute{VMware Research \at \email{iabraham@vmware.com} \and
           The University of Hong Kong \at
           \email{hubert@cs.hku.hk} \and
           The Hebrew University of Jerusalem \at
           \email{dolev@cs.huji.ac.il} \and
           Duke University \at \email{kartik@cs.duke.edu} \and
           Cornell Tech \at \email{rafael@cs.cornell.edu} \and
           University of Illinois, Urbana-Champaign \at
           \email{renling@illinois.edu} \and 
           Cornell University \at \email{runting@gmail.com}
}



\date{}
\maketitle

\begin{abstract}
As Byzantine Agreement (BA) protocols find application in large-scale decentralized cryptocurrencies, an increasingly important problem is to design BA protocols with improved communication complexity.
A few existing works have shown how to achieve subquadratic BA under an {\it adaptive} adversary.
Intriguingly, they all make a common relaxation about the adaptivity of the attacker, that is, if an honest node sends a message and then gets corrupted in some round, the adversary {\it cannot erase the message that was already sent} --- henceforth we say that such an adversary cannot perform ``after-the-fact removal''. 
By contrast, many (super-)quadratic BA protocols in the literature can tolerate after-the-fact removal.
In this paper, we first prove that disallowing after-the-fact removal is necessary for achieving subquadratic-communication BA. 

Next, we show new subquadratic binary BA constructions
(of course, assuming no after-the-fact removal)
that achieves near-optimal resilience and expected constant rounds under 
standard cryptographic assumptions and a public-key infrastructure (PKI) in both synchronous and partially synchronous settings. 
In comparison, all known subquadratic protocols make additional strong assumptions such as random oracles or the ability of honest nodes to erase secrets from memory,
and even with these strong assumptions, no prior work can achieve the above properties.
Lastly, we show that some setup assumption is necessary for achieving subquadratic multicast-based BA. 


\keywords{Byzantine agreement \and communication complexity \and subquadratic \and lower bounds}
\end{abstract}

\section{Introduction}
\label{sec:intro}

\ignore{
Roughly speaking, in (binary) BA, 
every player receives an input bit that is either 0 or 1.
The players' goal is to agree on a bit
such that the following properties are satisfied: 
\begin{itemize}[leftmargin=12pt]
\item[-] \emph{consistency/safety:} all honest players output the same bit, 
\item[-] \emph{validity:} if all honest nodes receive the same input bit $b$, 
then all honest nodes output $b$.\kartik{why not go the standard
  route of listing termination?}
\end{itemize} 
}

Byzantine agreement (BA)~\cite{byzgen}
is a central abstraction in distributed systems.
Typical BA protocols~\cite{dolevstrong,pbft,dworklynch} require all players to
send messages to all other players, and thus, $n$-player BA requires at
least $n^2$ communication complexity. 
Such protocols are thus not well
suited for \emph{large-scale distributed systems} such as decentralized cryptocurrencies~\cite{bitcoin}. 
A fundamental problem is to design BA protocols with improved communication complexity. 

In fact, in a model with {\it static} corruption, this is relatively easy. 
For example, suppose there are at most $f<(\frac 12 -\epsilon)n$ corrupt nodes 
where $\epsilon$ is a positive constant;
further, assume there is a trusted common random string (CRS) that is chosen independently 
of the adversary's (static) corruption choices.  
Then, we can use the CRS to select a $\comsize$-sized committee of players.
Various elegant works have investigated how to weaken or remove the trusted set-up assumptions required for committee election and retain subquadratic communication~\cite{kingsaialeader,hiddengraph}.
Once a committee is selected,
we can run any BA protocol among the committee,
and let the committee members may send their outputs to all ``non-committee''
players who could then output the majority bit. 
This protocol works as long as there is an honest majority on the committee.
Thus, the error probability is bounded by $\exp(-\Omega(\comsize))$ due to a standard Chernoff bound.

Such a committee-based approach, however, fails if we consider an
\emph{adaptive attacker}. Such an attacker can simply observe what nodes are
on the committee, then corrupt them, and thereby control the
entire committee! A natural and long-standing open question is thus
whether subquadratic communication is possible w.r.t. an adaptive attacker:
\begin{itemize}[leftmargin=*]
\item[]
\emph{Does there exist a BA protocol with subquadratic communication
complexity that resists adaptive corruption of players?}
\end{itemize} 

This question has been partially answered in a few prior works.
First, a breakthrough work by King and Saia~\cite{kingsaiabarrier}
presented a BA protocol with communication complexity $O(n^{1.5})$.
More recent works studied practical constructions motivated 
by cryptocurrency applications: notably the celebrated Nakamoto 
consensus~\cite{bitcoin,backbone} can reach agreement in
subquadratic communication assuming idealized proof-of-work.
Subsequently, several so-called ``proof-of-stake'' constructions~\cite{algorand,aggelospos2}
also showed how to realize BA with subquadratic communication. 
All of the above works tolerate adaptive corruptions.

What is both intriguing and unsatisfying is that all these works
happen to make a common relaxing assumption about the adaptivity of the adversary, 
namely,
if adversary adaptively corrupts an honest node $i$ 
who has just sent a message $\msg$ in round $r$, the adversary is unable
to erase the honest message $\msg$ sent in round $r$.
Henceforth we say that such an adversary is incapable of {\it after-the-fact removal}.
In comparison, many natural $\Omega(n^2)$-communication BA protocols~\cite{dolevstrong,katzbft,abraham2018synchronous}
can be proven secure even if the adversary is capable of after-the-fact removal
-- henceforth referred to as a \emph{strongly adaptive} adversary.
That is, 
if an honest node $i$ sends a message $\msg$ in round $r$, 
a strongly adaptive adversary (e.g., who controls the egress routers of many nodes)
can observe $\msg$ and then decide to corrupt $i$ and erase the message $\msg$ that
node~$i$ has just sent in round $r$.
This mismatch in model naturally raises the following question:
\begin{itemize}[leftmargin=*]
\item[]
Is disallowing after-the-fact removal necessary for achieving
subquadratic-communication BA?
\end{itemize}

\paragraph{Main result 1: disallowing ``after-the-fact'' removal is necessary.}
Our first contribution is a new lower bound showing that any (possibly randomized) 
BA protocol must incur at least $\Omega(f^2)$ communication in expectation 
in the presence of a strongly adaptive adversary 
where $f$ denotes the number of corrupt nodes.
The proof of our lower bound is inspired by  
the work of Dolev and Reischuk~\cite{consensuscclb}, 
who showed that any {\it deterministic} BA protocol 
must incur $\Omega(f^2)$ communication 
even against a {\it static} adversary.
We remark our lower bound (as well as Dolev-Reischuk) 
holds in a very strong sense: 
even when making common (possibly very strong) assumptions 
such as proof-of-work and random oracles, 
and even under a more constrained  
{\it omission} adversary who is only allowed
to omit messages sent from and to corrupt nodes, 
but does not deviate from the protocol otherwise.

\begin{theorem}[Impossibility of BA with subquadratic communication
w.r.t. a strongly adaptive adversary]
Any (possibly randomized)
BA protocol must in expectation incur at least $\Omega(f^2)$ communication
in the presence of a strongly adaptive adversary capable of performing
after-the-fact removal, 
where $f$ denotes the number of corrupt nodes.
\label{thm:lbstrongadapt}
\end{theorem}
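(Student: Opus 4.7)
The plan is to adapt the classical Dolev--Reischuk communication lower bound to the randomized setting by exploiting after-the-fact removal as a substitute for the dedicated ``blocker'' corruptions used in the deterministic proof. I would argue by contradiction: assume a protocol $\Pi$ achieves $o(f^2)$ expected communication while still satisfying consistency, validity, and termination against every strongly adaptive adversary corrupting up to $f$ parties, and exhibit an omission-style attacker that violates consistency with non-negligible probability.

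The attack partitions the $n$ parties into three sets $A$, $B$, $C$ with $|C| \le f/2$ and $|A|, |B|$ chosen to make a random-partition counting go through. Parties in $A$ receive input $0$, parties in $B$ receive input $1$, and the parties in $C$ are corrupted from the outset. The nodes in $C$ run two parallel simulations using freshly sampled internal coins: towards $A$ they emit exactly the messages an honest $C$ would send in an all-honest, all-$0$ execution; towards $B$ they emit the messages of an all-honest, all-$1$ execution. The remaining $f/2$ corruption slots are reserved for after-the-fact removal: whenever an honest $u$ tries to send a message whose destination lies on the opposite side of the $A$--$B$ cut, the adversary corrupts $u$ on the spot and erases that message before delivery --- a move available precisely because the adversary is strongly adaptive.

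The quantitative crux is showing that the $f/2$ removal budget actually suffices on a constant-probability event. I would average over a uniformly random choice of $(A,B,C)$ subject to the size constraints: the expected number of messages in any all-honest reference execution whose endpoints straddle the $A$--$B$ cut is a constant fraction of the total communication, hence $o(f^2)$ by assumption, and by restricting attention to the large ``quiet'' sub-population that must exist when the communication is so small, the number of \emph{distinct} cross-cut senders can be driven down to $o(f)$ with probability bounded away from zero. On that event, no bit ever traverses the $A$--$B$ cut in the attacked execution, so the joint view of $A \cup C$ is identically distributed to its view in the all-honest all-$0$ execution and by validity every party in $A$ outputs $0$; symmetrically every party in $B$ outputs $1$, contradicting consistency.

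The main obstacle is that randomization and adaptivity interact: the set of messages that need removing is itself a random variable that depends on the attacker's prior corruptions. I would handle this by coupling the attacked execution with the two all-honest reference executions $E_0$ and $E_1$ through a common global coin tape, and proving by induction on rounds that, conditioned on the budget-not-exhausted event up to round $r$, the attacked execution restricted to $A \cup C$ is bit-for-bit identical to $E_0$ restricted to $A \cup C$, and symmetrically for $B \cup C$; the conditional probability that round $r{+}1$ exhausts the budget is then controllable via the $o(f^2)$ hypothesis applied to $E_0$ and $E_1$. The whole argument is by design insensitive to auxiliary setup such as random oracles, proofs of work, or a PKI, because none of these primitives can transport a bit across an adversarially severed edge of the communication graph without some sender first emitting it, and every such emission is charged to the removal budget.
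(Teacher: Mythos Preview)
Your partition argument is structurally different from the paper's proof, and as written it has a genuine gap that breaks the indistinguishability step. The paper does \emph{not} partition the honest population; instead it follows Dolev--Reischuk directly and isolates a \emph{single} node $p$. Concretely, the paper takes a set $V$ of $f/2$ non-sender nodes, lets an adversary $\algA$ corrupt $V$ and have each $v\in V$ ignore its first $f/2$ incoming messages; if fewer than $(\epsilon f)^2$ messages are sent into $V$, Markov plus averaging give a random $p\in V$ that receives at most $f/2$ messages with constant probability. A second adversary $\algA'$ then leaves $p$ honest, corrupts the rest of $V$, and uses after-the-fact removal to adaptively corrupt at most $f/2$ senders in $U$ and erase their messages to $p$. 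Honest nodes in $U$ cannot tell $\algA$ from $\algA'$ (so validity pins their output), while $p$ sees nothing at all and hence fails to match them with probability at least $1/2$. The $\Omega(f^2)$ factor is exactly ``$f/2$ candidate isolated nodes, each needing $f/2$ incoming messages to avoid isolation.''

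Your coupling claim does not go through. You want the view of $A\cup C$ in the attacked run to equal its view in the all-honest, all-$0$ execution $E_0$. But in $E_0$ the nodes of $B$ have input $0$ and may well send messages to $A$; in your attacked run $B$ has input $1$ and every $B\!\to\!A$ message is erased. So $A$'s incoming transcript differs from $E_0$ already in round $1$ unless $B$ happens to send nothing to $A$ in $E_0$, which you have not (and cannot easily) arrange. Patching this by taking $E_0$ to be ``$A\cup C$ honest with input $0$, $B$ corrupt and silent'' makes validity require $|B|\le f$ (and symmetrically $|A|\le f$), which forces $n\le 2.5f$ and does not yield the bound for general $f$. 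Separately, your budget accounting is off: from $o(f^2)$ total messages you can bound the number of cross-cut \emph{messages}, but not the number of distinct cross-cut \emph{senders}---every one of $n\gg f$ honest nodes could emit a single cross-cut message, exhausting your $f/2$ removal corruptions immediately. The paper sidesteps both issues by isolating one receiver rather than severing a large cut.
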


\paragraph{Main result 2: near-optimal subquadratic BA with minimal assumptions.}
On the upper bound front, we present a subquadratic BA protocols that,
besides the necessary ``no after-the-fact removal'' assumption, 
rely only on standard cryptographic and setup assumptions. 
Our protocols achieve near-optimal resilience and expected constant rounds.

Our results improve upon existing works in two major aspects. 
Firstly, besides ``no after-the-fact removal'', 
all existing subquadratic protocols make very strong {\it additional} assumptions,
such as random oracles~\cite{algorand,aggelospos2} or proof-of-work~\cite{bitcoin}.
In particular, some works~\cite{kingsaiabarrier,algorand}
assume the ability of honest nodes to securely erase secrets from memory 
and that adaptive corruption cannot take place between when
an honest node sends a message and when it erases secrets from memory.
Such a model is referred to as the 
``erasure model'' in the cryptography literature and as ``ephemeral keys'' in Chen and Micali~\cite{algorand}.
To avoid confusing the term 
with ``after-the-fact message removal'', we rename it the {\it memory-erasure model} in this paper.
Secondly, and more importantly, even with those strong assumptions, existing protocols do not achieve the above properties (cf. Section~\ref{sec:related}).

\paragraph{The multicast model.}
In a large-scale peer-to-peer network, 
it is usually much cheaper  
for a node to multicast the same message
to everyone, than to unicast $n$ different messages (of the same length)
 to $n$ different nodes
--- even though the two have identical communication complexity in the standard pair-wise model. 
Indeed, all known consensus protocols deployed in a decentralized 
environment (e.g. Bitcoin, Ethereum) work in the multicast fashion.
Since our protocols are motivated by these large-scale peer-to-peer networks, 
we design our protocols to be multicast-based. 

A multicast-based protocol is said to have {\it multicast complexity}~$\mcc$ 
if the total number of bits multicast by all honest players is upper bounded by $\mcc$.
Clearly, a protocol with multicast complexity $\mcc$ has communication complexity $n\mcc$.
Thus, to achieve subquadratic communication complexity, 
we need to design a protocol in which only a sublinear (in $n$) number of players multicast. 

\begin{theorem}
Assuming standard cryptographic assumptions and
 a public-key infrastructure (PKI), 
\begin{enumerate}
\item For any constant $0<\epsilon<1/2$, 
there exists a synchronous BA protocol 
with expected $O(\compsec \cdot \poly\log(\secparam))$ multicast complexity,
expected $O(1)$ round complexity,
and $\negl(\secparam)$ error probability
that tolerates $f < (1-\epsilon) n/2$ adaptively corrupted players 
out of $n$ players in total.
\item For any constant $0<\epsilon<1/3$, 
there exists a partially synchronous BA protocol 
with expected $O(\compsec \cdot \poly\log(\secparam))$ multicast complexity,
expected $O(\Delta \cdot \poly\log(\secparam))$ time,
and $\negl(\secparam)$ error probability
that tolerates $f < (1-\epsilon) n/3$ adaptively corrupted players 
out of $n$ players in total.
\end{enumerate}
In both statements, $\kappa$ is a security parameter 
and $\compsec$ is a computational security parameter;
$\chi=\poly(\kappa)$ under standard cryptographic assumptions and 
$\chi=\poly\log(\kappa)$ if we assume sub-exponential security of the cryptographic primitives employed. 
\label{thm:introuppersync}
\end{theorem}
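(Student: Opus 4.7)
The plan is a two-layer design: an outer VRF-based committee-sampling mechanism, and an inner ``player-replaceable'' BA protocol executed by a fresh committee each round. First I would give each player a VRF key registered through the PKI. In each logical round, every player evaluates its VRF on a round-specific seed and self-selects into the committee whenever the output falls below a tuned threshold. Choosing the threshold so that the expected committee size is $\Theta(\compsec \cdot \poly\log(\secparam))$, a Chernoff bound gives, with probability $1 - \exp(-\Omega(\compsec))$, that the fraction of honest committee members tracks the global honest fraction to within the $\epsilon$-slack. Hence in part~1 the committee has honest majority and in part~2 it has an honest two-thirds majority.

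Next I would instantiate the inner protocol. For the synchronous case I would adapt an expected-constant-round BA construction (in the style of Katz--Koo or Micali--Vaikuntanathan) that combines graded consensus with a random common coin, rewriting each logical round so that its role is played by a fresh VRF-sampled committee. The common coin itself would be realized by a threshold signature over the round index, computed and multicast by a committee; such a threshold scheme follows from the PKI plus standard cryptographic assumptions. For partial synchrony I would adapt a PBFT-style two-phase voting protocol whose leader is VRF-selected and whose propose / prepare / commit steps are each executed by a fresh committee, with view-changes driven by a view-change committee that carries forward the highest locked certificate.

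The crucial security feature of this layering is that, until a committee member actually multicasts its message, the adversary does not know that this player is on the committee, by the pseudorandomness of the VRF on unqueried inputs. At the moment of multicast the identity is revealed, but by the ``no after-the-fact removal'' assumption the already-multicast message cannot be retracted; therefore the honest fraction of \emph{effective} committee messages is preserved. Error analysis then reduces to a union bound over the (expected $O(1)$, or $\poly\log(\secparam)$) rounds, each failing with probability $\exp(-\Omega(\compsec))$; setting $\compsec = \poly(\secparam)$, or $\compsec = \poly\log(\secparam)$ under sub-exponential hardness, makes the total error $\negl(\secparam)$.

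The main obstacle I anticipate is making the inner BA protocol genuinely player-replaceable while preserving the desired round complexity. Classical expected-constant-round BA constructions have nodes retain secret random tape and private state across rounds, which is impossible when each round is executed by a disjoint, independently sampled committee; one must encode all the required state into succinct, publicly verifiable certificates (e.g.\ threshold-signed summaries of the previous phase) that the next committee can check and build upon. For the partially synchronous setting, a secondary subtlety is the view-change: the successor committee must inherit the correct ``locked'' value from its predecessor without incurring super-polylogarithmic multicast cost, which requires a careful certificate-propagation argument bridging the adaptive-corruption guarantee of the VRF layer with the safety argument of PBFT-style locking.
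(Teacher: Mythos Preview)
Your outline has a genuine gap at the security step, and it is exactly the gap the paper isolates and then repairs. You sample a single committee per round via a VRF on a \emph{round-specific} seed, and argue that because an honest member's multicast cannot be retracted, ``the honest fraction of effective committee messages is preserved.'' That direction is fine, but it ignores what the adversary can \emph{add}. The moment an honest committee member multicasts its vote for bit $b$, the adversary learns that this node is eligible in round $r$, immediately corrupts it, and has it multicast a second vote for $1-b$. Since in your scheme eligibility depends only on $r$, that second vote is just as valid as the first. The committee has sublinear size, so the adversary can do this to every speaking honest member while staying well within its $f$-corruption budget; the result is a valid certificate for $b$ and a valid certificate for $1-b$ in the same round, breaking quorum intersection. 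This is precisely the ``strawman'' attack the paper spells out against the Chen--Micali-style per-round committee (which Chen--Micali neutralize only by assuming the memory-erasure / ephemeral-key model that the present theorem explicitly avoids).

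The paper's fix, which your plan is missing, is \emph{bit-specific eligibility}: a node is eligible to send $(\MType, r, b)$ iff its VRF on $(\MType, r, b)$ clears the threshold, so the committee for $b=0$ and the committee for $b=1$ in the same round are independent. Corrupting a node seen to be eligible for $b$ gives the adversary no edge in producing a vote for $1-b$. With that in place, the paper does not need a common coin at all; it instead compiles leader-based protocols (Abraham et al.\ for synchrony, a PBFT-style iteration for partial synchrony) by replacing every multicast with a ``conditionally multicast'' gated by the bit-specific VRF, and then bounds the number of successful corrupt mining attempts per $(\MType, r, b)$ via Chernoff. Your threshold-signature coin would face the same adaptive-corruption issue (the adversary corrupts the revealed share-holders and reconstructs or biases the coin), so swapping it in does not sidestep the problem. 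Finally, note that realizing the VRF with adaptive security under standard assumptions (no random oracle, no erasures) is itself nontrivial; the paper builds it from a PRF plus an adaptively secure commitment and NIZK, and the selective-opening analysis there is part of what makes the theorem go through.
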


Our construction requires a random verifiable function (VRF) that is secure against an adaptive adversary.
Here, adaptive security means security under {\it selective opening}
of corrupt nodes' secret keys, which is a different notion
of adaptivity from in some prior works~\cite{nirvrf,brentvrf}.
Most previously known VRF constructions~\cite{VRF,nirvrf,brentvrf} do not provide security under an adaptive adversary. 
Chen and Micali~\cite{algorand} use random oracles (RO) and unique signatures to construct an adaptively secure VRF.

In the main body of the paper, we present protocols assuming an ideal functionality of VRF;
in particular, we will measure communication complexity in terms of the number of messages and state results using a statistical security parameter $\lambda$.  
In the appendix, we show how to instantiate an adaptively secure VRF under standard cryptographic assumptions such as bilinear groups,
and restate our results using $\kappa$, $\chi$, and communication
complexity in bits. 


\paragraph{Main result 3: on the necessity of setup assumptions.}
In light of the above Theorem~\ref{thm:introuppersync},
we additionally investigate 
whether the remaining setup PKI assumption is necessary.
We show that if one insists on a multicast-based protocol, indeed
some form of setup assumption is necessary for achieving sublinear multicast complexity.
Specifically, we show 
that without any setup assumption, i.e., under the plain
authenticated channels model, 
a (possibly randomized) protocol that solves BA with $\mcc$ multicast 
complexity with probability $p > 5/6$ can tolerate no more than $\mcc$ adaptive corruptions.

\begin{theorem}
[Impossibility of sublinear multicast BA without setup assumptions]
In a plain authenticated channel model without setup assumptions,
no protocol can solve BA using $\cc$ multicast complexity
with probability $p > 5/6$ under $\cc$ adaptive corruptions.
\label{thm:intromcclb}
\end{theorem}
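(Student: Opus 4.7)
The plan is to prove the bound via a three-world indistinguishability argument in the spirit of the Pease--Shostak--Lamport and Dolev--Reischuk lower bounds, combined with a Markov good-event analysis that yields the specific constant $5/6$. Assume for contradiction that a protocol $\Pi$ attains BA with probability $p>5/6$ using at most $\cc$ expected multicasts and against an adaptive adversary of budget $\cc$. First I would fix three executions of $\Pi$ on the same set of $n$ nodes with coupled protocol randomness: $W_0$, in which every node has input $0$ and no one is corrupted; $W_1$, in which every node has input $1$ and no one is corrupted; and $W^*$, in which the nodes are partitioned into $L$ (input $0$) and $R$ (input $1$) and the adversary corrupts a set $K$ of at most $\cc$ nodes. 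By validity, all honest nodes output $0$ in $W_0$ and $1$ in $W_1$ with probability at least $p$.

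Next I would design the $W^*$ adversary to run two shadow executions $\widetilde{W}_0$ and $\widetilde{W}_1$ internally (using the coupled randomness) and to choose $K$ to contain every node whose multicasts in $W^*$ would deviate from the corresponding shadow transcript on its side of the partition --- namely, nodes in $R$ whose transmitted content departs from $\widetilde{W}_0$ and nodes in $L$ whose content departs from $\widetilde{W}_1$ --- while instructing each corrupted node to emit the shadow transcript verbatim. Because the plain authenticated channel model has no PKI, corrupted nodes can replay arbitrary content attributed to their own identifiers without any cryptographic detector; this is precisely the feature the no-setup assumption supplies. By Markov's inequality applied to the multicast complexities of $\widetilde{W}_0$, $\widetilde{W}_1$, and $W^*$, the good event $G$ that the overriding set $K$ fits within the $\cc$-corruption budget has probability at least $1/2$. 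On $G$, $L$'s joint view in $W^*$ is identically distributed to its view in $\widetilde{W}_0$, and $R$'s view matches $\widetilde{W}_1$, so $L$ outputs $0$ and $R$ outputs $1$ in $W^*$ whenever the coupled $W_0$ and $W_1$ executions both succeed.

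A union bound then gives
\[
\Pr[\text{$L$ outputs $0$ and $R$ outputs $1$ in } W^*] \;\geq\; \Pr[G] - \Pr[W_0 \text{ fails}] - \Pr[W_1 \text{ fails}] \;\geq\; \tfrac{1}{2} - 2(1-p),
\]
which lower-bounds the probability of a consistency violation in $W^*$. Requiring $\Pi$ to achieve consistency in $W^*$ with probability at least $p$ forces $\tfrac{1}{2} - 2(1-p) \le 1 - p$, i.e.\ $p \le 5/6$, contradicting $p > 5/6$. The main obstacle I expect is making the coupling and the choice of $K$ well-defined against an \emph{adaptive} adversary: because the overriding set is itself determined by the adversarial execution, a fixed-point or round-by-round inductive argument is needed both to show that the good event $G$ really has probability at least $1/2$ and to confirm that on $G$ the view coupling between $W^*$ and the two shadow executions is exact.
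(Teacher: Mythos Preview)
Your split-brain approach is natural, but the corruption budget does not close. To make $L$'s view in $W^*$ coincide with $\widetilde{W}_0$ \emph{and} $R$'s view coincide with $\widetilde{W}_1$, you must override every multicast that would reach the wrong side with the wrong content. Concretely, a so-far-honest node in $L$ that multicasts in $W^*$ sends (by your own induction) its $\widetilde{W}_0$-message to \emph{everyone}, including $R$; for $R$ to see $\widetilde{W}_1$ you must corrupt that node and have it send the $\widetilde{W}_1$-message to $R$ instead. Symmetrically, any $R$-node that speaks must be corrupted so that $L$ does not see its $\widetilde{W}_1$-message. In addition, any node that is silent in its ``natural'' shadow but speaks in the opposite-side shadow must be corrupted to inject the missing message. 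Carrying this out round by round, the overriding set $K$ turns out to be exactly $S_0\cup S_1$, the union of the speaker sets of the two shadows. Even on the event that both $|S_0|\le\cc$ and $|S_1|\le\cc$ hold, you only obtain $|K|\le 2\cc$; no Markov-type argument on the three executions brings this down to $\cc$. Thus the event $G=\{|K|\le\cc\}$ cannot be shown to have probability at least $1/2$, and the final inequality $\tfrac12-2(1-p)\le 1-p$ is never reached. (A minor side point: the role of ``no setup'' is not that corrupted nodes may emit arbitrary content under their own name---that is always allowed---but that the adversary can \emph{compute} the shadow transcripts because there are no honest secret keys to simulate.)

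The paper sidesteps the doubling by confusing a \emph{single} honest node $S$ rather than an entire partition, and it proves the bound for Byzantine Broadcast (which implies BA). It uses four worlds $W_{c,0},W_{c,1},W_{h,0},W_{h,1}$. In $W_{h,b}$ the adversary internally simulates just \emph{one} alternate execution (with sender input $1-b$), and whenever a simulated node speaks it adaptively corrupts the corresponding real node and makes it forward the simulated message to $S$ alone while behaving honestly toward everyone else. This costs at most one shadow's worth of speakers, so the $\cc$-corruption budget suffices whenever that single simulated run has at most $\cc$ multicasts. The $5/6$ constant then comes from $\Pr[A_r\cap A_s]\ge 2p-1$, $\Pr[S\text{ fails to output }1\mid A_r\cap A_s]\ge 1/2$, and $\Pr[\text{the other forever-honest nodes output }1]\ge p$, which together force a consistency violation with probability exceeding $1-p$. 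If you shrink one side of your partition to a single node and drop the demand that its complement also see a prescribed shadow, you essentially recover the paper's argument.
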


We remark that this lower bound also applies more generally to protocols in which few nodes (i.e., less than $\mcc$ nodes) speak (multicast-style protocols are a special case).
Also note that there exist protocols with subquadratic communication and no setup assumptions that rely on many nodes to speak~\cite{kingsaiabarrier}.


\paragraph{Organization.}
The rest of the paper is organized as follows.
Section~\ref{sec:related} reviews related work.
Section~\ref{sec:model} presents the model and definitions of BA.
Section~\ref{sec:lower-bound-expected} proves Theorem~\ref{thm:lbstrongadapt}.
Sections~\ref{sec:syn_roadmap},~\ref{sec:sync12},~and~\ref{sec:psync13} construct 
adaptively secure BA protocols to prove Theorem~\ref{thm:introuppersync}.
Section~\ref{sec:lower} proves Theorem~\ref{thm:intromcclb}.


\subsection{Related Work}
\label{sec:related}

Dolev and Reischuk~\cite{consensuscclb} proved that quadratic communication is necessary for any deterministic BA protocol.
Inspired by their work, we show a similar communication complexity lower bound 
for randomized protocols, but now additionally assuming that the adversary
is strongly adaptive. 

A number of works explored randomized BA protocols~\cite{benorasync,Rabin83} to achieve expected constant round complexity~\cite{feldman1988optimal,katzbft,abraham2018synchronous} even under a strongly adaptive adversary.
A line of works~\cite{adaptivebcast,adaptivebcastrevisit,ucprobterm}
focused on a simulation-based stronger notion of adaptive security
for Byzantine Broadcast.
These works have at least quadratic communication complexity. 

King and Saia first observed that BA can be solved with
subquadratic communication complexity if a small probability of
error is allowed~\cite{kingsaiabarrier}.
More recently Nakamoto-style protocols,
based on either proof-of-work~\cite{bitcoin} or proof-of-stake~\cite{algorand,aggelospos2}
also showed how to realize BA with subquadratic communication. 
Compared to our protocol in Section~\ref{sec:sync12}, 
these existing works make other strong assumptions,
and even with those strong assumptions, cannot simultaneously achieve near-optimal resilience and expected constant rounds.
Nakamoto consensus~\cite{bitcoin} assumes idealized proofs-of-work.
Proof-of-stake protocols assume random oracles~\cite{algorand,aggelospos2}.
King-Saia~\cite{kingsaiabarrier} and Chen-Micali~\cite{algorand} assume memory-erasure.
Nakamoto-style protocols~\cite{bitcoin,aggelospos2} and King-Saia~\cite{kingsaiabarrier} 
cannot achieve expected constant rounds.
Chen-Micali~\cite{algorand} have sub-optimal tolerance of $f<(\frac13-\epsilon)n$.



\section{Model and Definition}
\label{sec:model}

\ignore{
\paragraph{Protocol execution model.}
We assume a standard protocol execution model  
with $n$ parties (also called {\it nodes}) numbered $1, 2, \ldots, n$.
An external environment denoted $\algZ$  
provides inputs to honest nodes 
and receives outputs from the honest nodes.
The adversary is denoted $\algA$,
and it can freely exchange messages with the environment $\algZ$
at any time during the execution.
We assume that all parties as well as $\algA$ and $\algZ$
are non-uniform, probabilistic polynomial-time (\ppt) Interactive Turing Machines, 
and the execution is parametrized by a security parameter $\kappa$ that is
common knowledge to all parties as well as $\algA$ and $\algZ$.
Throughout the paper, 
we would like all but a negligible in $\kappa$ fraction of executions 
to satisfy the desired consistency and validity properties.
}

\paragraph{Communication model.} 
We assume two different communication models for two different
protocols.
In Sections~\ref{sec:syn_roadmap}~and~\ref{sec:sync12}, we assume
that the network is synchronous 
and the protocol proceeds in rounds.
Every message sent by an honest node 
is guaranteed to be received by 
an honest recipient at the beginning of the next round.

In Section~\ref{sec:psync13}, we assume that the network is
partially synchronous. There are multiple ways to define partial synchrony~\cite{dworklynch}.
In this paper, we consider the unknown $\Delta$ variant, i.e.,
there exists a fixed message delay bound of $\Delta$ rounds but $\Delta$ is not known to any honest party.

We measure communication complexity by the number of messages sent by honest nodes.
Our protocols in
Sections~\ref{sec:syn_roadmap},~\ref{sec:sync12},~and~\ref{sec:psync13}
use multicasts only, 
that is, whenever an honest node sends a message,
it sends that message to all nodes including itself.
We say a protocol has multicast complexity $\cc$ 
if the total number of multicasts by honest nodes is bounded by $\cc$.


\paragraph{Adversary model.}
We assume a trusted PKI; every honest node knows the public
key of every other honest node.
The adversary is polynomially bounded and denoted $\algA$.
$\algA$ can \emph{adaptively} corrupt nodes any time during the protocol execution 
after the trusted setup.
The total number of corrupt nodes at the end of the execution is at most $f$.
At any time in the protocol, nodes that remain honest so far
are referred to as {\it so-far-honest} nodes and nodes that remain
honest till the end of the protocol are referred to as {\it forever-honest} nodes.
All nodes that have been corrupt are under the control of $\algA$, i.e.,
the messages they receive are forwarded to $\algA$, and $\algA$ 
controls what messages they will send in each round once they become corrupt. We assume that when a so-far-honest node $i$ multicasts a message $\msg$,
it can immediately become corrupt in the
same round and be made to send one or more messages in the same
round.
We prove our lower bound in
Section~\ref{sec:lower-bound-expected} under
 a strongly adaptive adversary that can perform an
after-the-fact removal, i.e, it can retract messages that have
already been multicast before the node becomes corrupt.
For our upper bounds in subsequent sections, we assume an
adaptive adversary can cannot perform such a retraction.

\ignore{
In the \emph{agreement version}, 
every node receives an input bit and 
they seek to reach consensus on a bit; 
if all honest nodes receive the same input bit $b$, 
then all honest nodes must output $b$ too.
In the \emph{broadcast version}, 
also called Byzantine broadcast, a designated
sender aims to propagate a bit to all other nodes; 
all honest nodes must output the same bit;
if the designated sender is forever-honest, 
then all honest nodes output the sender's input bit.
We formally define the two versions of the problem below. 
}

\paragraph{Agreement vs. broadcast.}
(Binary) Byzantine Agreement is typically studied in two forms.
In the \emph{broadcast version}, also called \emph{Byzantine broadcast},  
there is a \emph{designated sender} (or simply \emph{sender}) known to all nodes. 
Prior to protocol start, the sender receives an input $b \in \{0, 1\}$.
A protocol solves Byzantine broadcast with probability $p$ 
if it achieves the following properties with probability at least $p$. 
\begin{itemize}[leftmargin=8pt,topsep=4pt]
\item[-] 
{\it Termination}.
Every forever-honest node $i$ outputs a bit $b'_i$.
\item[-] 
{\it Consistency}.
If two forever-honest nodes output $b'_i$ and $b'_j$ respectively, 
then $b'_i = b'_j$.
\item[-]
{\it Validity}.
If the sender is forever-honest 
and the sender's input is $b$, then all 
forever-honest nodes output $b$.
\end{itemize}

In the \emph{agreement version}, 
sometimes referred to as Byzantine ``consensus'' in the literature,
there is no designated sender. 
Instead, each node $i$ receives an input bit $b_i \in \{0, 1\}$.
A protocol solves Byzantine agreement (BA) with probability $p$ 
if it achieves the following properties with probability at least $p$. 
\begin{itemize}[leftmargin=8pt,topsep=4pt]
\item[-] 
{\it Termination} and {\it Consistency} same as Byzantine broadcast.
\item[-]
{\it Validity}.
If all forever-honest nodes receive the same input bit $b$, 
then all forever-honest nodes output $b$. 
\end{itemize}

With synchrony and PKI, the \emph{agreement} version (where everyone receives input) 
can tolerate up to minority corruption~\cite{fitzi2002generalized}
while the broadcast version can tolerate up to $n-1$ corruptions~\cite{byzgen,dolevstrong}.
Under minority-corruption, the two versions are equivalent
from a feasibility perspective, i.e., we can construct one from the other. 
Moreover, one direction of the reduction preserves communication complexity.
Specifically, 
given an adaptively secure BA protocol (agreement version),
one can construct an adaptively secure Byzantine Broadcast protocol 
by first having the designated sender multicasting its input to everyone,
and then having everyone invoke the BA protocol.
This way, if the BA protocol has subquadratic communication complexity (resp. sublinear multicast complexity), 
so does the resulting Byzantine Broadcast protocol. 
For this reason, we state all our upper bounds for BA and state all our lower
bounds for Byzantine Broadcast --- this makes both our upper- and lower-bounds stronger.


\newcommand{\ExpectedComm}{\ensuremath{(\epsilon f)^2}\xspace}
\newcommand{\ExpectedCommTwoEps}{\ensuremath{\frac \epsilon 2 f^2}\xspace}

\section{Communication Lower Bound Under a Strongly Adaptive Adversary}
\label{sec:lower-bound-expected}

In this section, we prove that any (possibly randomized) BA protocol
must in expectation incur at least $\Omega(f^2)$ communication
in the presence of a strongly adaptive adversary capable of performing
after-the-fact removal.
For the reasons mentioned in Section~\ref{sec:model}, we prove our lower bound
for Byzantine Broadcast (which immediately applies to BA).
Our proof strategy builds on the classic Dolev-Reischuk lower bound~\cite[Theorem 2]{consensuscclb},
which shows that in every deterministic Byzantine 
Broadcast protocol honest nodes need to send at least $\Omega(f^2)$ messages. 

\paragraph{Warmup: the Dolev-Reischuk lower bound.}
We first explain the Dolev-Reischuk proof at a high level.
Observe that for a deterministic protocol, an execution is completely determined by the input (of the designated sender) and the adversary's strategy.
Consider the following adversary $\algA$:
$\algA$ corrupts a set $V$ of $f/2$ nodes that does not 
include the designated sender.
Let $U$ denote the set of remaining nodes.
All nodes in $V$ behave like honest nodes, except that 
(i) they ignore the first $f/2$ messages sent to them, and 
(ii) they do not send messages to each other.
Suppose the honest designated sender has input 0.
For validity to hold, all honest nodes must output $0$. 

If at most $(f/2)^2$ messages are sent to $V$ in the above execution, 
then there exists a node $p \in V$ that receives at most $f/2$ messages.
Now, define another adversary $\algA'$ almost identically as $\algA$ except that:
(i) $\algA'$ does not corrupt $p$, 
(ii) $\algA'$ corrupts all nodes in $U$ that send $p$ messages (possibly including the designated sender), prevents them from sending any messages to $p$, but behaves honestly to other nodes.
Since $p$ receives at most $f/2$ messages under $\algA$, 
$\algA'$ corrupts at most $f$ nodes.

Observe that honest nodes in $U$ receive identical messages from all other nodes in the two executions. 
So these nodes still output 0 under $\algA'$.
However, $p$ does not receive any message but has to output some value. 
If this value is $1$, consistency is violated.
If $p$ outputs $0$ when receiving no messages, we can let the sender 
send $1$ under $\algA$ and derive a consistency violation under $\algA'$ 
following a symmetric argument. 

\paragraph{Our lower bound.}
We now extend the above proof to randomized protocols. 
In a randomized protocol, there are two sources of randomness
that need to be considered carefully. 
On one hand, honest nodes can use randomization to
their advantage. 
On the other hand, an adaptive adversary can also leverage randomness. 
Indeed our lower bound uses a randomized adversarial strategy.
In addition, our lower bound crucially relies on the adversary being \emph{strongly adaptive} --
the adversary can observe that a message is sent by an honest node 
$h$ to any other party in a given round $r$, decide to adaptively corrupt $h$,
and then remove messages sent by $h$ in round $r$.
We prove the following theorem --- here we say that a protocol solves
Byzantine Broadcast with probability $q$ iff 
for any non-uniform p.p.t.\ strongly adaptive
adversary, with probability $q$, every honest  
node outputs a bit at the end of the protocol, 
and consistency and validity are satisfied. 

\begin{theorem}
If a protocol solves Byzantine 
Broadcast with $\frac34 + \epsilon$ probability 
against a strongly adaptive adversary, 
then in expectation, honest nodes collectively need to send at
least $\ExpectedComm$ messages. 
\end{theorem}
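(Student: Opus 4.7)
The plan is to extend the classical Dolev--Reischuk strategy to randomized protocols by combining a coupling argument with the adversary's ability to perform after-the-fact removal. The core idea is that if an honest node $p$ is forced into a ``deaf'' mode in which it receives no messages throughout the protocol, its output becomes a protocol-determined function $h(\mathrm{rand}_p)$ of its private randomness alone, and hence has a distribution that does not depend on the sender's input bit. Validity against a strongly adaptive adversary that can enforce such deafness then pins down $\Pr[h=0]$ to two incompatible values, one close to $1$ when the sender's input is $0$ and one close to $0$ when the sender's input is $1$.

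Concretely, for each bit $b\in\{0,1\}$ I would define a \emph{silent-set adversary} $\algA_b$ that samples a uniformly random set $V$ of $\epsilon f$ nodes excluding the sender, corrupts all of $V$ at the start, and runs each corrupt node as an honest player that ignores every incoming message and sends nothing to other nodes in $V$; the sender's input is $b$. By hypothesis, with probability at least $3/4+\epsilon$ every forever-honest node in $U=[n]\setminus V$ outputs $b$ under $\algA_b$. For each $p\in V$, define an \emph{after-the-fact-removal adversary} $\algB^{p}_{b}$ that corrupts $V'=V\setminus\{p\}$ at the start (same silent behaviour), leaves $p$ honest, and on the fly corrupts every so-far-honest node whose next message is destined for $p$, using after-the-fact removal to erase that message before $p$ receives it; blocking is abandoned once the total corruption count would exceed $f$. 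Coupling the executions of $\algA_b$ and $\algB^{p}_{b}$ with identical random tapes and inducting on rounds shows that, on the event $\neg F_b$ (where $F_b$ denotes abandonment), $p$'s view is empty and every node that is forever-honest in $\algB^{p}_{b}$ sees the same messages as under $\algA_b$; consequently $\bigl|\Pr[p\text{ outputs }0\mid\algB^{p}_{b}]-q_p\bigr|\le\Pr[F_b]$, where $q_p=\Pr[h(\mathrm{rand}_p)=0]$ is $b$-independent.

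Since $p$ is forever-honest under $\algB^{p}_{b}$, validity yields $\Pr[p\text{ outputs }0\mid\algB^{p}_{0}]\ge 3/4+\epsilon$ and $\Pr[p\text{ outputs }0\mid\algB^{p}_{1}]\le 1/4-\epsilon$, so subtracting the two bounds and using the inequality above gives $\Pr[F_0]+\Pr[F_1]\ge 1/2+2\epsilon$. On the other hand, letting $X_p$ be the number of honest-sent messages destined for $p$ in the coupled execution, Markov's inequality gives $\Pr[F_b]\le \mathbb{E}[X_p]/((1-\epsilon)f)$, and averaging over a uniformly random $p\in V$ gives $\mathbb{E}_p\mathbb{E}[X_p]\le\mathbb{E}[T]/(\epsilon f)$ where $T$ is the total number of honest messages; picking the best $p\in V$ and combining the inequalities then yields $\mathbb{E}[T]=\Omega((\epsilon f)^2)$. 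The main obstacle will be making the coupling rigorous: after-the-fact removal must be specified as \emph{selective} erasure on the wire to $p$, so that on-the-fly corrupted senders still deliver their honestly-computed messages to every recipient other than $p$; otherwise the views of other forever-honest nodes in $\algB^{p}_{b}$ would diverge from those in $\algA_b$ and the round-by-round induction would break down. A secondary technical point is reconciling $p$'s own outgoing messages by specifying that the silent-set members in $\algA_b$ remain silent only towards other $V$-members, so that $p$ as a silent corrupt under $\algA_b$ and $p$ as an honest node under $\algB^{p}_{b}$ produce identical messages to every node in $U$ under the coupling.
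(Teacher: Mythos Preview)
Your overall architecture---a silent set $V$, a coupled ``deaf'' adversary that uses after-the-fact removal to isolate a single $p\in V$, and a Markov/averaging step---is the same as the paper's. But your step~5 does not go through as written. You write ``Since $p$ is forever-honest under $\algB^{p}_{b}$, validity yields $\Pr[p\text{ outputs }0\mid\algB^{p}_{0}]\ge 3/4+\epsilon$.'' Validity in Byzantine \emph{broadcast} constrains outputs only when the \emph{designated sender} is forever-honest; the fact that $p$ is forever-honest is irrelevant. Under $\algB^{p}_{b}$ the sender is precisely one of the nodes that may be on-the-fly corrupted (it will almost certainly try to send to $p$), so you cannot invoke validity in that execution at all, and your inequality $\Pr[F_0]+\Pr[F_1]\ge 1/2+2\epsilon$ is unsupported.

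The paper closes this gap by routing through \emph{consistency} rather than validity. It uses your coupling in the other direction: the nodes in $U$ that stay honest in both worlds have identical views, and in the $\algA$-world the sender \emph{is} honest, so validity pins their output to the sender's bit there; by indistinguishability the same output holds under $\algA'$. Then the deaf node $p$ (whose output is governed by $q_p$, independent of the sender's bit) disagrees with those $U$-nodes with constant probability, contradicting consistency under $\algA'$. You already have the coupling set up, so the fix is to use it to transfer the $U$-nodes' outputs and then argue a consistency violation, rather than trying to apply validity to $p$ directly. Note also that the paper fixes $V$ deterministically by first choosing nodes whose deaf output is biased toward one bit (giving a one-sided $q_p\le 1/2$), which is why a single sender-bit suffices there; your two-bit comparison is a legitimate alternative once the validity step is repaired.
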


\begin{proof}
For the sake of contradiction,
suppose that a protocol solves Byzantine Broadcast against a strongly adaptive adversary with $\frac 34 + \epsilon$ probability using less than $\ExpectedComm$ expected messages.
This means, regardless of what the adversary does, the protocol errs 
(i.e., violate either consistency, validity or termination) 
with no more than $\frac 14 - \epsilon$ probability.
We will construct an adversary that makes the protocol err with a probability larger than the above.  

Without loss of generality, 
assume that there exist $\lceil n/2 \rceil$ nodes that output 0
with at most 1/2 probability if they receive no
messages. 
(Otherwise, then there must exist $\lceil n/2 \rceil$ nodes that output 1 with at most 1/2 probability if they receive no messages, and the entire proof follows from a symmetric argument.)
Let $V$ be a set of $f/2$ such nodes not containing the designated sender.
Note that these nodes may output 1 or they may simply not terminate if they receive no messages.
(We can always find such a $V$ because $f/2 < \lceil n/2 \rceil$).
Let $U$ denote the remaining nodes.
Let the designated sender send 0.

Next, consider the following adversary $\algA$ that corrupts $V$ and makes nodes in $V$ behave honestly except that:
\begin{enumerate}[leftmargin=*]
\item Nodes in $V$ do not send messages to each other.
\item Each node in $V$ \emph{ignores} (i.e., pretends that it does not receive) the first $f/2$ messages sent to it by nodes in $U$.
\end{enumerate}

For a protocol to have an expected communication complexity of $\ExpectedComm$, honest nodes 
collectively need to send fewer than that many messages in expectation \emph{regardless of} the adversary's strategy.
Let $z$ be a random variable denoting the number of messages sent by honest
nodes to $V$. 
We have $E[z] < \ExpectedComm$.
Let $X_1$ be the event that $z \leq \ExpectedCommTwoEps$.
By Markov's inequality, $\Pr[z > \frac 1{2\epsilon} E[z]] < 2\epsilon$. 
Thus, $\Pr[z \leq \ExpectedCommTwoEps] \geq \Pr[z \leq \frac
1{2\epsilon} E[z]] > 1 - 2\epsilon$.

Let $X_2$ be the event that among the first $\ExpectedCommTwoEps$ messages, 
a node $p$ picked uniformly at random from $V$ by the adversary receives at most $f/2$ messages.
Observe that among the first $\ExpectedCommTwoEps = 2\epsilon|V|(f/2)$ messages,
there exist at most $2\epsilon|V|$ nodes that receive more than $f/2$ of those.
Since $p$ has been picked uniformly at random from $V$, $\Pr[X_2] \geq 1-2\epsilon$.
Thus, we have that 
\begin{align*}
\Pr[X_1 \cap X_2] &= \Pr[X_1] + \Pr[X_2] - \Pr[X_1 \cup X_2]  \\
			  &> (1-2\epsilon) + (1-2\epsilon) - 1 
              = 1-4\epsilon.
\end{align*}

Now, define another adversary $\algA'$ almost identically as $\algA$ except that: 
\begin{enumerate}[leftmargin=*]
\item $\algA'$ 
picks a node $p \in V$ uniformly at random and corrupts
everyone else in $V$ except $p$. 
\item $\algA'$ blocks the first $f/2$ attempts that nodes in $U$ send $p$ messages.
In other words, whenever some node $s \in U$ attempts to send a message to $p$ in a round, 
if this is within the first $f/2$ attempts that nodes in $U$ send $p$ messages,
$\algA'$ immediately corrupts $s$ (unless $s$ is already corrupted) and removes the message $s$ sends $p$ in that round. 
Corrupted nodes in $U$ behave honestly otherwise.
(In particular, after the first $f/2$ messages from $U$ to $p$ have been blocked,
corrupted nodes behave honestly to $p$ as well.)
\end{enumerate}

Observe that $X_1 \cap X_2$ denotes the event that under adversary $\algA$, 
the total number of messages sent by honest nodes to $V$ is less than $\ExpectedCommTwoEps$ 
and among those, the randomly picked node $p$ has received at most $f/2$ messages.
In this case, $p$ receives no message at all under adversary $\algA'$.
By the definition of $V$, $p$ outputs 0 with at most 1/2 probability if it receives no messages.
Let $Y_1$ be the event that $p$ does \emph{not} output 0 under $\algA'$.
Recall that $Y_1$ includes the event that $p$ outputs 1 as well as the event that $p$ does not terminate. 
We have $\Pr[Y_1] \geq \Pr[Y_1 | X_1 \cap X_2] \cdot \Pr[X_1 \cap X_2] > \frac 12 (1-4\epsilon)$.

Meanwhile, we argue that honest nodes in $U$ cannot distinguish $\algA$ and $\algA'$.
This is because the only difference between the two scenarios is that,
under $\algA$, the first $f/2$ messages from $U$ to $p$ are intentionally ignored by a corrupt node $p$,
and under $\algA'$, the first $f/2$ messages from $U$ to an honest $p$ are blocked by $\algA'$ using after-the-fact removal.
Thus, honest nodes in $U$ receive identical messages under $\algA$ and $\algA'$ and cannot distinguish the two adversaries.
Under $\algA$, they need to output 0 to preserve validity.
Recall that the protocol solves Byzantine broadcast with at least $\frac 34 + \epsilon$ probability.
Thus, with at least the above probability, all honest nodes in $U$ output 0 under $\algA$. 
Let $Y_2$ be the event that all honest nodes in $U$ output 0 under $\algA'$.
Since they cannot distinguish $\algA$ and $\algA'$, $\Pr[Y_2] \geq \frac 34 + \epsilon$. 

If $Y_1$ and $Y_2$ both occur, then the protocol errs under $\algA'$: either consistency or termination is violated. We have
\begin{align*}
\Pr[Y_1 \cap Y_2] &= \Pr[Y_1] + \Pr[Y_2] - \Pr[Y_1 \cup Y_2]  \\
			  &> \frac 12 (1-4\epsilon) + (\frac 34 + \epsilon) - 1 
              = \frac 14 - \epsilon.
\end{align*}
This contradicts the hypothesis that the protocol solves Byzantine broadcast with $\frac 34 + \epsilon$ probability.
\end{proof}


\section{Subquadratic BA under Synchrony: $f<(1/3-\epsilon)n$}
\label{sec:syn_roadmap}

This section presents the main ingredients 
for achieving subquadratic BA.
In this section, we opt for conceptual simplicity over other desired properties.
In particular, the protocol in this section 
tolerates only $\frac{1}{3}-\epsilon$ fraction of adaptive corruptions, 
and completes in $O(\numrounds)$ rounds. 
In the next section, we will show how to improve the 
resilience to $\frac{1}{2}-\epsilon$ and round complexity to expected $O(1)$.

\ignore{
We note that this simple protocol (that suffers in resilience) 
advances the state-of-the-art: 
to the best of our knowledge,
no protocol achieve sublinear multicast complexity 
in the presence of any constant fraction
of adaptive corruptions.
}

\subsection{Warmup: A Simple Quadratic 
BA Tolerating 1/3 Corruptions}
We first describe an extremely simple quadratic 
BA protocol, inspired by the Phase-King paradigm~\cite{dcbook}, that tolerates
less than 1/3 corruptions. 
The protocol proceeds in $\numrounds$ iterations $r = 1, 2, \ldots \numrounds$,
and every iteration consists of two rounds.
For the time being, assume a random leader election oracle that
elects and announces a random leader at the beginning of every iteration.
At initialization, every node $i$ sets $b_i$ 
to its input bit, and sets its ``sticky flag'' $F = 1$ (think of the
sticky flag as indicating whether to ``stick'' to the bit in the
previous iteration).
Each iteration $r$ now proceeds as follows where
all messages are signed, and only messages
with valid signatures are processed:
\begin{enumerate}[leftmargin=5mm]
\item 
The leader of iteration $r$ flips a random coin $b$ and \underline{multicasts}
$(\Propose, r, b)$.
Every node $i$ sets $b^*_i := b_i$ if $F = 1$ or
if it has not heard a valid proposal from the current iteration's leader.
Else, it sets $b^*_i := b$
where $b$ is the proposal 
heard from the current iteration's leader (if proposals for both $b = 0$
and $b=1$ 
have been observed, choose an arbitrary bit).

\item
Every node $i$ \underline{multicasts} $(\Vote, r, b^*_i)$.
If at least $\frac{2n}{3}$ votes from distinct nodes have been received and vouch
for the same $b^*$, set $b_i := b^*$ and $F := 1$;
else, set $F := 0$.
\end{enumerate}
At the end of the last iteration, each node outputs the bit that it last voted for. 


In short, in every iteration, every node
either switches to the leader's proposal (if any has been observed)
or it sticks to its previous ``belief'' $b_i$.
This simple protocol works because of the following observations. 
Henceforth, we refer to a collection of $\frac{2n}{3}$ votes from distinct nodes 
for the same iteration and the same $b$ as a \emph{certificate} for $b$.
\begin{itemize}[leftmargin=*]
\item[-] 
{\it Consistency within an iteration.}
Suppose that in iteration $r$, honest node $i$ 
observes a certificate for $b$ from a set of nodes denoted $S$,
and honest node $j$ observes a certificate for $b'$ from a set $S'$.
By a standard quorum intersection argument,
$S \cap S'$ must contain at least one forever-honest node.
Since honest nodes vote uniquely, it must be that $b = b'$.

\item[-] 
{\it A good iteration exists.}
Next, suppose that in some iteration $r$ the leader 
is honest. We say that this leader chooses a lucky bit
$b^*$ iff 
in iteration $r-1$, no honest node has seen a certificate for $1-b^*$.
This means, in iteration $r$, every honest node either 
sticks with $b^*$ or switches to the leader's proposal of $b^*$.
Clearly, an honest leader chooses a lucky $b^*$ with probability
at least $1/2$. Except with $\exp(-\Omega(\numrounds))$ probability,
an honest-leader iteration with a lucky choice exists.

\item[-] 
{\it Persistence of honest choice after a good iteration.}
Now, as soon as we reach an iteration (denoted $r$) with an honest leader
and its choice of bit $b^*$ is lucky, 
then all honest nodes will vote for $b^*$ in iteration $r$. 
Thus all honest nodes will hear certificates for $b^*$ in 
iteration $r$; therefore, they will all stick to $b^*$ in 
iteration $r+1$. By induction, 
in all future iterations they will stick to $b^*$.

\item[-] 
{\it Validity.}
If all honest nodes receive the same bit $b^*$ as input
then due to the same argument as above the bit $b^*$ will always stick
around in all iterations.
\end{itemize}

\subsection{Subquadratic Communication through Vote-Specific Eligibility}
\label{sec:subquad}

The above simple protocol requires in expectation linear number of multicast messages (in each round
every node multicasts a message). 
We now consider how to improve the multicast complexity of the warmup protocol. 
We will also remove the idealized leader election oracle in the process. 

\paragraph{Background on VRFs.}
We rely on a verifiable random function (VRF)~\cite{VRF}.
A trusted setup phase is used to generate a public-key infrastructure (PKI): 
each node $i \in [n]$ obtains a VRF secret key $\sk_i$, 
and its corresponding public key $\pk_i$.
A VRF evaluation on the message $\mu$ 
denoted $(\rho, \pi) \leftarrow {\sf VRF}_{\sk_i}(\mu)$ 
generates a deterministic pseudorandom value $\rho$ 
and a proof $\pi$ such that $\rho$ is computationally indistinguishable
from random without the secret key $\sk_i$, and with $\pk_i$ everyone
can verify from the proof $\pi$ that $\rho$ is evaluated correctly.  
We use ${\sf VRF}^1$ to denote the first output (i.e., $\rho$ above) of the VRF.

\paragraph{Strawman: the Chen-Micali approach.}
We first describe the paradigm of Chen and Micali~\cite{algorand} 
but we explain it in the context of our warmup protocol.
Imagine that now not everyone is required to vote in a round $r$. 
Instead, we use the function ${\sf VRF}^1_{\sk_i}(\Vote, r) < D$
to determine whether $i$ is eligible to vote in round $r$ where 
$D$ is a difficulty parameter appropriately chosen
such that, in expectation, $\comsize$ many nodes 
would be chosen to vote in each round. 
When node $i$ sends a \Vote message, 
it attaches the VRF's evaluation outcome 
as well as the proof such that 
every node can verify its eligibility using its public key $\pk_i$.
Correspondingly, when we tally votes, the original threshold $\frac{2n}3$ should be changed to $\frac{2\comsize}3$, 
i.e, two-thirds of the expected committee size.

Evaluating the VRF requires knowing the node's secret key. 
Thus, only the node itself knows at what rounds it 
is eligible to vote. 
This may seem to solve the problem because the adversary 
cannot predict in advance who will be sending messages in every round.
The problem with this is that 
once an adaptive adversary $\algA$
notices that some player $i$ was eligible to vote for $b$ in round $r$
(because $i$ just sent a valid vote for $b$), $\algA$ can corrupt $i$
immediately and make $i$ vote for $1-b$ in the same round!

To tackle this precise issue, 
Chen and Micali~\cite{algorand} relies on the memory-erasure model (referred to as ephemeral keys in their paper) 
and a {\it forward-secure} signing scheme.
Informally, in a forward secure signing scheme,
in the beginning, a node has a key that can sign any messages from any round;
after signing a message for round $t$, the node updates its key
to one that can henceforth sign only messages for round $t+1$ or higher, 
and the round-$t$ secret key should be immediately erased at this point. 
This way, even if the attacker instantly corrupts a node, it cannot cast
another vote in the same round.

\paragraph{Our key insight: bit-specific eligibility.}
Our key insight is to make the eligibility bit-specific.
To elaborate, the committee eligible
to vote for $b$ in round $r$ is chosen 
{\it independently} from the committee eligible to vote on $1-b$ in the same round. 
Concretely, 
node $i$ is eligible to send a \Vote message for 
the bit $b \in \{0, 1\}$ in round $r$ iff
${\sf VRF}^1_{\sk_i}(\Vote, r, {\color{blue} b}) < D$,
where $D$ is the aforementioned difficulty parameter.

What does this achieve? Suppose that the attacker sees some node $i$ votes for the bit $b$ in round $r$.
Although the attacker can now immediately corrupt $i$, 
the fact that $i$ was allowed to vote for $b$ in round $r$ 
does not make $i$ any more likely to be eligible to vote for $1-b$ in the same round.
Thus, corrupting $i$ is
no more useful to the adversary than corrupting any other node.

Finally, since we already make use of the VRF, as a by-product
we can remove the idealized leader election oracle
in the warmup protocol: 
a node $i$ is eligible for making a proposal in iteration $r$
iff ${\sf VRF}^1_{\sk_i}(\Propose, r, b) < D_0$ where $D_0$ is
a separate difficulty parameter explained below. 
Naturally, the node attaches the VRF evaluation outcome and proof 
with its proposal so that others can verify its eligibility.

\paragraph{Difficulty parameters.}
The two difficulty parameters $D$ and $D_0$ need to be specified differently.
Recall that $D$ is used to elect a committee in each round for sending
\Vote messages; 
and $D_0$ is used for leader election.
\begin{enumerate}
\item 
$D$ should be set such that each committee is $\comsize$-sized
in expectation; whereas 
\item $D_0$ should be set such that
every node has a $\frac 1{2n}$ probability to be eligible to propose.  
\end{enumerate}
Since we are interested in making communication scale better with $n$, 
we assume $n > \comsize$;
otherwise, one should simply use the quadratic protocol.

\paragraph{Putting it together.}
More formally, we use the phrase ``node $i$ {\it conditionally multicasts}
a message $(\MType, r, b)$'' 
to mean that node $i$ checks if it is eligible to vote for $b$ in
iteration $r$ and if so, it multicasts
$(\MType, r, b, i, \pi)$,
where $\MType \in \{\Propose, \Vote\}$ stands for the type of the message
and $\pi$ is a proof proving that $i$ indeed is eligible 
(note that $\pi$ includes
both the pseudorandom evaluation result and the proof output by the VRF). 
Now, our new committee-sampling based subquadratic protocol
is almost identical to the warmup protocol except for the following changes:
\begin{itemize}[leftmargin=5mm]
\item 
every occurrence of \underline{multicast}
is now replaced with  ``\underline{conditionally multicast}''; 
\item 
the threshold of certificates (i.e., number of votes for a bit to stick)  
is now $\frac{2\comsize}{3}$;
and 
\item upon receiving every message, 
a node checks the proof 
to verify the sender's eligibility to send that message.
\end{itemize}

\subsection{Proof Sketch}
To help our analysis, we shall abstract away
the cryptography needed for
eligibility election, and instead think of eligibility 
election as making queries to a trusted party called \Fmine.
We call an attempt for node $i$ to check its eligibility to send
either a \Propose or \Vote message   
a {\it mining} attempt for a \Propose or \Vote message (inspired
by Bitcoin's terminology where miners ``mine'' blocks).
Specifically, if a node~$i$ wants to check its eligibility 
for sending $(\MType, r, b)$
where $\MType \in \{\Propose, \Vote\}$,
it calls $\Fmine.{\tt mine}(\MType, r, b)$, and \Fmine
shall flip a random coin with appropriate probability to determine
whether this ``mining'' attempt is successful. 
If successful, $\Fmine.{\tt verify}((\MType, r, b), i)$
can vouch to any node of the successful attempt
-- this is used in place of verifying the VRF proof. 
If a so-far-honest node makes a mining attempt for some $(\MType, r, b)$,
it is called an {\it honest mining attempt} (even if the node immediately
becomes corrupt afterwards in the same round).
Else, if an already corrupt node 
makes a mining attempt, 
it is called a {\it corrupt mining attempt}.

We now explain why our new protocol works by following
similar arguments as the underlying BA --- but now we must
additionally analyze the stochastic process induced by eligibility election.

\paragraph{Consistency within an iteration.}
We first argue why ``consistency within an iteration'' still holds
with the new protocol.
There are at most $(\frac{1}{3} - \epsilon) n$ corrupt nodes,
each of which might try to mine for two votes (one for each bit) 
in every iteration~$r$.
On the other hand, each so-far-honest node
will try to mine for only one vote in each iteration.
Therefore, in iteration~$r$, 
the total number of mining attempts (honest and corrupt) 
for \Vote messages is at most $2(\frac{1}{3} - \epsilon)n + 
(\frac{2}{3} + \epsilon)n = (\frac{4}{3} - \epsilon) n$,
each of which is \textbf{independently} successful with probability $\frac{\comsize}{n}$.
Hence, if there are $\frac{2 \comsize}{3}$ votes
for each of the bits $0$ and $1$,
this means there are at least in total 
$\frac{4 \comsize}{3}$ successful mining attempts,
which happens with $\exp(-\Omega(\comsize))$ probability,
by the Chernoff bound.
Therefore, except with $\exp(-\Omega(\comsize))$ probability,
if any node sees $\frac{2 \comsize}{3}$ votes for some bit $b$, then
no other node sees $\frac{2 \comsize}{3}$ votes for a different bit $b'$.

\paragraph{A good iteration exists.}
We now argue why ``a good iteration exists'' in our new protocol.
Here, for an iteration  $r$ to be good, the following must hold: 1) 
a \emph{single} so-far-honest node 
successfully mines a $\Propose$ message, 
and no already corrupt node successfully mines a $\Propose$ message;
and 
2) if some honest nodes want to stick to 
a bit $b^*$ in iteration $r$, the leader's random coin must agree with $b^*$.
(Note that if multiple so-far-honest nodes successfully mine $\Propose$ messages, 
this iteration is not a good iteration). 
Every so-far-honest node makes only one \Propose mining
attempt per iteration. 
Every already corrupt node can make two \Propose mining attempts in an iteration,
one for each bit.
Since our \Propose mining difficulty parameter $D_0$ is set 
such that each attempt succeeds with $\frac 1{2n}$ probability,  
in every iteration, with $\Theta(1)$ probability, 
a single honest \Propose mining attempt 
is successful and no corrupt \Propose mining attempt is successful.
Since our protocol consists of $\comsize$ iterations, 
a good iteration exists except with $\exp(-\Omega(\comsize))$ probability,

\paragraph{Remainder of the proof}.
Finally, ``persistence of honest choice after a good iteration'' and ``validity'' hold 
in a relatively straightforward fashion by applying the standard Chernoff bound.

\paragraph{Remark.}
We stress that for the above argument to hold, 
it is important that the eligibility be 
tied to the bit being proposed/voted.
Had it not been the case, the adversary could observe whenever
an honest node sends $(\MType, r, b)$, 
and immediately
corrupt the node in the same round and make it send $(\MType, r, 1-b)$ too.
If $\MType$ is $\Vote$,
whenever there are $\frac{2\comsize}{3}$ votes for $b$
in iteration $r$, by corrupting all these 
nodes that are eligible to vote, the adversary can construct $\frac{2\comsize}{3}$ votes 
for $1-b$, and thus ``consistency within an iteration'' does not hold. 
If $\MType$ is $\Propose$,
whenever there is a so-far-honest leader in iteration $r$,
by corrupting this leader, the adversary gets a corrupt leader,
and thus no good iteration would exist.


\section{Subquadratic BA under Synchrony: $f<(1/2-\epsilon)n$}
\label{sec:sync12}

In this section, we present our synchronous BA protocol
that achieves expected subquadratic communication complexity 
(expected sublinear multicast complexity)
and expected constant round complexity,
and tolerates $f < (\frac 12 -\epsilon) n$ adaptive corruptions.
Our starting point is Abraham et al.~\cite{abraham2018synchronous},
a synchronous quadratic BA protocol tolerating $f < n/2$ corruptions.
We explain Abraham et al. at a high level and 
then apply the techniques introduced in the previous section 
to achieve subquadratic communication complexity. 

\subsection{Warmup: Quadratic BA Tolerating 1/2 Corruptions}
\label{sec:sync12:warmup}
Our description below assumes $n=2f+1$ nodes in total.
The protocol runs in iterations $r=1,2,\ldots$
Each iteration has four synchronous rounds
called $\Status$, $\Propose$, $\Vote$, and $\Commit$, respectively.
Messages sent at the beginning of a round will be received before next round.
All messages are signed.
Henceforth, a collection of $f+1$ (signed) iteration-$r$ $\Vote$ messages
for the same bit $b \in \{0, 1\}$ from
distinct nodes is said to be an {\it iteration-$r$ certificate}
for $b$. A certificate from a higher iteration is said to be a
higher certificate.
For the time being, assume a random leader election oracle that
elects a random leader $L_r$ at the beginning of every iteration $r$.

Below is the protocol for an iteration $r \geq 2$.
The protocol for the very first iteration $r=1$ skips the $\Status$ and $\Propose$ rounds. 
\begin{enumerate}[leftmargin=5mm,itemsep=1pt]
\item $\Status$. 
Every node multicasts a $\Status$ message 
of the form $(\Status, r, b, \Cert)$
containing the highest certified bit $b$ it has seen so far
as well as the corresponding certificate $\Cert$.

\item $\Propose$. The leader $L_r$ chooses a bit $b$ with a 
highest certificate denoted $\Cert$ breaking ties arbitrarily.
The leader multicasts $(\Propose, r, b, \Cert)$. 
To unify the presentation, we say that a bit $b$ without any 
certificate has an iteration-0 certificate and it is treated as the lowest ranked certificate. 

\item $\Vote$. 
For the very first iteration $r=1$, a node votes for its input bit $b$ by multicasting $(\Vote, r=1, b)$. 

For all iterations $r\geq2$,
if a validly signed $(\Propose, r, b, \Cert)$ message
has been received from $L_r$ with a certificate $\Cert$ for $b$,  
and if the node has not 
observed a \emph{strictly} higher
certificate for $1-b$, 
it multicasts an iteration-$r$ $\Vote$ message for $b$ 
of the form $(\Vote, r, b)$
with the leader's proposal attached.
Importantly, if the node has observed a certificate for the opposite bit $1-b$ from the same iteration as \Cert, it \emph{will} vote for $b$.

\item $\Commit$.
If a node has received $f+1$ iteration-$r$ signed votes 
for the same bit $b$ from distinct nodes
(which form an iteration-$r$ certificate $\Cert$ for $b$) 
and no iteration-$r$ vote for $1-b$, 
it multicasts an iteration-$r$ $\Commit$ message for $b$ of the form $(\Commit, r, b)$ 
with the certificate \Cert attached.
 
\item[$\star$] 
(This step is not part of the iteration and can be executed at any time.)
If a node has received $f+1$ $\Commit$ messages for the same $b$ from the same iteration from distinct nodes, 
it multicasts a termination message of the form $(\Terminate, b)$ with the $f+1$ $\Commit$ messages attached. 
The node then outputs $b$ and terminates. 
This last message will make all other honest nodes multicast the same $\Terminate$ message, output $b$ and terminate in the next round.  
\end{enumerate}

\paragraph{Consistency.}
The protocol achieves consistency due to the following key property.
\emph{If an honest node outputs a bit $b$ in iteration~$r$, 
then no certificate for $1-b$ can be formed in iteration $r$ and all subsequent iterations.}
We explain why this property holds below.

An honest node outputs $b$ in iteration $r$, 
only if it has observed $f+1$ iteration-$r$ $\Commit$ messages (from distinct nodes) for $b$.
One of these must have been sent by an honest node henceforth indexed by~$i^*$.  
For an iteration-$r$ certificate for $1-b$ to exist, 
an honest node must have multicast a vote for $1-b$.
But in that case, $i^*$ would have received this conflicting vote i
and thus would not have sent the commit message for $b$.
We have reached a contradiction.
Thus, we can rule out any iteration-$r$ certificate for $1-b$.

Furthermore, by the end of iteration $r$, 
all nodes will receive from node $i^*$ an iteration-$r$ certificate for $b$.
Since no iteration-$r$ certificate for $1-b$ exists, no honest node votes for $1-b$ in iteration $r+1$;
hence, no iteration-$(r+1)$ certificate for $1-b$ can come into existence;
hence no honest node votes for $1-b$ in iteration $r+2$, and so on.
The preference for a higher certificate ensures consistency for all subsequent iterations 
following a simple induction.

\paragraph{Validity.}
Recall that the first iteration skips $\Status$ and $\Propose$ and directly starts with $\Vote$.
If all honest nodes have the same input bit $b$, 
then they all vote for $b$ in the first iteration.
By the end of the first iteration, 
every honest node has an iteration-1 certificate for $b$
and no iteration-1 certificate for $1-b$ exists.
Validity then follows from consistency.

\paragraph{Expected constant round complexity.}
Once an iteration has an honest leader,
it will sign a unique proposal for the bit $b$ with the highest certificate reported by honest nodes. 
Then, all honest nodes send $\Vote$ and $\Commit$ messages for $b$, 
output and terminate in that iteration.
Since leaders are selected at random, in expectation, an honest leader emerges in two iterations.

\subsection{Subquadratic Communication through Vote-Specific Eligibility}
\label{sec:sync12:subquad}
The above simple protocol requires quadratic communication (in each round every node multicasts a message). 
We now improve the communication complexity to subquadratic 
and we will also remove the idealized leader election oracle in the process. 

We now use the vote-specific eligibility
to determine for each iteration, 
who is eligible for sending $\Status$, $\Propose$, $\Vote$ and $\Commit$ messages for 0 and 1, respectively.
To keep the presentation simple, we abstract away the cryptographic primitives for eligibility election and model it as an ideal functionality \Fmine. 
As before, we call an attempt for node $i$ to check eligibility to send a message a {\it mining} attempt. 
Concretely, node $i$ is eligible to send $(\MType, r, b)$ 
where \MType is \Status, \Vote, or \Commit, iff
\[ \Fmine.\mine(i, \MType, r, b) < D, \]
node $i$ is eligible to send $(\Terminate, b)$ iff 
\[ \Fmine.\mine(i, \Terminate, b) < D, \] 
and node $i$ is eligible to send $(\Propose, r, b)$ iff 
\[ \Fmine.\mine(i, \Propose, r, b) < D_0. \]
$D$ and $D_0$ are appropriate {\it difficulty} parameters 
such each mining attempt for \Status/\Vote/\Commit/\Terminate has a $\comsize/n$ probability to be successful and each mining attempt for leader proposal has a $1/n$ probability to be successful. 
As before, we assume $n > \comsize$;
otherwise, one should simply use the quadratic protocol.

We use the phrase ``node $i$ {\it conditionally multicasts} a message'' 
to mean that node $i$ checks with \Fmine if it is eligible to send that message 
and only multicasts the message if it is.
Now, the committee-sampling based subquadratic protocol
is almost identical to the warmup protocol except for the following changes:
\begin{itemize}[leftmargin=5mm]
\item 
every occurrence of \underline{multicast}
is now replaced with ``\underline{conditionally multicast}''; 

\item 
every occurrence of \underline{$f+1$} $\Vote$ or $\Commit$ messages 
is now replaced with \underline{$\comsize/2$} messages of that type;
and

\item upon receiving a message of the form $(i, \msg)$ 
(including messages attached with other messages), 
a node invokes $\Fmine.\ver(i, \msg)$ to verify node $i$'s eligibility to send that message.
Note that $\msg$ can be of the form $(\MType, r, b)$
where $\MType \in \{\Status, \Propose, \Vote, \Commit\}$
or of the form $(\Terminate, b)$.
\end{itemize}

\subsection{Proof}
\label{sec:sync12:proof}

We prove our new protocol works in this subsection. 
The proofs mostly follow the sketch in Section~\ref{sec:sync12:warmup}
--- except that we now need to analyze the stochastic process induced by eligibility.

To prove consistency and validity, we first establish the following lemma.  

\begin{lemma}
Except for $\exp(-\Omega(\comsize))$ probability,
for any \Status/\Vote/\Commit message 
for bit $b$ in iteration $r$,
less than $\comsize/2$ eventually-corrupt nodes are eligible to send it.
\label{lemma:chernoff}
\end{lemma}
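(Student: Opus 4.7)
The plan is to fix an arbitrary message type $(\MType, r, b)$ with $\MType \in \{\Status, \Vote, \Commit\}$ and first bound the probability that $\comsize/2$ or more eventually-corrupt nodes become eligible to send that specific message. Once we have this single-message bound at $\exp(-\Omega(\comsize))$, the full lemma follows by a union bound over all relevant message types. The hypothesis that we can take is that the corruption budget is $f < (\tfrac{1}{2} - \epsilon) n$, and that each call $\Fmine.\mine(i, \MType, r, b)$ independently returns a value below $D$ with probability $\comsize/n$ --- this independence across nodes is exactly what the ideal mining functionality gives us, and it is the crucial structural fact.

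For the per-message bound, I would fix the set $F$ of eventually-corrupt nodes with $|F| \le f$, and let $X_i$ be the indicator that $i \in F$ is eligible for $(\MType, r, b)$. Since each mining attempt is independent and succeeds with probability $\comsize/n$, the random variables $\{X_i\}_{i \in F}$ are independent Bernoullis with
\[
\mu \;:=\; \mathbb{E}\Bigl[\sum_{i\in F} X_i\Bigr] \;=\; |F| \cdot \frac{\comsize}{n} \;\le\; \Bigl(\tfrac{1}{2} - \epsilon\Bigr)\comsize.
\]
Applying a multiplicative Chernoff bound with deviation factor $\delta = \epsilon/(\tfrac{1}{2} - \epsilon)$ gives
\[
\Pr\Bigl[\sum_{i\in F} X_i \ge \tfrac{\comsize}{2}\Bigr] \;\le\; \exp\bigl(-\Omega(\epsilon^2 \comsize)\bigr),
\]
where the constant in $\Omega(\cdot)$ depends only on the fixed $\epsilon$. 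A subtle point I would flag: the adversary's adaptive choice of which nodes to corrupt could, a priori, correlate $F$ with the mining outcomes. But because eligibility for $(\MType, r, b)$ is revealed only once the node queries \Fmine (and the adversary cannot ``after-the-fact'' undo a query), the inequality $|F| \le f$ holds pathwise, so we can safely condition on the worst-case $F$ and apply Chernoff to the independent mining coins.

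The last step is a union bound over all $(\MType, r, b)$ triples. There are $3 \times 2 = 6$ choices per iteration; iterations can in principle be unbounded, but any p.p.t.\ adversary yields executions of at most some polynomial $T = \poly(\secparam)$ iterations, so the total number of triples of interest is at most $6T = \poly(\secparam)$. Multiplying the single-message probability by this polynomial factor still leaves $\exp(-\Omega(\comsize))$ on the right-hand side since $\comsize = \omega(\log \secparam)$ in our parameter regime. The main obstacle in writing the proof cleanly is the conditioning issue above --- once one is careful that the mining coins remain independent and uniform conditioned on the adaptive corruption history (which is exactly what the ideal functionality \Fmine guarantees), the rest reduces to a routine Chernoff-plus-union-bound argument.
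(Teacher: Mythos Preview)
Your approach is essentially the same as the paper's: the paper's proof is a three-sentence invocation of the Chernoff bound on the at most $(1/2-\epsilon)n$ eventually-corrupt nodes, each eligible independently with probability $\comsize/n$, and concludes directly. Your proposal fleshes this out with explicit parameters, adds the union bound over $(\MType,r,b)$ triples (which the paper leaves implicit), and flags the adaptive-corruption subtlety---all of which goes beyond the level of detail the paper itself provides.
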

\begin{proof}
Recall that the adversary can make at most $(1/2-\epsilon)n$ adaptive corruptions where $0 < \epsilon < 1/2$ is a constant.
By our choice of $D$, each so-far-honest node independently has a $\comsize/n$ probability to send the said message. 
The lemma follows from a simple Chernoff bound. 
\end{proof}

\begin{theorem}[Consistency]
Except for $\exp(-\Omega(\comsize))$ probability,
if an honest node outputs a bit $b$ in iteration~$r$, 
then no certificate for $1-b$ can be formed in iteration $r$ and all subsequent iterations.
\label{thm:safety}
\end{theorem}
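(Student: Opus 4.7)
The plan is to mirror the informal argument from Section~\ref{sec:sync12:warmup}, now using Lemma~\ref{lemma:chernoff} to handle the probabilistic eligibility election. The key observation I will repeatedly invoke is this: any valid certificate (a collection of $\comsize/2$ $\Vote$ or $\Commit$ messages from distinct eligible senders for a given (iteration, bit) pair) must contain at least one message sent by a forever-honest node, except with probability $\exp(-\Omega(\comsize))$. This is immediate from Lemma~\ref{lemma:chernoff}, which bounds the number of eventually-corrupt eligible senders of any such message by strictly less than $\comsize/2$.

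For the base case at iteration~$r$, suppose an honest $p$ outputs $b$ in iteration~$r$. Then $p$ must possess $\comsize/2$ iteration-$r$ $\Commit$ messages for $b$ from distinct eligible senders (either received directly or carried within a $\Terminate$ message), so by the key observation one such commit was multicast by a forever-honest node $i^*$. By the $\Commit$-rule, $i^*$ multicast only because it had seen no iteration-$r$ $\Vote$ for $1-b$. Had an iteration-$r$ certificate for $1-b$ nonetheless formed, the key observation would yield a forever-honest voter $h$ for $1-b$ in the same iteration's $\Vote$ round; by synchrony $h$'s vote would reach $i^*$ before its $\Commit$ round, precluding $i^*$'s commit---contradiction. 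Moreover, $i^*$ attaches the iteration-$r$ certificate for $b$ to its commit, so by the end of iteration $r$ every honest node holds such a certificate.

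For the inductive step, assume no certificate for $1-b$ exists in iterations $r, \ldots, r+k$, and every honest node holds some certificate for $b$ from an iteration in $[r, r+k]$. Suppose an iteration-$(r+k+1)$ certificate for $1-b$ nevertheless formed; the key observation again produces a forever-honest voter $h$ for $1-b$ in that iteration. The voting rule forces $h$ to have received a $\Propose$ carrying a $(1-b)$-certificate strictly higher than every $b$-certificate $h$ holds. But by the inductive hypothesis no $(1-b)$-certificate exists from iteration $\geq r$, while $h$ holds a $b$-certificate from iteration $\geq r$; thus the purported $(1-b)$-certificate would have to come from iteration $\leq r-1$, violating strict dominance---contradiction. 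The invariant propagates to $r+k+1$.

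The main obstacle I anticipate is bookkeeping the cumulative error probability. Lemma~\ref{lemma:chernoff} is invoked once per (iteration, message-type, bit) triple appearing in the argument, and the induction ranges across all iterations of the execution; a union bound over a polynomial horizon (which suffices since the protocol terminates in expected $O(1)$ iterations) preserves the $\exp(-\Omega(\comsize))$ bound stated in the theorem. A secondary technical subtlety is that the entire argument implicitly relies on the no-after-the-fact-removal assumption from Section~\ref{sec:model}: once the forever-honest nodes $i^*$ and $h$ multicast their messages, synchrony guarantees delivery in the following round, and crucially the adversary cannot retroactively erase those messages even if it attempted to corrupt the sender afterwards---here the sender is in fact forever-honest, but the proof form generalizes to the so-far-honest-at-multicast case that arises in related claims.
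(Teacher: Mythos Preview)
Your proposal is correct and follows essentially the same approach as the paper: invoke Lemma~\ref{lemma:chernoff} to extract a forever-honest committer $i^*$ and a forever-honest conflicting voter $h$, derive the iteration-$r$ contradiction via the $\Commit$ rule, then induct using the fact that every honest node inherits $i^*$'s iteration-$r$ certificate for $b$ while no $(1-b)$-certificate of iteration $\geq r$ exists. Your treatment is slightly more explicit than the paper's (you spell out the two-part inductive invariant and the union-bound bookkeeping), and note one minor imprecision: the $\Vote$ rule only requires that the node has not seen a \emph{strictly higher} opposite-bit certificate, so the proposed $(1-b)$-certificate need only be $\geq$ (not strictly higher than) $h$'s $b$-certificate---but your contradiction still goes through since the $(1-b)$-certificate is from iteration $<r$ while $h$'s $b$-certificate is from iteration $\geq r$.
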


\begin{proof}
An honest node outputs $b$ in iteration $r$, 
only if it has observed $\comsize/2$ $\Commit$ messages for $b$.
By Lemma~\ref{lemma:chernoff}, except for $\exp(-\Omega(\comsize))$ probability, 
not all of them are sent by eventually-corrupt nodes;
in other words, one of the $\Commit$ messages was sent by a forever-honest node henceforth indexed by~$i^*$.  
Similarly, for an iteration-$r$ certificate for $1-b$ to exist,
except for $\exp(-\Omega(\comsize))$ probability, 
one forever-honest node has multicast a vote for $1-b$.
But in that case, $i^*$ would have received this conflicting vote and would not have sent the $\Commit$ message for $b$.
We have reached a contradiction.
Thus, no iteration-$r$ certificate for $1-b$ exists except 
for $\exp(-\Omega(\comsize))$ probability.

Furthermore, by the beginning of iteration $r+1$, 
all forever-honest nodes will receive from node $i^*$ an iteration-$r$ certificate for $b$.
The lack of iteration-$r$ certificate for $1-b$ together with the preference to higher certificate ensures that no forever-honest node will vote for $1-b$ in iteration $r+1$.
To form a certificate for $1-b$ in a subsequent iteration,
all $\comsize/2$ votes have to come from eventually-corrupt nodes, 
which happens with $\exp(-\Omega(\comsize))$ probability by Lemma~\ref{lemma:chernoff}.
An induction then completes the proof. 
\end{proof}

\begin{theorem}[Validity]
Except for $\exp(-\Omega(\comsize))$ probability, 
if all honest nodes have the same input bit $b$,
then all nodes will output $b$.
\label{thm:validity}
\end{theorem}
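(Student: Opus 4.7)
The plan is to argue that validity reduces, by way of an induction mirroring the consistency proof, to two facts about iteration~1: (i) a certificate for $b$ is formed, and (ii) no certificate for $1-b$ is formed. Once these two hold, the situation at the start of iteration~2 is identical (in terms of which bits can still be certified) to the situation at the start of the iteration after the one in which an honest node commits in the proof of Theorem~\ref{thm:safety}, and I can reuse that induction verbatim.

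First I would handle iteration~1. Since the $\Status$ and $\Propose$ rounds are skipped in iteration~1, every so-far-honest eligible node multicasts $(\Vote, 1, b)$ because its input is $b$. By a standard Chernoff bound on the $\Fmine.\mine(i,\Vote,1,b)$ draws, at least $\comsize/2$ forever-honest nodes are eligible to vote for $b$ except with $\exp(-\Omega(\comsize))$ probability, so an iteration-$1$ certificate for $b$ is produced. On the other side, for a certificate for $1-b$ to form in iteration~1, at least $\comsize/2$ $\Vote$ messages for $1-b$ are needed, but no forever-honest node sends one; by Lemma~\ref{lemma:chernoff} fewer than $\comsize/2$ eventually-corrupt nodes are eligible to vote for $(1-b)$ in iteration~$1$, so no such certificate exists, except with $\exp(-\Omega(\comsize))$ probability.

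Next I would propagate this to all later iterations by the same induction as in the proof of Theorem~\ref{thm:safety}: in iteration~$r\geq 2$, every forever-honest node sees an iteration-$(r-1)$ certificate for $b$ reported in the $\Status$ round, sees no strictly higher certificate for $1-b$ (by the inductive hypothesis), and therefore only votes for $b$; Lemma~\ref{lemma:chernoff} again rules out a corrupt-only certificate for $1-b$. Hence in every iteration, only $b$ can become certified.

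Finally I would establish termination on $b$ in expected $O(1)$ iterations. In each iteration, with constant probability the $\Fmine.\mine(\cdot,\Propose,r,\cdot)$ draws produce exactly one eligible so-far-honest leader and no eligible corrupt leader; conditioning on this, the leader proposes $b$ (the only bit with a certificate), a Chernoff bound gives at least $\comsize/2$ forever-honest $\Vote$ messages for $b$, then at least $\comsize/2$ forever-honest $\Commit$ messages for $b$, which in turn trigger the $\Terminate$ step at every honest node. The main technical obstacle I anticipate is bookkeeping: Lemma~\ref{lemma:chernoff} is one-sided, so I must pair it with a symmetric Chernoff bound lower-bounding eligible forever-honest nodes, and I must union-bound these eligibility events across the $O(\comsize)$ iterations one might need before termination so that the overall error stays $\exp(-\Omega(\comsize))$.
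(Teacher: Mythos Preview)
Your proposal is correct and follows the same approach as the paper: establish that in iteration~1 a certificate for $b$ forms while none for $1-b$ can (via Chernoff bounds on eligibility), then invoke the induction from the consistency proof to propagate this to all later iterations. The paper's proof is a three-sentence sketch that says exactly this (``Straightforward from Lemma~\ref{lemma:chernoff}\ldots Validity then follows from consistency''), whereas you have spelled out the details; your third paragraph on termination goes beyond what the paper proves here, since the paper defers that to Theorem~\ref{thm:efficiency}.
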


\begin{proof}
Straightforward from Lemma~\ref{lemma:chernoff}: 
in the first iteration, except for the said probability,
there will be sufficient forever-honest nodes to send $(\Vote, r=1, b)$,
and there will not be sufficient eventually-corrupt nodes to vote for $1-b$.
Validity then follows from consistency.
\end{proof}

We then turn to analyze round complexity and communication/multicast complexity.
We say an iteration $r$ is a good iteration if  
a \emph{single} so-far-honest node successfully mines a \Propose message, 
and no already-corrupt node successfully mines a \Propose message.
(Note that if multiple so-far-honest nodes successfully mine \Propose messages, 
this iteration is not a good iteration). 

\begin{lemma}
If fewer than $\epsilon n/2$ forever-honest nodes have terminated,
then, every iteration independently has a $\Theta(1)$ probability to be a good iteration.
\label{lemma:honest-unique-leader}
\end{lemma}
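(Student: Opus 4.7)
The plan is to lower-bound, separately and using independence across nodes, the probabilities of the two events that define a good iteration, and then take their product.

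First I would pin down the relevant population sizes at the start of iteration $r$. Let $H$ be the number of so-far-honest, non-terminated nodes and $C$ the number of already-corrupt nodes. Since at most $(1/2-\epsilon)n$ corruptions ever occur, we always have $C \leq (1/2-\epsilon)n$, and the set of (eventually) forever-honest nodes has size at least $(1/2+\epsilon)n$. A node can only have terminated if it is forever-honest, so the hypothesis that fewer than $\epsilon n/2$ forever-honest nodes have terminated gives
\[
H \;\geq\; (1/2+\epsilon)n - \epsilon n/2 \;=\; (1/2 + \epsilon/2)n,
\]
and, in particular, $\epsilon n/2 \leq H \leq n$.

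Next I would count mining attempts. Each so-far-honest non-terminated node is deterministic about which bit it would propose in iteration $r$ (the one with its highest known certificate), so it makes \emph{one} $\Propose$ mining attempt, succeeding independently with probability $1/n$ by our choice of $D_0$. Each already-corrupt node contributes at most two $\Propose$ mining attempts (one per bit), each succeeding independently with probability $1/n$, for a total of at most $2C \leq (1-2\epsilon)n$ corrupt attempts. All of these mining attempts are mutually independent by the semantics of $\Fmine$. Let $X_h$ be the number of honest successes and $X_c$ the number of corrupt successes; the event that iteration $r$ is good is exactly $\{X_h = 1\} \cap \{X_c = 0\}$, and by independence $\Pr[\text{good}] = \Pr[X_h = 1] \cdot \Pr[X_c = 0]$.

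For the honest side, $X_h \sim \mathrm{Bin}(H, 1/n)$ with $H/n \in [1/2, 1]$, so
\[
\Pr[X_h = 1] \;=\; H \cdot \tfrac{1}{n} \cdot \bigl(1 - \tfrac{1}{n}\bigr)^{H-1} \;\geq\; \tfrac{1}{2} \cdot \bigl(1 - \tfrac{1}{n}\bigr)^{n-1},
\]
which is bounded below by a positive constant (roughly $1/(2e)$) for all large enough $n$. For the corrupt side,
\[
\Pr[X_c = 0] \;=\; \bigl(1 - \tfrac{1}{n}\bigr)^{2C} \;\geq\; \bigl(1 - \tfrac{1}{n}\bigr)^{(1-2\epsilon)n},
\]
which is likewise bounded below by the positive constant $e^{-(1-2\epsilon)}$ for all large enough $n$. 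Multiplying these constants gives the claimed $\Theta(1)$ lower bound, and independence of $\Fmine$ queries across iterations makes the events for different iterations independent.

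The only delicate point — and what I expect to be the main obstacle — is justifying that honest nodes contribute only one Propose attempt per iteration (so that $X_h$ is $\mathrm{Bin}(H, 1/n)$ and not $\mathrm{Bin}(2H, 1/n)$, which would crush $\Pr[X_h=1]$ if $H$ were close to $n$). This relies on the invariant that an honest node's proposed bit is a deterministic function of its current view (the bit with the highest certificate it has seen), so it only queries $\Fmine.\mine(i, \Propose, r, \cdot)$ for that one bit; I would state this explicitly as an invariant of the protocol before invoking it in the counting step.
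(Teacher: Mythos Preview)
Your proof is correct and essentially identical to the paper's: both count one \Propose attempt per so-far-honest node and up to two per already-corrupt node, then bound the probability of exactly one honest success and zero corrupt successes by a constant via $(1-1/n)^{\Theta(n)} = \Theta(1)$. One small wrinkle: the assertion that a terminated node must be forever-honest is not guaranteed by the model (the adversary may later corrupt an already-terminated node), but your lower bound $H \geq (1/2+\epsilon/2)n$ still holds, since any such node is charged against the corruption budget $(1/2-\epsilon)n$ rather than against the $\epsilon n/2$ termination hypothesis.
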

\begin{proof}
In any fixed iteration $r$, suppose there are $n_h$ so-far-honest nodes that have not terminated and $n_c$ already-corrupt nodes.  
Each so-far-honest node makes one attempt to propose (either 0 or 1),
whereas each already-corrupt node can make two attempts to propose (both 0 and 1).
Recall that we set $D_0$ such that each mining attempt for $\Propose$ succeeds with probability $1/n$.

The probability that exactly one honest $\Propose$ attempt succeeds and no corrupt $\Propose$ attempt succeeds is   
${n_h \choose 1} \frac{1}{n} (1-\frac1n)^{n_h-1+2n_c}$.
Observe that $n_h+n_c<n$, $n_c<n/2$ and $n_h>n/2$,
the above expression is greater than $\frac12 \cdot (1-\frac1n)^{1.5n} = \Theta(1)$.
\end{proof}

\begin{lemma}
Except for $\exp(-\Omega(\comsize))$ probability, 
if at least $\epsilon n/2$ forever-honest nodes have terminated,
all so-far-honest nodes terminate by the end of the next round.
\label{lemma:terminate}
\end{lemma}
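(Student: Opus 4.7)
The plan is to exploit the independence of the $\Terminate$ mining attempts across the $\epsilon n/2$ forever-honest nodes that have already terminated. First, I would appeal to Theorem~\ref{thm:safety} (consistency) to conclude that, except with $\exp(-\Omega(\comsize))$ probability, all forever-honest terminated nodes agreed on a common output bit $b$. In particular, each of the $\epsilon n/2$ hypothesized nodes, upon terminating, invoked $\Fmine.\mine(\cdot, \Terminate, b)$, and these invocations are mutually independent, each succeeding with probability $\comsize/n$.

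Next, I would bound the probability that \emph{none} of the $\epsilon n/2$ attempts succeeds by $(1 - \comsize/n)^{\epsilon n/2} \le \exp(-\epsilon\comsize/2) = \exp(-\Omega(\comsize))$. Thus, with overwhelming probability, at least one terminated forever-honest node---call it $j$---successfully mined a $\Terminate$ message. Let $r^*$ denote the round in which $j$ terminated and conditionally multicast $(\Terminate, b)$.

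The key observation I would then use is that $j$'s $\Terminate$ broadcast carries the $\comsize/2$ valid $\Commit$ messages for $b$ (from a single iteration) that triggered $j$'s own termination in the first place. By synchrony, every so-far-honest node receives this packet by round $r^* + 1$; the attached $\Commit$ evidence immediately fires the trigger condition in step~$\star$ at each receiver, making it output $b$ and terminate in that round. Since $r^*$ is no later than the round in which the $\epsilon n/2$ threshold of terminated forever-honest nodes was first reached, this yields termination of every so-far-honest node by the end of the subsequent round, as claimed.

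The only subtlety to navigate is ensuring that termination at the receivers does not itself require them to mine a $\Terminate$ message---otherwise a single successful broadcast from $j$ could not propagate the cascade. I would therefore carefully invoke the protocol specification to confirm that the trigger in step~$\star$ is simply the arrival of $\comsize/2$ valid $\Commit$ messages for the same bit and iteration, and that only the accompanying multicast of one's own $\Terminate$ message is conditional on successful mining. Once this is pinned down, the cascade propagates unconditionally through the attached evidence and completes the proof.
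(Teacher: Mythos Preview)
Your argument is correct and mirrors the paper's proof: both hinge on the bound $(1-\comsize/n)^{\epsilon n/2}\le\exp(-\epsilon\comsize/2)$ to guarantee that at least one of the $\epsilon n/2$ terminated forever-honest nodes is eligible to multicast $(\Terminate,b)$, and both then use the attached $\Commit$ evidence plus synchrony to cascade termination to all so-far-honest nodes in one round.

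Two minor remarks. First, your appeal to consistency to force a common bit $b$ is not strictly needed for the probability bound: the $\epsilon n/2$ mining coins are independent Bernoulli$(\comsize/n)$ regardless of which bit each node mines on, so the ``at least one success'' bound holds even without a common $b$; the attached $\Commit$ certificate then makes receivers terminate on \emph{that} sender's bit. Second, the paper explicitly notes a point you leave implicit: because $\Terminate$ eligibility is independent of iteration number and hidden until mined, an adaptive adversary that controls the order in which honest nodes terminate still cannot bias which $\epsilon n/2$ nodes reach the threshold toward the ineligible ones. Your handling of the ``trigger is the $\Commit$ evidence, not one's own $\Terminate$ eligibility'' subtlety is accurate and slightly more explicit than the paper's treatment.
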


\begin{proof}
Each of the $\epsilon n/2$ forever-honest nodes attempts to send $\Terminate$,and each has a $\comsize/n$ probability to be eligible.
The probability that none of them is eligible is 
$(1-\comsize/n)^{\epsilon n/2} < \exp(-\epsilon\comsize/2) = \exp(-\Omega(\comsize))$.
Note that the adversary can fully control in what order honest nodes terminate, 
but it cannot predict which honest nodes are eligible to send $\Terminate$.
Thus, it cannot bias the above probability.
Except for this exponentially small probability,
a $\Terminate$ message sent by an honest eligible node makes all
so-far-honest nodes terminate by the end of the next round. 
\end{proof}

\begin{theorem}[Efficiency]
In expectation, all honest nodes terminate in $O(1)$ rounds and
collectively send
$O(n\comsize)$ messages (i.e., $O(\comsize)$ multicasts). 
\label{thm:efficiency}
\end{theorem}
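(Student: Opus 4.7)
The plan is to combine Lemmas~\ref{lemma:honest-unique-leader} and~\ref{lemma:terminate} with a geometric-type argument, and to bound expected per-iteration traffic via Lemma~\ref{lemma:chernoff}. The central intermediate claim I need to establish is: conditioned on iteration~$r$ being a \emph{good} iteration (in the sense of Lemma~\ref{lemma:honest-unique-leader}) and on no forever-honest node having yet terminated, with constant probability every forever-honest node outputs a bit by the end of iteration~$r$. Note that in a good iteration, the unique so-far-honest proposer $L_r$ is adaptively revealed only when its $(\Propose,r,b,\Cert)$ message is multicast, so after-the-fact removal does not apply and the standard honest-leader argument carries over.

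To establish the intermediate claim, I first argue that $L_r$, having received $\Status$ messages from every forever-honest node during the $\Status$ round, proposes a pair $(b,\Cert)$ whose certificate ranks at least as high as any certificate held by a forever-honest node at that moment. Next, I claim that no strictly higher certificate for $1-b$ can exist by the $\Vote$ round: such a certificate requires at least one forever-honest vote by Lemma~\ref{lemma:chernoff}, and that forever-honest voter would have surfaced the certificate in its $\Status$ message, contradicting the choice of $\Cert$. Consequently every forever-honest node that has not already terminated votes for $b$. Lemma~\ref{lemma:chernoff} then rules out a conflicting iteration-$r$ certificate for $1-b$, so each forever-honest node eligible to $\Commit$ sends $(\Commit,r,b)$ unopposed; a standard Chernoff bound shows that at least $\comsize/2$ such $\Commit$ messages are multicast, which causes every so-far-honest node to output $b$ by the termination rule.

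Combining: before $\epsilon n/2$ forever-honest nodes have terminated, each iteration is independently good with $\Theta(1)$ probability by Lemma~\ref{lemma:honest-unique-leader}, so a standard geometric-distribution argument yields an expected $O(1)$ iterations until the $\epsilon n/2$ threshold is crossed; then Lemma~\ref{lemma:terminate} adds one additional round in expectation. For multicast complexity, in each iteration the $\Propose$ round contributes $O(1)$ multicasts in expectation by the choice of $D_0$, and each of the $\Status$, $\Vote$, $\Commit$, and $\Terminate$ rounds contributes $O(\comsize)$ multicasts in expectation since every node is independently eligible with probability $\comsize/n$; Chernoff controls the tail so that realizations well above expectation contribute negligibly. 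Summing over the expected $O(1)$ iterations gives $O(\comsize)$ expected multicasts total, and hence $O(n\comsize)$ expected pairwise messages. The main obstacle is the middle step---ruling out a strictly higher conflicting certificate at vote time---which I handle by chaining Lemma~\ref{lemma:chernoff} (no all-corrupt certificate is possible) with the $\Status$-round surfacing argument described above.
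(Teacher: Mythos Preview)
Your overall skeleton—case-split via Lemmas~\ref{lemma:honest-unique-leader} and~\ref{lemma:terminate}, a geometric bound on the number of iterations until a good one, and a per-round $\comsize/n$ eligibility bound on multicasts—is exactly the paper's approach. The gap is in your intermediate claim. First, in the subquadratic protocol every multicast is \emph{conditional}: $L_r$ receives $\Status$ only from those forever-honest nodes that happen to be eligible (each independently with probability $\comsize/n$), not from all of them, so the premise ``$L_r$, having received $\Status$ messages from every forever-honest node'' is false and you cannot conclude that $\Cert$ dominates every certificate held by a forever-honest node. Second, your ``surfacing'' step does not type-check: the forever-honest node that contributed a vote to a higher iteration-$r'$ certificate for $1-b$ only cast a single $\Vote$; it need not itself possess the full $\comsize/2$-vote certificate (the remaining votes may be from corrupt nodes and withheld from it), so even if that voter were $\Status$-eligible in iteration $r$ it could not report that certificate to $L_r$.

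The paper's own proof is terse on exactly this point—it asserts that when at least $(\tfrac12+\tfrac{\epsilon}{2})n$ so-far-honest nodes remain and a unique so-far-honest proposer emerges, ``by Chernoff, $\comsize/2$ so-far-honest nodes send $\Status$/$\Vote$/$\Commit$ messages; in this case, all so-far-honest nodes terminate,'' without spelling out why every so-far-honest node follows the leader's bit. A correct fleshing-out must use that the leader's identity is hidden until the $\Propose$ step (so the adversary cannot selectively deprive $L_r$ of certificates beforehand) together with a Chernoff bound on how many honest $\Status$ messages the leader does see; your write-up should replace the ``Status from every forever-honest node'' premise with that reasoning rather than the voter-surfacing argument.
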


\begin{proof}
For any iteration, if at least $\epsilon n/2$ forever-honest nodes have terminated,
then by Lemma~\ref{lemma:terminate}, all so-far-honest nodes terminate by the end of the next round except for $\exp(\Omega(-\comsize))$ probability.

Else, by Lemma~\ref{lemma:honest-unique-leader},
with at least $\Theta(1)-\exp(-\Omega(\comsize))=\Theta(1)$ probability, 
a single so-far-honest node (and no already-corrupt node) sends $\Propose$;
further, at least $(\frac12+\frac\epsilon2)n$ so-far-honest nodes have not terminated, and by Chernoff, $\lambda/2$ so-far-honest nodes send $\Status/\Vote/\Commit$ messages;
in this case, all so-far-honest nodes terminate.  
The expected constant round complexity thus follows in a straightforward fashion.
In each round, in expectation, at most $\comsize$ so-far-honest
nodes multicast messages.
Thus, honest nodes send expected $O(n\comsize)$ messages. 
\end{proof}

\begin{corollary}[Efficiency]
Except for $\exp(-\Omega(\numrounds))$ probability, 
all honest nodes terminate in $O(\numrounds)$ rounds
and collectively send $O(n\numrounds^2)$ messages
 (i.e., $O(\numrounds^2)$ multicasts). 
\label{coro:efficiency}
\end{corollary}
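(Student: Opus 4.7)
The plan is to upgrade the expected bounds of Theorem~\ref{thm:efficiency} to high-probability bounds by exploiting independence of the eligibility randomness across iterations and rounds. I will bound the round complexity and the multicast complexity separately, and then combine them with a union bound.

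First I would bound the round complexity. Consider running the protocol for $c\numrounds$ iterations for a sufficiently large constant $c$ depending on $\epsilon$. As long as fewer than $\epsilon n/2$ forever-honest nodes have terminated, Lemma~\ref{lemma:honest-unique-leader} says each iteration is \emph{independently} good with probability $p=\Theta(1)$, because the mining randomness at \Fmine in each iteration is freshly sampled. Moreover, conditioned on an iteration being good, a Chernoff argument (as in Lemma~\ref{lemma:chernoff} and Theorem~\ref{thm:efficiency}) shows that with probability $1-\exp(-\Omega(\numrounds))$ at least $\numrounds/2$ so-far-honest nodes are eligible for the matching \Vote and \Commit messages, so every so-far-honest node collects $\numrounds/2$ \Commit messages for the leader's bit and terminates in that iteration. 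Call such an iteration \emph{successful}. The probability that none of $c\numrounds$ iterations is successful (while fewer than $\epsilon n/2$ forever-honest nodes have terminated) is at most $(1-p+\exp(-\Omega(\numrounds)))^{c\numrounds}=\exp(-\Omega(\numrounds))$ for a large enough $c$. Once a successful iteration occurs, strictly more than $\epsilon n/2$ forever-honest nodes terminate (since at most $(1/2-\epsilon)n$ nodes are corrupt and all remaining so-far-honest nodes terminate), and then Lemma~\ref{lemma:terminate} guarantees that all so-far-honest nodes terminate within one more round, again except with $\exp(-\Omega(\numrounds))$ probability. Taking a union bound over this and the bad event above, termination happens within $O(\numrounds)$ rounds except with $\exp(-\Omega(\numrounds))$ probability.

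Next I would bound the multicast complexity conditioned on the protocol running for at most $R=O(\numrounds)$ rounds. In each round, each so-far-honest node is independently eligible to send any given message with probability at most $\numrounds/n$ (and $1/n$ for \Propose, which is absorbed into the same bound), so the expected number of multicasts per round is $O(\numrounds)$. A Chernoff bound gives that the number of multicasts in a fixed round exceeds $C\numrounds$ (for a suitable constant $C$) with probability $\exp(-\Omega(\numrounds))$. Union-bounding over the $O(\numrounds)$ rounds, the total number of multicasts over all rounds is $O(\numrounds^2)$ except with $\exp(-\Omega(\numrounds))$ probability, which translates to $O(n\numrounds^2)$ messages. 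Combining the two failure events by a final union bound yields the claimed $\exp(-\Omega(\numrounds))$ error.

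The main subtlety to be careful about is that whether an iteration is ``good'' depends on how many forever-honest nodes have already terminated, which is itself determined by past randomness; however, since \Fmine samples fresh coins each iteration, conditional on any history in which fewer than $\epsilon n/2$ forever-honest nodes have terminated, the current iteration is good with probability at least $p$. This lets one apply the standard geometric-tail/Chernoff argument to the sequence of iteration outcomes even though the adversary's corruption choices may depend on past executions. A similar care is needed in the multicast bound: even though the adversary may adaptively decide which so-far-honest nodes to corrupt based on observed messages, the eligibility of each so-far-honest node in a given round is independent of the adversary's view prior to the message being sent, so the per-round Chernoff bound still applies.
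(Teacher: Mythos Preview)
Your proposal is correct and follows essentially the same approach as the paper: bound the round complexity by the geometric-tail argument that at least one of $O(\numrounds)$ iterations is good (Lemma~\ref{lemma:honest-unique-leader}), and bound the multicast complexity by a per-round Chernoff bound on the number of eligible so-far-honest nodes. The paper's proof is a two-sentence sketch of exactly these two steps; your version is more explicit about the case split on whether $\epsilon n/2$ forever-honest nodes have already terminated, about the adaptivity subtleties, and about the extra Chernoff needed to turn a ``good'' iteration into one where termination actually occurs. One minor redundancy: once you establish that in a successful iteration \emph{every} so-far-honest node collects $\numrounds/2$ \Commit messages and terminates, the subsequent appeal to Lemma~\ref{lemma:terminate} is unnecessary, though it does no harm.
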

\begin{proof}
The probability that none of the $\numrounds$ iterations is good is
$(1-\Theta(1))^\numrounds=\exp(-\Omega(\numrounds))$. 
By Chernoff, except for $\exp(-\Omega(\numrounds))$ probability, 
in each round of each iteration, 
$O(\lambda)$ so-far-honest nodes send messages. 
\end{proof}

\begin{theorem}
For any constant $0 < \epsilon < 1/2$, 
the protocol in this section solves Byzantine agreement 
with $1-\exp(-\Omega(\comsize))$ probability;
the protocol terminates in expected $O(1)$ rounds,
and honest nodes collectively send $O(\lambda)$ messages in expectation.
\end{theorem}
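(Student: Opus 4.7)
The plan is to bundle the results already established in Section~\ref{sec:sync12:proof} via union bounds, so no fundamentally new calculation is needed. \textbf{Safety clauses.} For consistency I would invoke Theorem~\ref{thm:safety}: once any forever-honest node outputs $b$ at some iteration $r$, no certificate for $1-b$ can be formed at iteration $r$ or any later iteration, except with probability $\exp(-\Omega(\comsize))$. Validity is exactly Theorem~\ref{thm:validity}, again with $\exp(-\Omega(\comsize))$ failure probability. A union bound across these two events yields the stated $1-\exp(-\Omega(\comsize))$ bound for the joint safety property.

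\textbf{Termination of all forever-honest nodes with the same output.} I would argue the $\Terminate$ cascade of Step~$\star$ propagates: the first forever-honest node to output $b$ attached $\comsize/2$ $\Commit$ messages for $b$; within one round every other forever-honest node sees these and becomes a candidate to mine a $\Terminate$ message. By Lemma~\ref{lemma:terminate}, once $\epsilon n/2$ forever-honest nodes have terminated, all remaining so-far-honest nodes terminate within the next round, except for $\exp(-\Omega(\comsize))$ probability. Theorem~\ref{thm:safety} then rules out any of them outputting $1-b$.

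\textbf{Complexity.} Both the expected $O(1)$ round bound and the expected $O(n\comsize)$ message bound (equivalently $O(\comsize)$ multicasts) are Theorem~\ref{thm:efficiency}, which in turn follows from Lemma~\ref{lemma:honest-unique-leader} (each iteration is a good iteration with constant probability independent of the past) together with a Chernoff bound on the number of so-far-honest nodes eligible to send each message type per round. The expected round count is a geometric random variable with $\Theta(1)$ success probability, so it is $O(1)$; per-round expected eligibility is $O(\comsize)$ per message type, so the total expected multicast count is $O(\comsize)$.

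\textbf{Where the care goes.} The only mildly subtle point is to keep the $\exp(-\Omega(\comsize))$ error budget and the $\Theta(1)$ good-iteration budget disentangled. Specifically, Lemma~\ref{lemma:chernoff} is invoked once per iteration and the $\Terminate$ argument once globally; since the expected number of iterations before termination is $O(1)$, a union bound over these invocations still gives overall failure probability $\exp(-\Omega(\comsize))$. Conditioned on no failure, the expectation computations of Theorem~\ref{thm:efficiency} carry through verbatim, and the residual contribution of the $\exp(-\Omega(\comsize))$ bad event to the expected round and multicast complexity is negligible. Combining the safety, termination, and efficiency conclusions then yields the theorem.
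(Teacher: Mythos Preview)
Your proposal is correct and takes essentially the same approach as the paper: the paper's proof is the one-liner ``Follows from Theorem~\ref{thm:safety},~\ref{thm:validity},~and~\ref{thm:efficiency},'' and you are simply unpacking why those three results suffice, together with the termination cascade from Lemma~\ref{lemma:terminate}. Your added remarks about union-bounding the per-iteration failure probabilities and separating the $\exp(-\Omega(\comsize))$ error from the $\Theta(1)$ good-iteration analysis are more explicit than the paper but do not change the argument.
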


\begin{proof}
Follows from
Theorem~\ref{thm:safety},~\ref{thm:validity},~and~\ref{thm:efficiency}. 
\end{proof}


\section{Subquadratic BA under Partial Synchrony}
\label{sec:psync13}

In this section, we describe a partially synchronous 
BA protocol that 
tolerates $\frac{1}{3}-\epsilon$ adaptive corruptions.


\subsection{Warmup: Communication-Inefficient Underlying BA}
\label{sec:psync13:warmup}

Our starting point is a simple quadratic partially synchronous BA protocol in the unknown $\Delta$ model~\cite{dworklynch}.
The protocol is, in fact, similar to the
4-round-per-iteration synchronous protocol in
Section~\ref{sec:sync12:warmup}.
The protocol also runs in iterations and each iteration consists of four steps.
The key change is that every $\numrounds$ iterations, 
all nodes double the step length.
At some point the step length will reach or exceed $\Delta$ rounds.
Until then, messages may or may not arrive within the step.
But after that point, messages sent at the beginning of a
step will be received before the next step,
and the protocol will terminate in expected constant iterations afterwards.

Our description below assumes $n=3f+1$ nodes in total. 
The protocol runs in iterations $r=1,2,\ldots$.
Each iteration runs in four steps, 
called $\Status$, $\Propose$, $\Vote$, and $\Commit$,
respectively.
At the beginning, each step will be of length 1 round.
Every $\numrounds$ iterations, 
all nodes double the length of a step.
All messages are signed.
We denote a collection $2f+1$
(signed) iteration-$r$ $\Vote$ messages 
for the same bit $b \in \{0, 1\}$ from
distinct nodes as an {\it iteration-$r$ certificate}
for $b$ (in comparison, the synchronous protocol in
Section~\ref{sec:sync12:warmup} required only $f+1$ votes). A
certificate from a higher iteration is said to be a 
higher certificate.
For the time being, assume a random leader election oracle that
elects a random leader $L_r$ at the beginning of every iteration $r$.

\begin{enumerate}[leftmargin=5mm,itemsep=1pt]
\item $\Status$. 
Every node multicasts a $\Status$ message 
of the form $(\Status, r, b, \Cert)$
containing the highest certified bit $b$ it has seen so far
as well as the corresponding certificate $\Cert$.
In the first iteration $r=1$, it will send its signed input bit
to the leader. 
 
\item $\Propose$. The leader $L_r$ chooses a bit $b$ with a 
highest certificate denoted $\Cert$.
The leader multicasts $(\Propose, r, b, \Cert)$. 
If the leader does not have a higher ranked certificate of size $2f+1$, it can
send a certificate $\Cert$ of size $\geq f+1$ each of which is a
signed input bit. We call the latter certificate of size $\geq
f+1$ an input certificate and it is the lowest ranked certificate.

\item $\Vote$. 
If a validly signed $(\Propose, r, b, \Cert)$ message
has been received from $L_r$ with a certificate $\Cert$ for $b$,  
and if the node has not 
observed a higher
certificate for $1-b$, 
it multicasts an iteration-$r$ $\Vote$ message for $b$ 
of the form $(\Vote, r, b)$
with the leader's proposal attached.

\item $\Commit$.
If a node has received $2f+1$ iteration-$r$ signed votes 
for the same bit $b$ from distinct nodes
(which form an iteration-$r$ certificate $\Cert$ for $b$), 
it multicasts an iteration-$r$ $\Commit$ message for $b$ of the
form $(\Commit, r, b)$  
with the certificate \Cert attached.
 
\item[$\star$] 
(This step is not part of the iteration and can be executed at any time.)
If a node has received $f+1$ $\Commit$ messages for the same $b$
from the same iteration from distinct nodes,  
it multicasts a termination message of the form $(\Terminate, b)$
with the $f+1$ $\Commit$ messages attached.  
The node then outputs $b$ and terminates. 
This last message will make all other honest nodes multicast the
same $\Terminate$ message, output $b$ and terminate.
\end{enumerate}




\paragraph{Consistency.}
The analysis for consistency follows similar arguments to Section~\ref{sec:sync12:warmup}.
We show that 
{\it if any honest node outputs a bit $b$ in iteration~$r$, 
then no certificate for $1-b$ can be formed in iteration $r$ and all subsequent iterations,} 
assuming ideal signatures.

An honest node outputs $b$ in iteration $r \geq 1$,
there must be an iteration-$r$ certificate for $b$.
Recall that a certificate in this protocol consists of $2f+1$
$\Vote$ messages from that iteration.
For an iteration-$r$ certificate for $1-b$ to also exist, 
$f+1$ nodes need to vote for both $b$ and $1-b$.
But there are only $f$ corrupted nodes, a contradiction.
Thus, we can rule out any iteration-$r$ certificate for $1-b$.

Furthermore, $2f+1$ nodes have sent $\Commit$ messages for $b$ in iteration $r$.
This means at least $f+1$ honest nodes have seen the iteration-$r$ certificate for $b$.
The preference for a higher certificate then ensures consistency for all subsequent iterations.
Since no iteration-$r$ certificate for $1-b$ exists, 
those $f+1$ honest nodes will not vote for $1-b$ in iteration $r+1$;
hence, no iteration-$(r+1)$ certificate for $1-b$ can come into existence;
hence, those $f+1$ honest nodes will not vote for $1-b$ in
iteration $r+2$ and so on.
A simple induction completes the proof.

\paragraph{Validity.}
Observe that the protocol starts with signed inputs sent to the leader. Before any higher ranked certificate
can be formed, the leader needs to send a weak certificate of
$f+1$ signed inputs for the same value in its proposal. If all
honest nodes start with the same value $b$, no leader can create
 $\geq f+1$ sized input certificate for value $1-b$,
proving validity.
\ignore{
The protocol as described does not achieve validity. 
We now fix this problem.
Recall that the first iteration skips
$\Status$/$\Propose$ and starts with $\Vote$. A certificate is
formed if $2f+1$ $\Vote$ messages are received in that round.
\kartik{this weak certificate may not arrive in the unknown GST
  model. Shall we change the protocol so that everyone reports iteration-0
  certs and the leader is allowed to use the 2f+1 iteration-0
  certs (values) as a single cert in any
  subsequent iteration if does not have a higher cert? The leader
  then needs send the majority value among iteration-0
  certs/values/statuses.}
We introduce a special \emph{weak} certificates for iteration-1 only.
A weak certificate only requires $f+1$ $\Vote$ messages (rather than $2f+1$).
Weak iteration-1 certificates are ranked lower than normal iteration-1 certificates;
this way, they do not affect the safety argument above.  
However, they are ranked higher than iteration-0 certificates (i.e., no certificates at all).
Thus, if all honest nodes have the same input bit $b$,
in the first $\Vote$ round, 
no iteration-1 certificate (weak or normal) for $1-b$ can exist,
providing validity.
}

\paragraph{Termination.}
\ignore{Due to the above modifications, 
nodes may incur drastically different waiting times in the first $\Vote$ round.
Thus, we also need to re-synchronize nodes after GST.
Such a clock synchronization protocol is given in Abraham et
al.~\cite{abraham2018synchronous}.} 
Once there is an honest leader after length of a step $\geq
\Delta$ rounds,
all honest nodes send to and receive from each other $\Vote$ and
$\Commit$ messages, output and terminate in that iteration. 
Since leaders are selected at random, in expectation, an honest
leader emerges in $O(1)$ iterations once the step size exceeds
$\Delta$. Since the step lengths double every $\numrounds$ iterations,
 the protocol will terminate in expected $O(\numrounds\Delta)$ rounds.

\subsection{Partially Synchronous Subquadratic BA}
\label{sec:psync13:subquad}

Bit-specific eligibility can be added in a fashion similar to Section~\ref{sec:sync12:subquad}.
Node $i$ is eligible to send \Status, \Vote, \Commit, or a
\Terminate message with difficulty $D$, and is eligible to send \Propose with difficulty $D_0$.
All eligibility depends on the bit $b \in \{0, 1\}$ and (with the exception of \Terminate) the iteration number $r$.
$D$ and $D_0$ are appropriate {\it difficulty} parameters 
such that each of the multicast messages except a proposal has a
$\comsize/n$ probability to be eligible and each leader proposal
has a $1/n$ probability to be eligible. As before, we assume $n > \comsize$;
otherwise, one should simply use the quadratic protocol.
 
Now, the subquadratic protocol is almost identical to the warmup protocol 
except for the following changes:
\begin{itemize}[leftmargin=5mm,itemsep=0pt,topsep=1pt]
\item 
every occurrence of \underline{multicast}
is now replaced with ``\underline{conditionally multicast}''; 
\item 
every occurrence of $2f+1$ $\Vote$ or $\Commit$ messages is now replaced with $2\comsize/3$ messages of that type;
\item 
every occurrence of a weak certificate is now
replaced with $\geq \comsize/3$ messages of that type;
\item upon receiving a message of the form $(i, \msg)$ 
(including messages attached with other messages), 
a node invokes $\Fmine.\ver(i, \msg)$ to verify node $i$'s eligibility to send that message.
Note that $\msg$ can be of the form $(\MType, r, b)$
where $\MType \in \{\Status, \Propose, \Vote, \Commit\}$
or of the form $(\Terminate, b)$.
\end{itemize}

\subsection{Proof}
\label{sec:psync13:proof}

The proofs mostly follow the sketch in Section~\ref{sec:psync13:warmup}
and the stochastic process in Section~\ref{sec:sync12:proof}.
As before, we first assume perfect cryptographic primitives.

\begin{theorem}[Consistency]
If an honest node outputs a bit $b$ in iteration~$r$, 
then no certificate for $1-b$ can be formed in iteration $r$ and
all subsequent iterations, except for $\exp(-\Omega(\comsize))$
probability.
\label{thm:safety-psync13}
\end{theorem}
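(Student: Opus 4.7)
The plan is to adapt the proof of Theorem~\ref{thm:safety} to the $(1/3-\epsilon)n$ partial-synchronous setting. The thresholds now are $2\comsize/3$ for a strong certificate and $\comsize/3$ for a weak (commit-set) threshold, and because the partial-synchronous $\Commit$ rule no longer performs a ``no conflicting vote'' check, the in-iteration uniqueness of certificates must come from the total-vote budget argument developed in Section~\ref{sec:subquad} rather than from the commit rule itself.

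First I would prove the partial-sync analog of Lemma~\ref{lemma:chernoff}: except with $\exp(-\Omega(\comsize))$ probability, fewer than $\comsize/3$ eventually-corrupt nodes are eligible to send any fixed $(\MType, r, b)$. Each eventually-corrupt node is independently eligible with probability $\comsize/n$, and there are at most $(1/3-\epsilon)n$ of them, giving expected count $(1/3-\epsilon)\comsize$; a multiplicative Chernoff tail yields the bound. Next, I would import the budget argument from Section~\ref{sec:subquad}: in iteration $r$ each eventually-corrupt node may mine a vote for each bit while each so-far-honest node mines at most one, so the expected total of successful vote mining attempts is at most $(4/3-\epsilon)\comsize$; a Chernoff bound keeps the total below $4\comsize/3$ except with $\exp(-\Omega(\comsize))$ probability. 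Because each certificate consists of $2\comsize/3$ distinct votes, at most one bit can attain a certificate in any given iteration.

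With these two ingredients, the in-iteration part of the theorem follows exactly as in Theorem~\ref{thm:safety}: an honest node outputs $b$ in iteration $r$ only after observing $\comsize/3$ $\Commit$ messages for $b$, at least one of which (by the Chernoff lemma) is from a forever-honest node $i^*$; $i^*$ must have seen a certificate for $b$ in iteration $r$, and the budget argument rules out any certificate for $1-b$ in iteration $r$.

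The main obstacle is closing the induction to iterations $r+1, r+2, \ldots$ under unknown $\Delta$, because we cannot assume $i^*$'s certificate reaches all forever-honest nodes within any bounded number of rounds. My plan is to run the induction via the higher-certificate preference: $i^*$ re-advertises its iteration-$r$ certificate in every subsequent $\Status$ message, and any forever-honest node that receives this certificate refuses to vote for $1-b$ in any later iteration, because by the inductive hypothesis no iteration-$r'\geq r$ certificate for $1-b$ exists yet and hence any cert for $1-b$ the leader could present is strictly dominated. A candidate iteration-$(r+k)$ certificate for $1-b$ would need $2\comsize/3$ votes of which at most $\comsize/3$ are corrupt, and I would close the induction by arguing, via the in-iteration budget together with the accumulated Chernoff bounds on eligibility and on how many forever-honest nodes have not yet seen $i^*$'s certificate by iteration $r+k$, that no such certificate can assemble. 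A union bound over the polynomially many iterations folds all error probabilities into a single $\exp(-\Omega(\comsize))$ term.
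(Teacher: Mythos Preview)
Your in-iteration argument (the budget bound showing that $2\comsize/3$ votes for each bit would require more than the expected $(4/3-\epsilon)\comsize$ successes) is exactly what the paper does, so that part is fine.

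The inductive step, however, has a genuine gap. You correctly flag unknown~$\Delta$ as the obstacle, but the fix you propose --- that $i^*$ re-advertises its iteration-$r$ certificate in later $\Status$ messages and forever-honest recipients then refuse to vote $1-b$ --- does not close it. Two things break: (i) $i^*$ is only \emph{conditionally} eligible to send $\Status$, so with constant probability $i^*$ cannot advertise at all in iteration $r{+}1$; (ii) even when $i^*$ is eligible, before the step length reaches $\Delta$ the adversary can delay its $\Status$ message past the $\Vote$ step, so no other forever-honest node learns the certificate in time. Hence a corrupt leader can present an earlier certificate for $1-b$, and \emph{every} forever-honest node that has not itself seen a higher certificate for $b$ will legitimately vote $1-b$. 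Your proposed bound on ``how many forever-honest nodes have not yet seen $i^*$'s certificate'' therefore cannot be established; under partial synchrony that number can be all of them.

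The paper avoids message propagation entirely. The key observation is that the honest outputter saw $2\comsize/3$ $\Commit$ messages for $b$ in iteration~$r$; a reverse Chernoff turns this into a lower bound of roughly $(2/3-\epsilon')n$ nodes that actually \emph{sent} $\Commit$, hence at least $(1/3-\epsilon'+\epsilon)n$ forever-honest nodes that \emph{themselves} hold the iteration-$r$ certificate for $b$ (because sending $\Commit$ requires having seen it). Those nodes will never mine a vote for $1-b$ in iteration $r{+}1$, regardless of what messages the network delivers. This caps the pool of possible $(\Vote,r{+}1,1-b)$ miners at about $(2/3+\epsilon'-\epsilon)n$, and with $\epsilon'=\epsilon/2$ a Chernoff bound shows $2\comsize/3$ successes is $\exp(-\Omega(\comsize))$-unlikely. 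The induction then carries forward with the same set of certificate-holders. Replacing your propagation-based argument with this ``many forever-honest committers already hold the certificate locally'' argument is the missing idea.
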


\begin{proof}
An honest node outputs $b$ in iteration $r$, 
only if it has observed $2\comsize/3$ $\Vote$ messages and $2\comsize/3$ $\Commit$ messages for $b$.

If an iteration-$r$ certificate for $1-b$ also exists,
then there are at least $4\comsize/3$ successful mining attemps on $\Vote$ in iteration $r$.
There are at most $(\frac13-\epsilon)n$ eventually-corrupt nodes;
each of them may attempt to vote for both bits.
All remaining nodes are forever-honest and will only attempt to vote for one bit. 
Thus, there are at most $(\frac43-\epsilon)n$ total mining attempts,
each with a probability of $\comsize/n$ to be successful
but at least $4\comsize/3$ attempts succeeded.
By Chernoff, this happens with $\exp(-\Omega(\comsize))$ probability. 

Furthermore,
if some forever-honest node receives $2\lambda/3$ $\Commit$
messages, it implies that $(\frac23 - \epsilon')n$ nodes have
sent $\Commit$ messages except for $\exp(-\Omega(\lambda))$ probability. At least $(\frac13 - \epsilon' + \epsilon)n$ of
these nodes
are forever-honest.
In iteration $r+1$, the remaining $(\frac23 + \epsilon) - (\frac13
- \epsilon' + \epsilon) = (\frac13 + \epsilon')$ may mine for $1-b$ (since if a forever-honest node has sent a $\Commit$ message for $b$ in
iteration $r$, it will not attempt to vote $1-b$ in iteration
$r+1$). In addition, $(\frac13 -
\epsilon)n$ fraction of corrupt nodes mine for $1-b$.
To form a certificate for $1-b$ in iteration $r+1$, $2\comsize/3$ out of the $(\frac23+\epsilon'-\epsilon)n$ mining attempts for $(\Vote, r+1, 1-b)$ from eventually-corrupt nodes need to succeed.
Picking $\epsilon' = \epsilon/2$,
by a Chernoff bound, this happens with $\exp(-\Omega(\comsize))$ probability. 
An induction then completes the proof. 
\end{proof}

\begin{theorem}[Validity]
If all honest nodes have the same input bit $b$,
then all nodes will eventually output $b$, 
except for $\exp(-\Omega(\comsize))$ probability.
\label{thm:validity-psync13}
\end{theorem}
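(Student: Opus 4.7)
The plan is to mirror the synchronous validity argument in Theorem~\ref{thm:validity}: show that under all-honest input $b$, no certificate for $1-b$ ever forms with overwhelming probability, and then invoke consistency (Theorem~\ref{thm:safety-psync13}) together with an argument that a certificate for $b$ eventually forms to conclude that all honest nodes output $b$. The new wrinkles compared to the synchronous case are (i) the partially synchronous protocol uses a weak input certificate of size $\geq \comsize/3$ that a leader may attach in place of a regular iteration certificate, and (ii) I must argue eventual termination under the doubling-step-length mechanism for unknown $\Delta$.

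First I would rule out any input certificate for $1-b$. Since all forever-honest nodes sign only $b$, only eventually-corrupt nodes can contribute signed inputs for $1-b$. There are at most $(1/3 - \epsilon)n$ such nodes, and each is independently eligible to send $(\Status, 1, 1-b)$ with probability $\comsize/n$, giving an expected count of $(1/3 - \epsilon)\comsize$. By a Chernoff bound, strictly fewer than $\comsize/3$ are eligible except with $\exp(-\Omega(\comsize))$ probability, so no valid input certificate for $1-b$ can ever be assembled. Next, I would show by induction on $r$ that no iteration-$r$ certificate for $1-b$ can form. In the base case $r=1$, since there is no valid (input or higher) certificate for $1-b$ that a leader can attach, no forever-honest node will vote for $1-b$; by a Chernoff argument analogous to Lemma~\ref{lemma:chernoff} but adapted to the $1/3-\epsilon$ corruption setting, fewer than $2\comsize/3$ eventually-corrupt nodes are eligible to send $(\Vote, 1, 1-b)$, ruling out an iteration-$1$ certificate for $1-b$. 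The inductive step is identical: the higher-certificate preference prevents honest votes for $1-b$ whenever only certificates for $b$ have been seen, so any iteration-$r$ certificate for $1-b$ would require $2\comsize/3$ eligible corrupt voters, which the same Chernoff bound prohibits except with $\exp(-\Omega(\comsize))$ probability.

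Finally I would argue eventual termination on $b$. The doubling mechanism guarantees that within $O(\numrounds \log \Delta)$ iterations the step length exceeds $\Delta$ and messages propagate within a step. After this point, each iteration with an honest leader succeeds with constant probability: by Chernoff, at least $2\comsize/3$ forever-honest nodes are eligible to send $\Vote$ and subsequently $\Commit$ for $b$, after which a forever-honest eligible node assembles $\comsize/3$ $\Commit$ messages and multicasts $\Terminate$, causing all honest nodes to output $b$. The main obstacle will be controlling the union bound over iterations: the total number of iterations until termination is not a priori constant because of the $\log \Delta$ warmup factor, and the error probabilities at each iteration must be accumulated. However, since the per-iteration error is $\exp(-\Omega(\comsize))$ while the number of iterations required is polynomial in $\numrounds$ and $\log \Delta$, choosing $\comsize$ as a sufficiently large multiple of the security parameter (which the statement of the theorem allows us to do) keeps the total error probability $\exp(-\Omega(\comsize))$, completing the proof.
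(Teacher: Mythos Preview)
Your proposal is correct and follows essentially the same approach as the paper, though you spell out considerably more detail than the paper does. The paper's own proof is extremely terse: it applies a single Chernoff bound to show that fewer than $\comsize/3$ eventually-corrupt nodes can contribute signed inputs for $1-b$, concludes immediately that no certificate (input or normal) for $1-b$ can ever form and hence $1-b$ is never proposed, and then says ``validity follows from safety.'' Your explicit induction over iterations to rule out iteration-$r$ certificates for $1-b$ is exactly the argument the paper is suppressing in that one-line jump, and your termination argument is what the paper defers to Theorem~\ref{thm:efficiency-psync13} rather than reproving inside the validity theorem. One small correction: in the subquadratic protocol the threshold for $\Commit$ messages triggering $\Terminate$ is $2\comsize/3$, not $\comsize/3$ (see the proof of Theorem~\ref{thm:safety-psync13}), but this does not affect the structure of your argument.
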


\begin{proof}
By a Chernoff bound, except for the said probability,
there will be fewer than $\comsize/3$ signed input bits for $1-b$
(from eventually-corrupt nodes) in any iteration.
So there is no certificate (input certificate or normal certificate) for $1-b$ and hence,
$1-b$ will never be proposed.
Validity then follows from safety.
\end{proof}

We now prove efficiency.
Lemmas~\ref{lemma:honest-unique-leader} from Section~\ref{sec:sync12:proof} still applies.
Lemmas~\ref{lemma:terminate} needs a minor modification as follows but its proof remains almost identical.

\begin{lemma}
Except for $\exp(-\Omega(\comsize))$ probability, 
if at least $\epsilon n/2$ forever-honest nodes have terminated,
all so-far-honest nodes terminate within $\Delta$ rounds.
\label{lemma:terminate-psync13}
\end{lemma}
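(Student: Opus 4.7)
The plan is to adapt the proof of Lemma~\ref{lemma:terminate} almost verbatim, with the only change being that synchronous one-round propagation is replaced by propagation within $\Delta$ rounds, as permitted by the unknown-$\Delta$ model.

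First I would observe that whenever a forever-honest node terminates in the subquadratic protocol, it has necessarily gathered the requisite number of $\Commit$ messages for some bit $b$ and then attempted to conditionally multicast $(\Terminate, b)$ together with that commit certificate; termination at node~$i$ requires the eligibility predicate $\Fmine.\mine(i, \Terminate, b) < D$ to hold, which is satisfied independently for each node with probability $\comsize/n$. Because the $\Terminate$ eligibility for a forever-honest node is a pseudorandom value derived from its secret key and is not revealed to the adversary in advance, the adversary can schedule the order in which honest nodes terminate but cannot bias which of them will be eligible.

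Next I would apply the same union bound as in the proof of Lemma~\ref{lemma:terminate}: the probability that none of the $\epsilon n /2$ terminated forever-honest nodes is eligible to multicast $\Terminate$ is at most
\[
\left(1 - \frac{\comsize}{n}\right)^{\epsilon n / 2} \leq \exp\!\left(-\frac{\epsilon \comsize}{2}\right) = \exp(-\Omega(\comsize)).
\]
Thus, except with this probability, at least one forever-honest terminated node, call it~$j$, multicasts a valid $(\Terminate, b)$ message together with its attached commit certificate (which any recipient can verify via $\Fmine.\ver$ on each attached $\Commit$).

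Finally, by the unknown-$\Delta$ communication guarantee, the $\Terminate$ message from $j$ is delivered to every so-far-honest node within $\Delta$ rounds of being sent, regardless of the current step length. Upon receipt of this validly formed message, every so-far-honest node outputs $b$ and terminates, yielding the lemma.

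The main obstacle, as in the synchronous case, is ruling out adversarial bias on eligibility: the adversary is strongly adaptive about whom to corrupt but cannot predict $\Fmine.\mine(i, \Terminate, b)$ for a so-far-honest $i$, so the $\epsilon n /2$ independent coins argument goes through unchanged. The only substantive difference from Lemma~\ref{lemma:terminate} is the replacement of ``end of the next round'' by ``within $\Delta$ rounds'', which is exactly the message-delay bound we are allowed to invoke in this model.
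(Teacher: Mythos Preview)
Your proposal is correct and matches the paper's own proof, which simply says ``Similar to that of Lemma~\ref{lemma:terminate}'' --- you have carried out exactly that adaptation, replacing one-round delivery by $\Delta$-round delivery. One small misstatement to fix: you write that ``termination at node~$i$ requires the eligibility predicate \ldots\ to hold,'' but in fact a node outputs and terminates regardless of its $\Terminate$ eligibility; eligibility only controls whether its $\Terminate$ multicast actually goes out, and your argument (correctly) depends only on at least one of the $\epsilon n/2$ terminated nodes being eligible.
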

\begin{proof}
Similar to that of Lemma~\ref{lemma:terminate}.
\end{proof}

\begin{theorem}[Efficiency]
All honest nodes terminate in expected $O(\lambda\Delta)$ rounds
and collectively send $O(n\comsize^2 \log\Delta)$ messages in expectation (i.e., $O(\comsize^2\log\Delta)$ multicasts).
\label{thm:efficiency-psync13}
\end{theorem}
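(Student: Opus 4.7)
The plan is to split the execution into two phases based on whether the current step length has reached the (unknown) network delay bound $\Delta$, and bound both round and multicast complexity in each phase separately.

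Phase 1 consists of all iterations whose step length is strictly less than $\Delta$. Since the step length starts at $1$ and doubles every $\lambda$ iterations, Phase 1 lasts at most $\lambda \lceil \log \Delta \rceil = O(\lambda \log \Delta)$ iterations. Summing the four-step iteration lengths over Phase 1 gives a geometric series whose last term dominates, yielding a total of $O(\lambda \Delta)$ rounds. Phase 2 then consists of iterations whose step length is at least $\Delta$, so all messages sent by so-far-honest nodes at the start of a step arrive before the next step. By Lemma~\ref{lemma:honest-unique-leader} (which is stated in a network-agnostic fashion and depends only on the $\Fmine$ stochastic process), each Phase-2 iteration independently has $\Theta(1)$ probability of being a good iteration, provided fewer than $\epsilon n/2$ forever-honest nodes have yet terminated.

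The core step will be to argue that in Phase 2, a good iteration leads to termination with constant probability. Given a unique so-far-honest leader and synchronous message delivery, every so-far-honest node adopts the leader's proposal $b$ in the $\Vote$ step, provided it is not holding a strictly higher certificate for $1-b$ (which is ruled out as in Theorem~\ref{thm:safety-psync13}). Applying Chernoff bounds analogous to Lemma~\ref{lemma:chernoff} and Lemma~\ref{lemma:terminate-psync13}, except with $\exp(-\Omega(\lambda))$ probability at least $2\lambda/3$ so-far-honest eligible nodes produce $\Vote$ messages for $b$, inducing $\Commit$ messages for $b$, which in turn cause $\epsilon n / 2$ forever-honest nodes to output $b$, after which Lemma~\ref{lemma:terminate-psync13} takes over and all remaining so-far-honest nodes terminate within at most one additional step of $O(\Delta)$ rounds. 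Thus the expected number of Phase-2 iterations is $O(1)$, contributing at most $O(\Delta)$ rounds since the step length doubles only every $\lambda$ iterations and so stays $\Theta(\Delta)$ throughout Phase 2. Combining the two phases yields the claimed expected $O(\lambda \Delta)$ round complexity.

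For multicast complexity, I would observe that in each of the four steps of any iteration, each so-far-honest node independently has probability $\lambda/n$ of being eligible to conditionally multicast for any given bit, so the expected number of multicasts per step is $O(\lambda)$ and per iteration is $O(\lambda)$. Multiplying by the $O(\lambda \log \Delta)$ iterations of Phase 1 plus the $O(1)$ iterations of Phase 2 gives $O(\lambda^2 \log \Delta)$ multicasts in expectation, equivalently $O(n \lambda^2 \log \Delta)$ messages, since each multicast is sent to all $n$ recipients.

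The main obstacle will be carefully arguing the Phase-2 termination step: in particular, ensuring that the random choices governing eligibility, the adversary's strategy of corrupting nodes in response to observed sends, and the leader-election stochastic process interact correctly so that a constant-probability good iteration indeed cascades into $2\lambda/3$ $\Vote$ messages, $2\lambda/3$ $\Commit$ messages, and then cross-the-board $\Terminate$ messages. All of the ingredients needed for this — Chernoff concentration on eligibility counts, the no-after-the-fact-removal property enforcing that honest votes stand, and the doubling-step argument for reaching $\Delta$ — are already established earlier in the section, so the proof is primarily a matter of combining them in the right order.
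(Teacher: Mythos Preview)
Your proposal is correct and follows essentially the same approach as the paper: split execution at the point where the step length reaches $\Delta$, use the geometric sum $\lambda(\Delta + \Delta/2 + \cdots + 1) = O(\lambda\Delta)$ for rounds before that point, invoke Lemmas~\ref{lemma:honest-unique-leader} and~\ref{lemma:terminate-psync13} to get expected $O(1)$ iterations afterward, and count $O(\lambda)$ expected multicasts per step over $O(\lambda\log\Delta)$ iterations. The paper's own proof is terser (deferring the Phase-2 termination reasoning to ``similar to the proof of Theorem~\ref{thm:efficiency}''), whereas you spell out explicitly how a good iteration cascades through $\Vote$, $\Commit$, and $\Terminate$; but the structure and key ingredients are the same.
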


\begin{proof}
Similar to the proof of Theorem~\ref{thm:efficiency},
once the length of a step is at least $\Delta$ rounds,
the protocol terminates in expected $O(1)$ iterations
by Lemma~\ref{lemma:honest-unique-leader}~and~\ref{lemma:terminate-psync13}.
By then, the step length has been doubled at most $\log\Delta$ times.
Hence, there have been at most $\lambda\log\Delta$ iterations,
and at most $\lambda(\Delta + \Delta/2 + \cdots + 2 +
1)=O(\lambda\Delta)$ rounds have passed. 
In each step of each iteration, in expectation,
so-far-honest nodes send $O(\lambda)$ messages. 
\end{proof}

\begin{corollary}[Efficiency]
Except for $\exp(-\Omega(\comsize))$ probability,
all honest nodes terminate in $O(\comsize\Delta)$ rounds
and collectively send $O(n\comsize^2 \log\Delta)$ messages
(i.e., $O(\comsize^2\log\Delta)$ multicasts). 
\end{corollary}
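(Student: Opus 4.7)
The plan is to upgrade the expectation bounds in Theorem~\ref{thm:efficiency-psync13} to high-probability bounds via two essentially independent concentration arguments, one for the number of rounds and one for the number of multicasts. Throughout, I would exploit the fact that, via \Fmine, the eligibility decisions of distinct nodes across distinct iterations are mutually independent.

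For the round count, I would split the execution into a \emph{pre-synchronization} phase, during which the current step length is still less than $\Delta$, and a \emph{post-synchronization} phase, during which it is at least $\Delta$. Since the step length doubles every $\numrounds$ iterations, the pre-synchronization phase lasts at most $\numrounds\log\Delta$ iterations and contributes at most $\numrounds(1+2+\cdots+\Delta/2) = O(\numrounds\Delta)$ rounds deterministically. For the post-synchronization phase, Lemma~\ref{lemma:honest-unique-leader} says each iteration is independently ``good'' with probability $\Theta(1)$, and once the step length is $\geq\Delta$ a good iteration delivers all required $\Status/\Propose/\Vote/\Commit$ messages in time, causing termination within $O(\Delta)$ rounds (Lemma~\ref{lemma:terminate-psync13} then handles the stragglers). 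Hence the probability that none of the first $\numrounds$ post-synchronization iterations is good is $(1-\Theta(1))^{\numrounds} = \exp(-\Omega(\numrounds))$, which bounds the post-synchronization rounds by $O(\numrounds\Delta)$ except with $\exp(-\Omega(\numrounds))$ probability. The two phases together give $O(\comsize\Delta)$ rounds.

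For the multicast count, I would condition on the high-probability event above that the protocol terminates within the first $\comsize\log\Delta$ iterations, equivalently within $O(\comsize\log\Delta)$ steps. Within any single step, each so-far-honest node is independently eligible with probability at most $\comsize/n$ (and at most $1/n$ for a \Propose message), so in expectation $O(\comsize)$ honest nodes multicast. A Chernoff bound gives that the number of honest multicasts in that step exceeds, say, $2\comsize$ with probability $\exp(-\Omega(\comsize))$. A union bound over all $O(\comsize\log\Delta)$ steps preserves an overall failure probability of $\exp(-\Omega(\comsize))$, yielding $O(\comsize^2\log\Delta)$ multicasts in total, i.e. $O(n\comsize^2\log\Delta)$ messages.

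The main obstacle is purely bookkeeping around the union bound: preserving the $\exp(-\Omega(\comsize))$ failure probability across $O(\comsize\log\Delta)$ steps requires that $\log\Delta$ be at most polynomial in $\comsize$, which is the standard regime assumed throughout the paper; in a more adversarial regime one would either absorb a $\log\Delta$ factor into the exponent or sharpen the Chernoff tail. The other subtle point is that the number of steps actually executed is itself a random variable, so the Chernoff-plus-union argument for multicasts must be applied after conditioning on the round-count high-probability event established in the previous paragraph. Everything else is routine once the independence of \Fmine\ outputs is invoked.
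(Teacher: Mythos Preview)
Your proposal is correct and follows essentially the same approach as the paper. The paper's own proof is simply ``Similar to that of Corollary~\ref{coro:efficiency}'', so your write-up is a faithful fleshing-out of that one-liner: bound the pre-synchronization rounds deterministically, use the $\Theta(1)$ good-iteration probability from Lemma~\ref{lemma:honest-unique-leader} to get $\exp(-\Omega(\comsize))$ on the post-synchronization termination, and then Chernoff plus a union bound over the $O(\comsize\log\Delta)$ steps for the multicast count. Your caveat about the union bound needing $\log\Delta$ to be sub-exponential in $\comsize$ is exactly the assumption the paper quietly relies on (see the Remark at the end of Section~\ref{sec:psync13:proof} that $\Delta$ is polynomially bounded).
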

\begin{proof}
Similar to that of Corollary~\ref{coro:efficiency}.
\end{proof}

\begin{theorem}
For any constant $0 < \epsilon < 1/3$, 
the protocol in this section solves Byzantine agreement 
with $1-\exp(-\Omega(\comsize))$ probability;
the protocol terminates in expected $O(\numrounds\Delta)$ rounds,
and honest nodes collectively send $O(\comsize^2 \cdot
\log\Delta)$ messages in expectation.
\end{theorem}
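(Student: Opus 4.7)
The plan is straightforward, since this theorem is a consolidation of the three main results already established in Section~\ref{sec:psync13:proof}. First I would invoke Theorem~\ref{thm:safety-psync13} for consistency and Theorem~\ref{thm:validity-psync13} for validity. Each of these bad events fails with probability at most $\exp(-\Omega(\comsize))$, so a union bound yields a joint correctness probability of $1-\exp(-\Omega(\comsize))$, giving the claimed probabilistic correctness.

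Next I would invoke Theorem~\ref{thm:efficiency-psync13} for the round and communication bounds. The efficiency theorem directly states the expected round complexity $O(\numrounds\Delta)$, and it is stated in both pairwise-message form ($O(n\comsize^2\log\Delta)$) and multicast form ($O(\comsize^2\log\Delta)$); the final statement here uses the latter, consistent with the multicast-based complexity measure introduced in Section~\ref{sec:model}. So no recomputation is required.

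There is no substantive obstacle: the real work has already been done in the underlying lemmas, namely Lemma~\ref{lemma:honest-unique-leader}, which guarantees each iteration has constant probability of being good once the step length reaches $\Delta$, Lemma~\ref{lemma:terminate-psync13}, which propagates termination once a small fraction of forever-honest nodes have terminated, and the Chernoff-style committee-size bounds used throughout the section. The only point that requires a moment's care is combining the finitely many $\exp(-\Omega(\comsize))$ failure events from consistency, validity, and the efficiency concentration arguments into a single $1-\exp(-\Omega(\comsize))$ guarantee, which is immediate because the number of such events is constant. Thus the theorem follows from Theorems~\ref{thm:safety-psync13},~\ref{thm:validity-psync13},~and~\ref{thm:efficiency-psync13}.
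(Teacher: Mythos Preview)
Your proposal is correct and matches the paper's own proof essentially verbatim: the paper simply states that the result follows from Theorems~\ref{thm:safety-psync13},~\ref{thm:validity-psync13}, and~\ref{thm:efficiency-psync13}, which is exactly your conclusion. Your additional commentary on the union bound and the supporting lemmas is accurate elaboration but not required.
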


\begin{proof}
Follows from
Theorems~\ref{thm:safety-psync13},~\ref{thm:validity-psync13},~and~\ref{thm:efficiency-psync13}. 
\end{proof}

\paragraph{Remark.} Since we employ cryptography and assume
computationally-bounded adversaries, our network model for
partial synchrony is adopted from
CKPS~\cite{cachin2001secure}. Specifically, we assume that (i) honest
nodes send polynomially many messages (or $\Delta$
is polynomially bounded), and (ii) the delivery of messages is
controlled by an adversary.

\ignore{
Observe that the definition of a partially synchronous
network~\cite{dworklynch} states that (1) the time until GST can
be arbitrarily long, and (2) before GST,
the delivery of messages in the network is \emph{arbitrary}. Regarding
the first statement, we
note that our protocol works except with a negligible failure
probability in a security parameter for every iteration. Hence,
our protocol requires GST to be polynomially bounded. Regarding the
second statement,
our arguments in Lemma~\ref{lemma:terminate-psync13} crucially
assume that the delivery of messages is controlled by an
adversary. This ensures that honest nodes receive the message
with except with exponentially small probability. If message
delivery is indeed \emph{arbitrary}, then it is 
 possible for all terminated nodes to not be able send a
message to other honest nodes.
}

\ignore{
\footnote{We note that the exponentially small
  failure probability holds for every iteration of the
  protocol. Thus, we need to assume that time for GST is
  polynomially bounded.}
\footnote{This argument crucially assumes that the
  delivery of messages is controlled by an adversary. If message
  delivery is \emph{arbitrary}, then it is indeed possible for all terminated nodes to
  not be able send a message.}
}


\section{Necessity of Setup Assumptions for Sublinear Multicast Complexity}
\label{sec:lower}

In this section, we show that some form of setup assumption is needed
for multicast-based subquadratic BA.
Specifically, with plain authenticated channels, 
we show the impossibility of sublinear multicast-complexity BA. 
In this model, a message carries the true identity of the sender, i.e.,
the communication channel authenticates the sender,
but no other setup is available.

As mentioned in Section~\ref{sec:model}, proving the lower bound for Byzantine broadcast
makes it stronger (and applicable to BA).
Thus, we restate the lower bound (i.e., Theorem~\ref{thm:intromcclb}) for Byzantine broadcast below.

\begin{theorem}
In a plain authenticated channel model without setup assumptions,
no protocol can solve Byzantine broadcast with $\cc$ multicast complexity
with probability $p > 5/6$ under $\cc$ adaptive corruptions.
\end{theorem}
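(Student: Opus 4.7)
The plan is a three-world indistinguishability (``splitting'') argument in the spirit of Fischer--Lynch--Merritt, exploiting two facts: (i) without any setup (no PKI, no signatures), a corrupt party is free to equivocate across its authenticated pairwise channels, since nothing ties a message to its sender beyond the channel itself; and (ii) because the protocol has multicast complexity $\cc$, in any honest execution at most $\cc$ distinct nodes ever speak, so the adversary's $\cc$-corruption budget suffices to control every ``speaker'' in the attacks I construct. I would fix a partition of the non-sender nodes into two disjoint sets $A$ and $B$ and analyze three executions $W_0$, $W_1$, $W_{\rm split}$.

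In $W_0$ the sender is honest with input $0$, and the adversary corrupts only the (at most $\cc$) nodes in $B$ that would multicast in $W_{\rm split}$, instructing them to produce exactly the messages that honest $B$ would produce in $W_{\rm split}$; $W_1$ is symmetric with input $1$ and corruptions inside $A$. In $W_{\rm split}$ only the sender is corrupt, and it equivocates, sending to $A$ the messages an honest sender would send in $W_0$ and to $B$ the messages an honest sender would send in $W_1$. The three adversarial strategies are mutually consistent by a straightforward round-by-round fixed-point argument, and one verifies that $A$'s view in $W_{\rm split}$ is identically distributed to $A$'s view in $W_0$, and symmetrically for $B$ and $W_1$.

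By validity in $W_0$ together with the assumption $p > 5/6$, each honest $a \in A$ outputs $0$ in $W_0$ with probability $> 5/6$, and by indistinguishability it does so in $W_{\rm split}$ as well; symmetrically each honest $b \in B$ outputs $1$ in $W_{\rm split}$ with probability $> 5/6$. A union bound gives that both events occur simultaneously in $W_{\rm split}$ with probability at least $\tfrac{5}{6} + \tfrac{5}{6} - 1 = \tfrac{2}{3}$, which is a consistency violation --- yet $W_{\rm split}$ is supposed to succeed with probability $> 5/6$, i.e., err with probability $< \tfrac{1}{6}$; since $\tfrac{2}{3} > \tfrac{1}{6}$ this is the desired contradiction. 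The constant $5/6$ thus arises precisely from the slack in this triple union bound.

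The main obstacle is keeping each adversary strategy within the $\cc$-corruption budget. Naively, $|A|$ and $|B|$ can be much larger than $\cc$; the saving is that honest non-multicasters send nothing, so it suffices for the $W_0$ adversary to corrupt only those $B$-nodes that actually multicast in $W_{\rm split}$, of which there are at most $\cc$ by the multicast-complexity assumption. A secondary subtlety is that the views of the honest non-multicasters in $B$ differ across $W_0$ and $W_{\rm split}$ (in the former the sender is honest and sends an input-$0$ message stream; in the latter the sender is corrupt and delivers an input-$1$-consistent message stream to $B$); resolving this cleanly likely requires the corrupt $B$-multicasters in $W_0$ to also inject the simulated sender-to-$B$ messages toward their honest $B$-non-multicaster peers. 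The adversary's adaptivity is used to identify the multicaster set only after protocol randomness is unveiled, so the corruption counts apply at execution time rather than being fixed a priori, and the accounting must be done with care to ensure the total never exceeds $\cc$ in any of the three worlds.
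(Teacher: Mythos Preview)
Your three-world FLM-style construction has a genuine gap that your ``secondary subtlety'' paragraph flags but does not resolve. In $W_0$ the sender is honest with input $0$, so it \emph{multicasts} its input-$0$ messages to every node, including every node in $B$. In the plain authenticated-channel model, a receiver learns the true origin of each message, so the honest non-multicasting $B$-nodes in $W_0$ unavoidably see input-$0$ on the sender's channel --- your corrupt $B$-multicasters cannot forge the sender's identity, and anything they ``inject'' arrives on their own channels and is distinguishable. Hence the honest $B$-non-multicasters' views in $W_0$ genuinely differ from $W_{\rm split}$, and nothing stops them from multicasting in $W_0$. Any such multicast reaches $A$ and destroys the claimed identity of $A$'s views across $W_0$ and $W_{\rm split}$. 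Adaptively corrupting these extra speakers does not help either: their number is governed by a hybrid execution (input-$0$ from the sender, $W_{\rm split}$-style traffic from corrupt $B$), which is neither $W_0$ nor $W_{\rm split}$, so no $\cc$-bound applies to it; even under optimistic accounting you would need roughly $2\cc$ corruptions.

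The paper sidesteps exactly this obstacle by a different decomposition. Instead of splitting the non-sender nodes into two halves, it isolates a \emph{single} non-sender node $S$ and uses four worlds $W_{c,0},W_{c,1},W_{h,0},W_{h,1}$. In $W_{h,b}$ the adversary simulates the opposite-bit execution in its head and, for each of the at most $\cc$ simulated speakers $j$, adaptively corrupts $j$ --- but has $j$ continue to behave \emph{honestly to everyone} and additionally send the simulated-world messages \emph{point-to-point to $S$ only}. Thus every node other than $S$ sees a fully honest input-$b$ execution (so validity pins their output to $b$), while $S$ receives the union of both worlds' messages and cannot tell $W_{h,0}$ from $W_{h,1}$. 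The corruption budget is then exactly the number of speakers in one honest execution, hence $\le \cc$ with probability $p$, and a short union bound gives $\Pr[\text{consistency violation}]>1/3+5/6-1=1/6$. The crucial device you are missing is this ``honest to everyone, extra only to $S$'' behavior, which keeps the rest of the system on a single honest trajectory and avoids the uncontrolled hybrid views that your $B$-non-multicasters experience.
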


Although the lower bound is stated for multicast-based protocols,
the same bound applies to a more general class of protocols in which
at most $\cc$ nodes send messages with $p>5/6$ probability.
In addition, the lower bound holds even when assuming the existence of a random oracle or a memory-erasure model.  

\begin{figure}[tb]
  \centering
  \includegraphics[width=0.5\columnwidth]{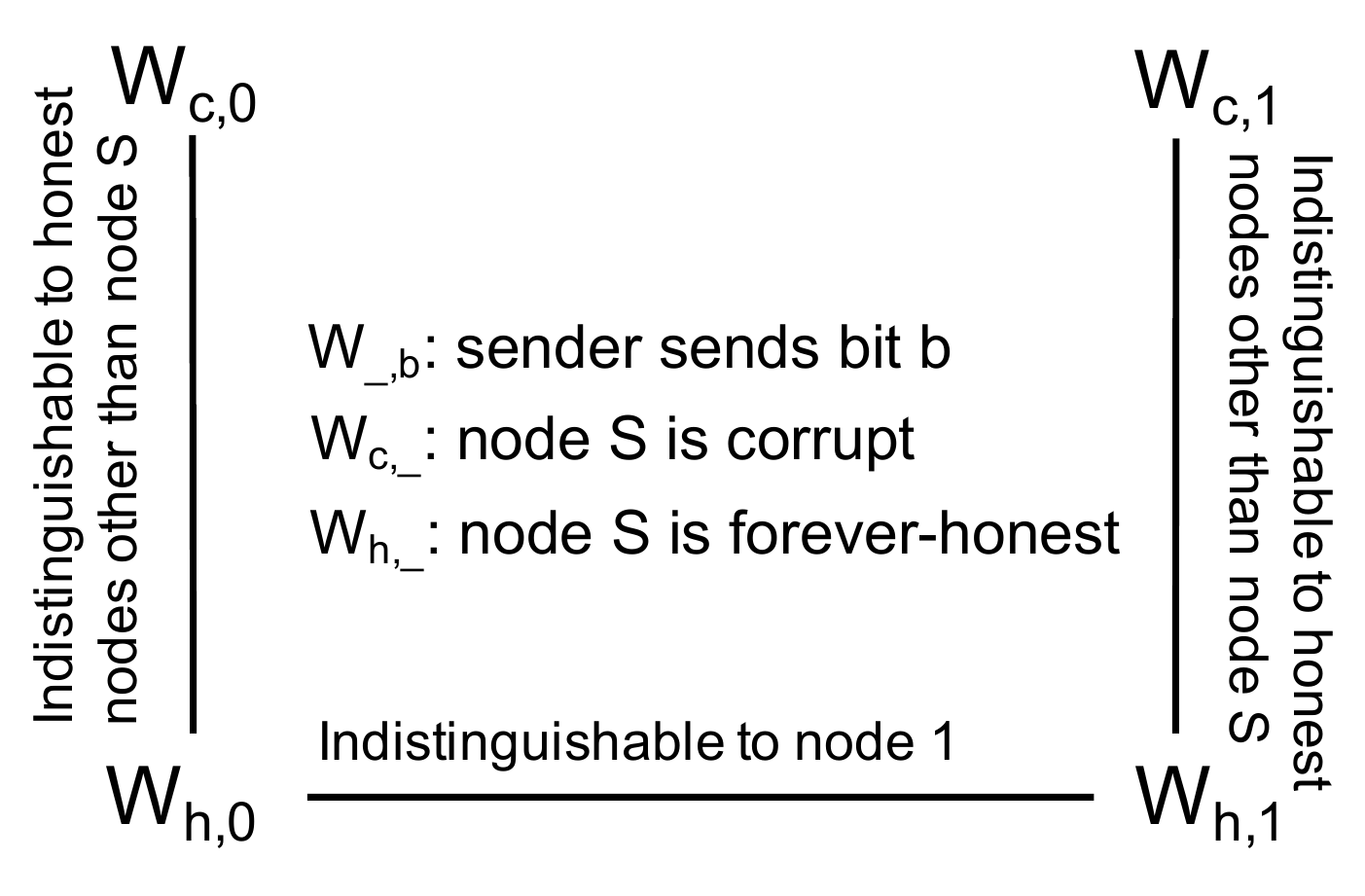}
  \caption{Relationships between different worlds in the
    sublinear multicast complexity without setup assumptions.}
  \label{fig:lowerbound}
\end{figure}

Our proof is inspired
by the classical techniques for proving consensus lower bounds  
in the authenticated channel 
model~\cite{consensuslb,byzgen,weakbyzgen}; however, we extend
known techniques in novel manners, particularly in the way we 
rely on the ability to make adaptive corruptions 
to complete the proof.

\begin{proof}
Suppose for the sake of contradiction that there exists
a protocol that solves Byzantine broadcast using $\cc$ 
multicast complexity with probability $p > 5/6$,
in the authenticated channel model without any trusted setup, 
and tolerating $\cc$ adaptive corruptions.

We focus on a special \nodeS that is not the designated sender.
We consider four worlds:
$\WC{0}$, $\WC{1}$, $\WH{0}$, and $\WH{1}$. 
In world $\WC{*}$, \nodeS is corrupt whereas in world
$\WH{*}$, \nodeS is forever-honest. The designated sender sends bit~$b$ 
in world $W_{*,b}$.

The high-level structure of the proof is depicted in Figure~\ref{fig:lowerbound}.
First, since the designated sender is honest in $\WC{b}$,
with probability $p>5/6$, honest nodes output $b$ in $\WC{b}$ to preserve validity. 
Next, we will show that world $\WC{b}$ and world $\WH{b}$ are indistinguishable to nodes that are forever-honest in both.
Hence, with probability $p>5/6$, these forever-honest nodes output $0$ in $\WH{0}$ and $1$ in $\WH{1}$.
Lastly, we will show that with a constant probability, 
an honest \nodeS cannot distinguish between $\WH{0}$ and $\WH{1}$,
leading to a consistency violation with probability $>1-p$ in one of the two worlds.
Note that the designated sender may be corrupted in $\WH{b}$,
so we need to show a violation of consistency, not validity.

\paragraph{World $\WC{b}$:}
In $\WC{b}$, \nodeS is (statically) corrupt. All other nodes
(including the designated sender) are honest and execute the
 protocol as specified.
The corrupt \nodeS simulates an execution in the $\WC{1-b}$ world
in its head for up to $\cc$ multicasts.
To elaborate, for every round, in addition to receiving messages from honest
nodes in $\WC{b}$, the corrupt \nodeS simulates the
receipt of messages multicast by all other nodes in world $\WC{1-b}$,
until $\cc$ multicasts have occurred in the simulated execution.
The corrupt \nodeS 
treats the received messages (from both the real world $\WC{b}$ and the
simulated world $\WC{1-b}$) as if they are from the same
execution. 
It then sends multicast messages as instructed by an honest
execution of the protocol.
When \nodeS multicasts a message in the real execution, its
messages arrive in both the real as well as the simulated execution.
We note that a simulation of $\WH{1-b}$ by \nodeS is possible
only due to the non-existence of a trusted setup.

Observe that in world $\WC{b}$, only \nodeS is corrupted and
the designated sender is honest.
Hence, by the validity guarantee of the Byzantine broadcast protocol, 
we have:
With probability $p > 5/6$, honest nodes in $\WC{b}$ output $b$.

\paragraph{World $\WH{b}$:}
In $\WH{b}$, all nodes are honest at the start of
the protocol and the adversary makes adaptive corruptions along the way.
The adversary simulates a protocol execution in world $\WH{1-b}$
in its head. Specifically, at the start of each round, the adversary 
simulates this round for all nodes \emph{except \nodeS} in $\WH{1-b}$ in its head, 
and checks to see which nodes will send a message in this
round of the simulated execution. Whenever a node $j$ in the simulated execution
wants to speak, if there have not been $\cc$ multicast messages
from nodes other than \nodeS in this execution,
the adversary adaptively corrupts node $j$
(unless it is already corrupt) in world $\WH{b}$.
If \nodeS multicasts messages in the real
execution, its messages arrive in both the real as well
as the simulated execution.

In a round, a corrupt node $j$ does the following. It sends all
messages as instructed for node $j$ by the protocol. In addition, it
sends the messages node $j$ in $\WH{1-b}$ would have sent to \nodeS in
this round; note that these messages are sent to \nodeS only and not to
anyone else. 

\paragraph{Indistinguishability between worlds $\WH{b}$ and
  $\WC{b}$ for forever-honest nodes.}
The corrupt \nodeS in $\WC{b}$ behaves exactly like the honest \nodeS in $\WH{b}$.
Corrupt nodes in $\WH{b}$ behave honestly towards forever-honest nodes other than \nodeS. 
Therefore, the views of the nodes that are forever-honest in both
 $\WH{b}$ and $\WC{b}$ are identically distributed.
Let $Y$ denote the event that these forever-honest nodes output $b$ in $\WH{b}$.
Based on the indistinguishability and the aforementioned validity guarantee in $\WC{b}$,
we have $\Pr[Y] \geq p > 5/6$.

\paragraph{Indistinguishability between worlds $\WH{b}$ and
  $\WH{1-b}$ for \nodeS.}
Observe that in both worlds, the honest \nodeS receives all the
messages in that world (through the honest protocol execution)
and messages from the first up to $\cc$ nodes in the other world
(through messages sent by 
adaptively corrupted nodes that would be sending honest messages in
the simulated world).
Thus, given that the honest and the simulated execution both
have $\cc$ multicast complexity, the view of \nodeS in worlds 
$\WH{b}$ and $\WH{1-b}$ is identically distributed.

More formally, let $A_r$ and $A_s$ denote the events that the real and simulated
executions respectively have $\cc$ multicast complexity. 
Recall that the protocol satisfies consistency, validity and
termination, and has $\cc$ multicast complexity with
probability $p$. Thus, $\Pr[A_r] \geq p$, $\Pr[A_s] \geq p$, and 
$\Pr[A_r \cap A_s] \geq \Pr[A_r] + \Pr[A_s] - 1 \geq 2p - 1.$

Let $X$ denote the event that \nodeS does not output 1.
Given that the view of \nodeS is identically distributed when the
honest and the simulated executions both have $\cc$
multicast complexity, without
loss of generality, we have $\Pr[X | A_r \cap A_s] \geq 1/2$, and

\begin{align*}
\Pr[X] &\geq \Pr[X \cap A_r \cap A_s] = \Pr[X | A_r \cap A_s] \cdot \Pr[A_r \cap A_s] \\ 
		&\geq \frac{1}{2} (2p-1) > 1/3.
\end{align*}

\paragraph{Consistency violation in $\WH{1}$.}
The probability that consistency of Byzantine
broadcast is violated is given by
\begin{align*}
\Pr[\text{consistency violation}] \geq \Pr[X \cap Y] > 1/3 + 5/6 - 1 = 1/6.  
\end{align*}
This contradicts the supposition that
the protocol solves Byzantine broadcast with $> 5/6$ probability.
\end{proof}




{
\bibliographystyle{spmpsci}
\bibliography{refs,crypto}
}

\section{Additional Details on Modeling }
\label{sec:model}

\subsection{Protocol Execution}
\label{sec:execmodel}
In the main body, we 
omitted some details of the execution model for ease
of understanding. We now 
explain these details that will be relevant to the formal proofs.

An external party called the environment and denoted $\algZ$  
provides inputs to honest nodes 
and receives outputs from the honest nodes.
As mentioned, the adversary, denoted $\algA$,
can {\it adaptively} corrupt nodes any time 
during the execution. 
All nodes
that have been corrupt are under the control of $\algA$, i.e.,
the messages they receive are forwarded to $\algA$, and $\algA$ 
controls what messages they will send once they become corrupt. 
The adversary $\algA$ and the environment $\algZ$
are allowed to freely exchange messages any time during the execution.
\ignore{
Henceforth, 
at any time in the protocol, nodes that remain honest so far
are referred to as {\it so-far-honest} nodes; nodes that remain
honest till the end of the protocol are referred to as {\it forever-honest}
nodes; nodes that become corrupt before the end of the protocol are referred
to as {\it eventually-corrupt} nodes.
}
Henceforth, we assume that all parties as well as $\algA$ and $\algZ$
are Interactive Turing Machines, and the execution
is parametrized by a security parameter $\kappa$ that is
common knowledge to all parties as well as $\algA$ and $\algZ$.

\paragraph{Notational conventions.}
Since all parties, including the adversary $\algA$
and the environment $\algZ$
are assumed to be non-uniform probabilitic polynomial-time (\ppt) 
Interactive Turing Machines (ITMs), 
protocol execution is assumed to be probabilistic in nature.
We would like to ensure 
that certain security properties such as consistency
and liveness hold for almost all execution traces, assuming
that both $\algA$ and $\algZ$ are polynomially bounded. 

In our subsequent proofs we sometimes use the notation 
$\view$ to denote a randomly sampled execution. 
\ignore{
Henceforth in the paper, we use the notation
$\view \leftarrow \exec^\Pi(\algA, \algZ, \kappa)$ to denote
a sample of the 
randomized execution of the protocol $\Pi$
with $\algA$ and $\algZ$, and security parameter $\kappa \in \N$.
}
The randomness in the execution comes from honest nodes' randomness, 
\algA, and \algZ, and $\view$ 
is sometimes also referred to as an execution trace or a sample path.
We would like that the fraction
of sample paths that fail to satisfy relevant security properties 
be negligibly small in the security parameter $\kappa$.

\subsection{Ideal Mining Functionality $\Fsort$}
\label{sec:miningfunc}
Earlier 
we used \Fsort to describe our ideal-world protocols.
For preciseness we now spell out the details of \Fsort. 
\ignore{
we described how to leverage
cryptographic building blocks such as PRFs and NIZKs to realize
random eligibility election.
For all our protocols, it would be convenient to describe 
them assuming such eligibility election is a blackbox
primitive. We thus introduce an ideal 
functionality called \Fsort which, informally speaking, captures
the cryptographic procedures of random eligibility selection.
One can imagine
that \Fsort is a trusted party such that whenever a node
attempts to mine a ticket for a message type $\msg$, \Fsort
flips a random coin with an appropriate probability 
to decide if this mining attempt is successful.
\Fsort stores 
the results of all previous coin flips, such that 
if a node performs another mining attempt for the same \msg later,
the same result will be used. 

Henceforth in our paper, we will first describe all of our 
protocols
in an ideal world assuming the existence of such a trusted party \Fsort 
(also referred to as \Fsort-hybrid protocols in the cryptography
literature~\cite{uc,Canetti2000}).
Later in Appendix~\ref{sec:real}, we will show that 
using the cryptographic techniques described in Section~\ref{sec:subquad}, 
all of our \Fsort-hybrid protocols  
can be instantiated in a real world where \Fsort does not exist.
}

\begin{figure}[t]
\begin{boxedminipage}{\textwidth}
\begin{center}
$\Fsort(1^\kappa, \mathfrak{p})$
\end{center}
The function $\mathfrak{p} : \{0,1\}^* \rightarrow [0,1]$
maps each message to some success probability.

\begin{itemize}[leftmargin=5mm,itemsep=1pt,topsep=1pt]
\ignore{
\item 
For each $\msg$ and $i$, internally produces the following random variable: 
\begin{itemize}[leftmargin=5mm,itemsep=1pt]
\item 
let 
$\Coin[\msg, i] := {\sf Bernoulli}(\mathfrak{p}(\msg))$;
\end{itemize}
}

\item 
On receive ${\tt mine}(\msg)$ from node $i$ for the first time: 
let $\Coin[\msg, i] := {\sf Bernoulli}(\mathfrak{p}(\msg))$ and 
return $\Coin[\msg, i]$.
\item 
On receive ${\tt verify}(\msg, i)$: 
if ${\tt mine}(\msg)$ has been called by node~$i$, 
return $\Coin[\msg, i]$; else return $0$.
\end{itemize}
\end{boxedminipage}
\caption{The mining ideal functionality \Fsort.}
\label{fig:fmine}
\end{figure}

\paragraph{$\Fsort$ ideal functionality.}
As shown in Figure~\ref{fig:fmine}, the \Fsort ideal functionality
has two activation points: 
\begin{itemize}[leftmargin=5mm,itemsep=1pt]
\item 
Whenever a node $i$ calls ${\tt mine}(\msg)$
for the first time, \Fsort flips a random coin with appropriate probability to decide
if node $i$ has successfully mined a ticket for $\msg$.
\item 
If node $i$ has called ${\tt mine}(\msg)$
and the attempt is successful, anyone can then call
${\tt verify}(\msg, i)$
to ascertain that indeed $i$ has mined a ticket for $\msg$.
\end{itemize}

Recall in our scheme, different types of messages
are associated with different probabilities, and we assume that this is
hard-wired in $\Fsort$ with the mapping $\mathfrak{p}$  
that maps each message type to an appropriate probability (see Figure~\ref{fig:fmine}). 
This \Fsort functionality is secret since if an so-far-honest node $i$ has 
not attempted to mine a ticket for $\msg$, then  
no corrupt node can learn whether  
$i$ is in the committee corresponding to $\msg$.

\ignore{
Simply put, 
each node has some chance $\mathcal{P}(\msg)$ of 
mining a vote for 
some message $\msg$, and whether the voting attempt is successful 
is kept secret  
from other nodes a-priori, until the node actually calls $\Fsort.{\tt mine}$ 
on the message $\msg$ --- at this point, if the mining attempt is successful,
\Fsort then allows other nodes to verify the successful 
mining attempt\footnote{Later, we will directly realize the \Fsort-hybrid 
scoring agreement protocol
rather than realize $\Fsort$ itself. Thus we 
need not be concerned about the realizability of $\Fsort$ itself.
\hubert{This seems strange.  It's better realize $\Fsort$ itself.} }.

\begin{enumerate}[leftmargin=5mm]
\item 
Upon receiving the call $\Fsort.{\tt mine}(\msg)$
for some message $\msg$ from node~$i$,
$\Fsort$ flips a random coin (with probability $p$) to  
determine whether node $i$ successfully mines a vote 
for the message $\msg$.

\item 
$\Fsort$ allows 
anyone to ascertain that a  
node has successfully mined a vote for the message $\msg$.
Specifically,  $\Fsort.{\tt verify}(\msg,i)$ returns \emph{true}
iff node $i$ has called $\Fsort.{\tt mine}(\cdot)$ on $\msg$ and the 
mining attempt was successful.

If node~$i$ has not queried 
$\Fsort.{\tt mine}(\cdot)$ on $\msg$, 
then others are not able to learn  
whether node~$i$ will successfully mine a vote for the message $\msg$.
For this 
reason, the vote-mining functionality is said to be secret, i.e., an adversary 
cannot learn future mining outcomes for honest nodes.

If the probability~$p$ is clear from context,
the subscript $p$ is sometimes omitted for simpler notation.
\end{enumerate}
}


\section{Instantiating \Fsort in the Real World} 
\label{sec:real}


So far, all our protocols have assumed
the existence of an \Fsort 
ideal functionality.
In this section, we describe how to instantiate
the protocols in the real world 
(where \Fsort does not exist) using cryptography.
Technically 
we do not directly realize the ideal functionality \Fsort 
in the sense of Canetti~\cite{uc}
--- instead, we describe a real-world protocol
that preserves all the security properties of the \Fsort-hybrid protocols. 

\subsection{Preliminary: Adaptively Secure  
Non-Interactive Zero-Knowledge Proofs}

\hidex{
\elaine{double check the defns, copied from elsewhere}
}
We use 
$f(\k) \approx g(\k)$ to mean
that there exists a negligible function $\nu(\k)$ 
such that $|f(\k) - g(\k)| < \nu(\k)$.

A non-interactive proof system 
henceforth denoted $\nizk$
for  an NP language $\lang$
consists of  the following algorithms.

\begin{itemize}[leftmargin=5mm,itemsep=0pt]
\item
$\crs  \leftarrow {\sf Gen}(1^\secparam, \lang)$:
Takes in a security parameter $\secparam$, 
a description of the language $\lang$,
and generates a common reference string $\crs$.
\item
$\pi \leftarrow {\sf P}(\crs, \stmt, w)$:
Takes in $\crs$, a statement $\stmt$, a witness $w$ 
such that $(\stmt, w) \in \lang$, and produces a proof $\pi$.
\item
$ b \leftarrow {\sf V}(\crs, \stmt, \pi)$:
Takes in a $\crs$, a statement $\stmt$, and a proof $\pi$,
and outputs $0$~(reject) or $1$~(accept).

\ignore{
\item
$ (\crs_0, \trap_0) \leftarrow {\sf Gen}_0(1^\k, \lang) $:
Generates a simulated common reference string $\crs_0$ and a trapdoor $\trap_0$.

\item
$ \pi \leftarrow {\sf P}_0(\crs_0, \trap_0, \stmt) $: Uses trapdoor $\trap_0$ to produce a proof $\pi$ without needing a witness.

\item 
$ (\crs_1, \trap_1) \leftarrow {\sf Gen}_1(1^\k, \lang) $:
Generates a simulated common reference string $\crs_1$ and an
extraction trapdoor $\trap_1$
for knowledge extraction as defined later.

\item 
$w \leftarrow {\sf Explain}(1^\k, \crs_1, \trap_1, \stmt, \pi)$:
Given an extraction CRS $\crs_1$, an extraction trapdoor $\trap_1$
a statement $\stmt$ and a proof $\pi$, outputs a witness $w$. 
}
\end{itemize}

\paragraph{Perfect completeness.}
A non-interactive proof system is said to be perfectly complete,
if an honest prover with a valid witness can always convince an honest verifier.
More formally, for any $(\stmt, w) \in \lang$, we have that
\[
\Pr\left[
\crs \leftarrow {\sf Gen}(1^\k, \lang), \
\pi \leftarrow {\sf P}(\crs, \stmt, w): 
{\sf V}(\crs, \stmt, \pi)   = 1
\right] = 1
\]

\paragraph{Non-erasure computational zero-knowledge.}
Non-erasure zero-knowledge requires that under a simulated
CRS, there is a simulated prover that can produce proofs 
without needing the witness. Further, upon obtaining
a valid witness to a statement a-posteriori, the simulated
prover can explain the simulated NIZK with the correct witness.

We say that a proof system $(\keygen, {\sf P}, {\sf V})$
satisfies non-erasure computational zero-knowledge 
iff there exists a probabilistic polynomial time algorithms
$(\keygen_0, {\sf P}_0, {\sf Explain})$ 
such that 
\[
\Pr\left[\crs \leftarrow \keygen(1^\k), \algA^{{\sf Real}(\crs, \cdot, \cdot)}(\crs) = 1\right]
\approx
\Pr\left[(\crs_0, \trap_0) \leftarrow \keygen_0(1^\k), 
\algA^{{\sf Ideal}(\crs_0, \trap_0, \cdot, \cdot)}(\crs_0) = 1\right],
\]
where ${\sf Real}(\crs, \stmt, w)$
runs the honest prover ${\sf P}(\crs, \stmt, w)$ with randomness
$r$ and obtains the proof $\pi$, it then 
outputs $(\pi, r)$; 
${\sf Ideal}(\crs_0, \trap_0, \stmt, w)$
runs the simulated prover 
$\pi \leftarrow {\sf P}_0(\crs_0, \trap_0, \stmt, \rho)$ 
with randomness $\rho$
and without a witness,
and then runs $r \leftarrow {\sf Explain}(\crs_0, \trap_0, \stmt, w, \rho)$ 
and outputs $(\pi, r)$.


\ignore{
\paragraph{Computational zero-knowledge.}
Informally, an NIZK system is 
computationally zero-knowledge
if the proof does not reveal any information about the witness
to any polynomial-time adversary.
More formally,
an NIZK system is said to have computational zero-knowledge, if
for all non-uniform \ppt adversary $\algA$,
\[
\begin{array}{ll}
 \Pr
\left[
\crs \leftarrow \keygen(1^\k, \L):
\algA^{{\sf P}(\crs, \cdot, \cdot)}(\crs)
=1
\right]
\approx
 \Pr
\left[
(\wtcrs, \trap) \leftarrow \overline{\sf Gen}(1^\k, \L):
\algA^{\overline{\sf P}_1(\wtcrs, \trap, \cdot, \cdot)}(\wtcrs)
=1
\right]
\end{array}
\]
In the above, ${\sf P}_1(\wtcrs, \trap, \stmt, w)$  
verifies that $(\stmt, w) \in \L$, and if so, 
outputs $\overline{\sf P}(\wtcrs, \trap, \stmt)$ which simulates
a proof without knowing a witness. 
Otherwise, if $(\stmt, w) \notin \L$, the experiment aborts.

\paragraph{Computational soundness.}
We say that an NIZK scheme is computationally 
sound
iff for any non-uniform \ppt adversary 
$\algA$, 
it holds that
\[
 \Pr
\left[
\crs \leftarrow \keygen(1^\k, \L), (\stmt, \pi) \leftarrow \algA(\crs):
{\sf V}(\crs, \stmt, \pi) = 1 
\text{ but }
\stmt \notin \lang
\right]
\approx 0
\]
}

\paragraph{Perfect knowledge extration.}
We say that 
 a proof system $(\keygen, {\sf P}, {\sf V})$
satisfies perfect knowledge extraction, 
if there 
exists probabilistic polynomial-time 
algorithms $(\keygen_1, {\sf Extr})$, 
such that for all (even unbounded) adversary $\algA$, 
\[
\Pr\left[\crs \leftarrow \keygen(1^\k): \algA(\crs) = 1\right]
 = 
\Pr\left[(\crs_1, \trap_1) \leftarrow \keygen_1(1^\k): \algA(\crs_1) = 1\right],
\]
and moreover,  
\[
\Pr\left[
(\crs_1, \trap_1) \leftarrow \keygen_1(1^\k);
(\stmt, \pi) \leftarrow \algA(\crs_1); 
w \leftarrow {\sf Extr}(\crs_1, \trap_1, \stmt, \pi):
\begin{array}{l}
{\sf V}(\crs_1, \stmt, \pi) = 1\\
{\rm but}  \ 
(\stmt, w) \notin \mcal{L}
\end{array}
\right] = 0
\]

\subsection{Adaptively Secure Non-Interactive Commitment Scheme}

An adaptively secure non-interactive commitment scheme 
consists of the following algorithms:
\begin{itemize}[leftmargin=5mm,itemsep=0pt]
\item
$\crs  \leftarrow {\sf Gen}(1^\secparam)$:
Takes in a security parameter $\secparam$, 
and generates a common reference string $\crs$.
\item
$C \leftarrow \comm(\crs, v, \rho)$:
Takes in $\crs$, a value $v$, and a random string $\rho$, 
and outputs a committed value $C$.
\item
$b \leftarrow {\sf ver}(\crs, C, v, \rho)$:
Takes in a $\crs$, 
a commitment $C$, a purported opening $(v, \rho)$, 
and outputs $0$~(reject) or $1$~(accept).
\end{itemize}

\paragraph{Computationally hiding under selective opening.}
We say that a commitment scheme $(\keygen, \comm, {\sf ver})$
is computationally hiding under selective opening, 
iff there exists a probabilistic polynomial time algorithms
$(\keygen_0, {\comm}_0, {\sf Explain})$ 
such that 
\[
\Pr\left[\crs \leftarrow \keygen(1^\k), 
\algA^{{\sf Real}(\crs, \cdot)}(\crs) = 1\right]
\approx
\Pr\left[(\crs_0, \trap_0) \leftarrow \keygen_0(1^\k), 
\algA^{{\sf Ideal}(\crs_0, \trap_0, \cdot)}(\crs_0) = 1\right] 
\]
where ${\sf Real}(\crs, v)$
runs the honest algorithm $\comm(\crs, v, r)$ with randomness
$r$ and obtains the commitment $C$, it then 
outputs $(C, r)$; 
${\sf Ideal}(\crs_0, \trap_0, v)$
runs the simulated algorithm 
$C \leftarrow {\sf comm}_0(\crs_0, \trap_0, \rho)$ 
with randomness $\rho$ and without $v$, 
and then runs $r \leftarrow {\sf Explain}(\crs_0, \trap_0, v, \rho)$ 
and outputs $(C, r)$.

\paragraph{Perfectly binding.}
A commitment scheme 
is said to be perfectly binding iff for every $\crs$ in the support
of the honest CRS generation
algorithm, 
there does not exist $(v, \rho) \neq (v', \rho')$ 
such that 
$\comm(\crs, v, \rho) = \comm(\crs, v', \rho')$.

\begin{theorem}[Instantiation of our NIZK and commitment schemes~\cite{adaptivenizk}]
Assume standard bilinear group assumptions. 
Then, there exists 
a proof system that satisfies perfect completeness, 
non-erasure 
computational zero-knowledge, and perfect knowledge extraction.
Further, there exist 
a commitment scheme that is perfectly binding and 
computationally hiding under selective opening.
\end{theorem}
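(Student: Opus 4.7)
The plan is to instantiate both primitives from the adaptive NIZK toolkit of~\cite{adaptivenizk}, which builds on Groth-Sahai proof systems over bilinear groups. The unifying idea is a \emph{dual-mode} common reference string: a real binding-and-extractable mode and a simulated hiding-and-equivocable mode that are computationally indistinguishable under a standard bilinear assumption such as SXDH or Decisional Linear. Both the proof system and the commitment scheme are obtained by plugging the appropriate commitment key into the same template.

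For the proof system, I would let $\keygen$ sample the CRS in the binding mode. A proof of $\stmt$ then consists of Groth-Sahai commitments to the witness together with pairing-product equations certifying circuit satisfaction; perfect completeness is immediate from algebraic verification. Perfect knowledge extraction follows by defining $\keygen_1$ to output a same-distribution CRS together with the decryption trapdoor for the commitment key, so that ${\sf Extr}$ decrypts each commitment and recovers the witness --- this succeeds for every valid proof because binding-mode commitments are uniquely decodable. Non-erasure zero-knowledge is obtained by defining $\keygen_0$ to sample the CRS in the hiding mode with an equivocation trapdoor; the simulator ${\sf P}_0$ commits to dummy values and fakes the equations using the trapdoor, and then ${\sf Explain}$, on later receiving the real witness $w$, produces randomness that opens the simulated commitments to $w$. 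Because hiding-mode commitments are perfectly equivocable, the simulated pair $(\pi, r)$ is identically distributed to an honest $(\pi, r)$ in that mode, and computational indistinguishability of the two CRS modes closes the reduction to the bilinear assumption.

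For the commitment scheme I would apply the same dual-mode idea directly at the commitment layer: $\comm$ uses a binding-mode key (giving perfect binding), while the simulated ${\sf comm}_0$ uses a hiding-mode key whose trapdoor enables equivocation. Computational hiding under selective opening then follows by a two-step hybrid --- first switch the CRS to hiding mode (computational, by the bilinear assumption), then equivocate using the trapdoor (statistical) so that the revealed randomness is consistent with the queried value.

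The main obstacle is engineering the ${\sf Explain}$ algorithms so that their output is \emph{jointly} distributed with the simulated commitment or proof exactly as honest randomness is with an honest object, not merely marginally; this is precisely the non-erasure / adaptive guarantee needed by our protocols, since the simulator must be able to explain past messages after the adversary adaptively corrupts the sender. The Groth-Sahai commitments in hiding mode are perfectly equivocable via a linear-algebra trapdoor, which makes this joint-distribution property achievable and is exactly what~\cite{adaptivenizk} engineers. Once established, every remaining requirement reduces either to a routine algebraic check in the pairing group or to a single application of the bilinear hardness assumption.
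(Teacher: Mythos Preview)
Your proposal is correct and consistent with the paper's approach, though far more detailed: the paper's own proof is simply a two-sentence citation to Groth et al.~\cite{adaptivenizk}, noting that their building block called a \emph{homomorphic proof commitment scheme} yields the NIZK, and that the same building block gives the commitment scheme with the required properties. Your dual-mode CRS sketch (binding/extractable versus hiding/equivocable, with an ${\sf Explain}$ algorithm furnishing the non-erasure guarantee) is precisely the mechanism underlying that building block, so you have effectively unpacked the citation rather than taken a different route.
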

\begin{proof}
The existence of such a NIZK scheme was shown by  
Groth et al.~\cite{adaptivenizk}
via a building block that they called
\emph{homomorphic proof commitment scheme}.
This building block can also be used to achieve a commitment scheme 
with the desired properties.
\end{proof}

\subsection{NP Language Used in Our Construction}
In our construction, we will use the following NP language $\lang$.
A pair $(\stmt, w) \in \lang$ iff 
\begin{itemize}[leftmargin=5mm,itemsep=0pt]
\item
parse 
$\stmt := (\rho, c, \crs_{\rm comm}, \msg)$, parse $w := (\sk, s)$;
\item
it must hold that 
$c = {\sf comm}(\crs_{\rm comm}, \sk, s)$, 
and $\prf_{\sk}(\msg) = \rho$.

\end{itemize}

\subsection{Compiler from \Fsort-Hybrid Protocols to Real-World Protocols}

Our real-world protocol will remove
the \Fsort oracle by leveraging cryptographic building blocks
including a pseudorandom function family,
a non-interactive zero-knowledge proof system
that satisfies computational zero-knowledge
and computational soundness,
and a perfectly binding and computationally hiding commitment scheme.

Earlier in Section~\ref{sec:intro}, we have described the intuition
behind our approach. Hence in this section we directly provide
a formal description of how to compile our \Fsort-hybrid protocols
into real-world protocols using cryptography.
This compilation works for our previous 
\Fsort-hybrid protocol described in 
Section~\ref{sec:sync12}.
\ignorepsync{and~\ref{sec:psync13}
respectively.}
The high-level idea is to realize an adaptively secure
VRF from adaptively secure PRFs and NIZKs:

\begin{itemize}[leftmargin=5mm]
\item 
{\bf Trusted PKI setup.} 
Upfront, a trusted party 
runs the CRS generation 
algorithms of the commitment and the NIZK scheme
to obtain $\crs_{\rm comm}$ and $\crs_{\rm nizk}$.
It then chooses a secret PRF key for every node,
where the $i$-th node has key $\sk_i$.
It publishes $(\crs_{\rm comm}, \crs_{\rm nizk})$
as the public parameters, and 
each node $i$'s public key denoted $\pk_i$ is computed as a commitment of $\sk_i$
using a random string $s_i$.
The collection of all users' public keys is published to form the PKI, 
i.e., the mapping from each node~$i$ to its public key~$\pk_i$ is public information.
Further, each node $i$ is given the secret key $(\sk_i, s_i)$.

\ignore{
During the registration phase, every (honest) node
registers a public key with the PKI as follows.
The node $i$ first calls the PRF's ${\sf Gen}(1^\kappa)$
algorithm to generate a secret key $\sk_i$.
It then chooses a random value $s_i$, and computes
$c_i := \comm(\sk_i, s_i)$
and registers the resulting $c_i$ with the PKI.
\item 
{\bf Common reference string.}
{\it After} all nodes register their public keys with the PKI,
call $\nizk.{\sf Gen}$ which outputs a common reference
string denoted $\nizk.{\sf crs}$.
Further,
randomly sample from an appropriate domain a random string
denoted $K$, and sample an honest encryption public key 
denoted $\epk$ for a perfectly correct
and semantically secure public-key encryption scheme. 
The tuple $(\nizk.{\sf crs}, K, \epk)$
is output as the common reference string of the protocol.

\elaine{notation conflict: K}
}

\item 
{\bf Instantiating $\mcal{F}_{\rm mine}.{\tt mine}$}.
Recall that in the ideal-world protocol  
a node $i$ calls $\mcal{F}_{\rm mine}.{\tt mine}(\msg)$
to mine a vote for a message $\msg$.
Now, instead, the node $i$ calls $\rho := {\sf PRF}_{\sk_i}(\msg)$, 
and computes
the NIZK proof 
$$\pi := \nizk.{\sf P}((\rho, \pk_i, \crs_{\rm comm}, \msg), (\sk_i, s_i))$$
where $s_i$
the randomness used in committing $\sk_i$ during the trusted setup.
Intuitively, this zero-knowledge proof proves that the evaluation outcome $\rho$ 
is correct w.r.t. the node's public key (which is a commitment
of its secret key).

The mining attempt for $\msg$ 
is considered successful if
$\rho < D_p$
where $D_p$ is an appropriate difficulty parameter 
such that 
any random string of appropriate length 
is less than $D_p$ with probability $p$ --- recall that the parameter $p$ is
selected in a way that depends on the message $\msg$ being ``mined''.
\item 
{\bf New message format.}
Recall that earlier in our \Fsort-hybrid protocols, 
every message multicast by a so-far-honest node $i$
must of one of the following forms:
\begin{itemize}[leftmargin=5mm,itemsep=1pt]
\item 
Mined messages of the form 
$(\msg, i)$ where node $i$ has successfully called $\Fsort.{\tt mine}(\msg)$;
For example, in the synchronous 
honest majority protocol (Section~\ref{sec:sync12}),
$\msg$ 
can be of the form $({\tt T}, r, b)$ where ${\tt T} \in \{{\tt Propose}, 
{\tt Vote}, {\tt Commit}, {\tt Status}\}$, 
$r$ denotes an epoch number, and $b \in \{0, 1, \bot\}$;
or of the form $({\tt Terminate}, b)$.
\elaine{i removed signed messages, our ideal protocol does not have sig?}
\item 
Compound messages, i.e., 
a concatenation of the above types of messages.
\end{itemize}

For every {\it mined} message $(\msg, i)$ that is either stand-alone 
or contained in a compound message, 
in the real-world protocol, 
we rewrite
$(\msg, i)$ as $(\msg, i, \rho, \pi)$ where the terms 
$\rho$ and $\pi$ are defined in the most natural manner:
\begin{itemize}[leftmargin=5mm]
\item 
If $(\msg, i)$ is part of a message that a 
so-far-honest node $i$ wants to multicast,
then the terms $\rho$ and $\pi$ are those
generated by $i$ 
in place of calling $\Fsort.{\tt mine}(\msg)$ 
in the real world (as explained above);
\item 
Else, if 
$(\msg, i)$ is part of a message that a 
so-far-honest node $j \neq i$ wants to multicast, 
it must be that $j$ has received a valid 
real-world tuple $(\msg, i, \rho, \pi)$
where validity will be defined shortly, 
and thus $\rho$ and $\pi$ are simply the terms 
contained in this tuple.
\end{itemize}


\item
{\bf Instantiating $\mcal{F}_{\rm mine}.{\tt verify}$.}
In the ideal world, a node would call $\mcal{F}_{\rm mine}.{\tt verify}$
to check the validity of mined messages upon receiving them (possibly
contained in compound messages).
In the real-world protocol, we perform the following instead:
upon receiving the mined message  
$(\msg, i, \rho, \pi)$ that is possibly contained in compound
messages,  
a node can verify the message's validity 
by checking:
\begin{enumerate}[leftmargin=5mm,itemsep=1pt]
\item 
$\rho < D_p$ where $p$ is an appropriate
difficulty parameter that depends on the type of the mined message; 
and
\item 
$\pi$ is indeed a valid NIZK for the statement 
formed by the tuple $(\rho, \pk_i,  \crs_{\rm comm}, \msg)$.
The tuple is discarded unless both checks pass.
\end{enumerate}
\end{itemize}

\subsection{Main Theorems for Real-World Protocols}
After applying the above compiler to our \Fsort-hybrid 
protocols described in Section~\ref{sec:sync12}.
\ignorepsync{and
\ref{sec:psync13}} we obtain our real-world protocol
\ignorepsync{
for the following settings respectively: 
1) synchronous network with $\frac{1}{2} - \epsilon$ fraction 
corrupt, and 2) partially synchronous network
with $\frac{1}{3} - \epsilon$ fraction corrupt.
}
In this section, we present our main theorem statements
for these three settings.  The proofs for these theorems can be derived
by combining the proofs in 
Section~\ref{sec:sync12} and \ref{sec:psync13}
as well as those in the following section, i.e., Appendix~\ref{sec:realproofs}
where will show that the relevant security properties
are preserved in the real world as long as the cryptographic building
blocks are secure.

In theorem statement below, when we say
that ``{\it assume that the cryptographic building blocks employed 
are secure}'', we formally mean that
1) the pseudorandom function family employed is secure;
2) the non-interactive zero-knowledge proof system
that satisfies non-erasure computational zero-knowledge
and perfect knowledge extraction; 
3) the commitment scheme is 
computationally hiding under
selective opening and 
perfectly binding;
and 4) the signature scheme is secure (if relevant).

\begin{theorem}[Sub-quadratic BA under Synchrony]
Let $\pi_{\rm sync}$
be the protocol obtained by applying the above compiler
to the protocol in Section~\ref{sec:sync12}, 
and assume that the cryptographic building blocks employed
are secure.
Then, for any arbitrarily small positive constant
$\epsilon$, any $n \in \N$,
$\pi_{\rm sync}$ satisfies consistency and validity, 
and tolerates $f<(\frac12-\epsilon)n$ adaptive corruptions,
except for $\negl(\kappa)$ probability.
Further, $\pi_{\rm sync}$ achieves 
expected constant round and 
$\chi \cdot \poly\log(\kappa)$ multicast complexity. 
In the above, $\chi$ is a security parameter
related to the hardness of the cryptographic building blocks;
$\chi=\poly(\kappa)$ under standard cryptographic assumptions and 
$\chi=\poly\log(\kappa)$ if we assume sub-exponential security of the cryptographic primitives employed.
\end{theorem}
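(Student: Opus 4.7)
The plan is to argue security of the compiled real-world protocol $\pi_{\rm sync}$ by reduction to the security of the underlying \Fmine-hybrid protocol from Section~\ref{sec:sync12}, whose consistency, validity, round and multicast complexity were already established (Theorems~\ref{thm:safety}--\ref{thm:efficiency}). Concretely, for every non-uniform \ppt adaptive adversary $\algA_{\rm real}$ attacking $\pi_{\rm sync}$ I will exhibit an ideal-world adversary $\algA_{\rm ideal}$ in the \Fmine-hybrid world such that the view of $(\algA_{\rm real}, \algZ)$ in $\pi_{\rm sync}$ is computationally indistinguishable from the view of $(\algA_{\rm ideal}, \algZ)$ in the \Fmine-hybrid protocol. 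Since all relevant security properties are efficiently checkable on a given transcript, indistinguishability plus the ideal-world guarantees translates into real-world guarantees up to a negligible additive loss in $\kappa$.

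I would establish indistinguishability via a sequence of hybrids. \emph{Hybrid~0} is the real execution. In \emph{Hybrid~1} every honest NIZK is produced by the non-erasure zero-knowledge simulator $({\sf Gen}_0^{\rm nizk}, {\sf P}_0, {\sf Explain})$, using ${\sf Explain}$ to supply a consistent witness-randomness pair if a so-far-honest node is later corrupted; indistinguishability follows from non-erasure computational zero-knowledge. In \emph{Hybrid~2} every honest node's public-key commitment is produced by the commitment simulator and later explained on corruption; indistinguishability follows from computational hiding under selective opening. In \emph{Hybrid~3} the PRF evaluations $\rho = \prf_{\sk_i}(\msg)$ of so-far-honest nodes are replaced by truly uniform strings that are sampled lazily and cached per $(i,\msg)$; indistinguishability follows by a standard PRF hybrid (one PRF key per node), where the reduction never needs to know $\sk_i$ explicitly thanks to Hybrids~1 and~2. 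Finally, in \emph{Hybrid~4} a uniform $\rho$ is interpreted as a successful ``mining'' outcome iff $\rho < D_p$: this is exactly a Bernoulli$(p)$ coin (up to a negligible rounding error in $1/|{\rm range}|$), and the cached table matches the ideal \Fmine behavior. Hence Hybrid~4 is statistically close to the \Fmine-hybrid execution.

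The main obstacle is handling the adaptive corruption of a node $i$ \emph{after} it has already produced some mined messages: the simulator must, at corruption time, reveal a secret key $(\sk_i, s_i)$ that is consistent with (i)~the simulated commitment $\pk_i$ and (ii)~\emph{every} $\rho$-value previously exposed through an honest mining attempt of $i$. Condition (i) is handled by the commitment ${\sf Explain}$ algorithm, and condition (ii) is handled precisely by the PRF hybrid (Hybrid~3): before the switch, $\rho$ was the PRF image under the real $\sk_i$; after the switch the reduction is no longer bound to any particular key, so upon corruption it samples a fresh $\sk_i$ and programs consistency via ${\sf Explain}$ on the NIZK. This is exactly the reason adaptive (i.e., selective-opening) security of NIZK, PRF and commitment is needed. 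Symmetrically, for \emph{corrupt} nodes the perfectly binding commitment fixes a unique $\sk_i$ from $\pk_i$, and perfect knowledge extraction pulls a valid witness $(\sk_i, s_i)$ out of every accepting NIZK, forcing $\rho = \prf_{\sk_i}(\msg)$; hence a corrupt node cannot produce two distinct valid mining tickets for the same $\msg$, matching the behavior of $\Fmine.\ver$.

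Putting it together, in Hybrid~4 the transcript observed by $\algA_{\rm real}$ is distributed identically (up to negligible statistical error) to the \Fmine-hybrid execution with an adversary $\algA_{\rm ideal}$ that makes exactly the same corruption pattern. Applying Theorems~\ref{thm:safety},~\ref{thm:validity},~and~\ref{thm:efficiency} to this ideal-world execution yields consistency, validity, expected constant round complexity, and expected $O(\lambda)$ multicast complexity, all except with probability $\exp(-\Omega(\lambda))$. Setting $\lambda = \Theta(\poly\log \kappa)$ so that $\exp(-\Omega(\lambda)) = \negl(\kappa)$, and instantiating the bit-length of each message as $\poly(\chi)$ (dominated by the NIZK proof size) gives the claimed $\chi \cdot \poly\log(\kappa)$ multicast complexity in bits. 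The two parameter regimes for $\chi$ ($\poly(\kappa)$ under standard assumptions, $\poly\log(\kappa)$ under subexponential assumptions) follow from the size/security trade-off of the chosen NIZK, commitment and PRF instantiations.
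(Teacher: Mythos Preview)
Your high-level plan---reduce to the \Fmine-hybrid analysis via a hybrid argument using non-erasure zero-knowledge, selective-opening-hiding commitments, and PRF security---matches the paper's strategy, and the treatment of corrupt nodes via perfect extraction and perfect binding is correct. However, the step from your Hybrid~2 to Hybrid~3 has a genuine gap. In Hybrid~3 you replace the PRF outputs $\rho$ of \emph{so-far-honest} nodes by uniform strings, and upon adaptive corruption of such a node $i$ you propose to ``sample a fresh $\sk_i$ and program consistency via {\sf Explain} on the NIZK.'' But {\sf Explain} only makes a simulated proof look honestly generated \emph{for a valid witness}; if $\rho$ was uniform and you now hand the adversary a freshly sampled $\sk_i$, then with overwhelming probability $\rho \neq \prf_{\sk_i}(\msg)$, so $(\sk_i,s_i)$ is not a witness for the statement $(\rho,\pk_i,\crs_{\rm comm},\msg)$ at all. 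The adversary, upon receiving $\sk_i$, can simply recompute $\prf_{\sk_i}(\msg)$ for every previously exposed $\msg$ and observe the mismatch. Hence Hybrids~2 and~3 are distinguishable and your full-view-indistinguishability claim fails exactly at the point you flagged as the ``main obstacle.''

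The paper avoids this by \emph{not} arguing indistinguishability of views. It instead isolates the polynomial-time-checkable ``bad events'' (the Chernoff-type failures in Lemmas~\ref{lemma:chernoff}--\ref{lemma:terminate}) whose non-occurrence already implies consistency, validity, and termination, and shows only that the probability of these events is preserved across hybrids. When switching PRF outputs to random coins it uses an inner induction on the number of corruptions (Hybrids $1.f,\,1.f',\,1.f'',\,1.(f{-}1),\ldots$): at each stage it first replaces by random the coins of nodes that are never corrupted after the current last corruption (justified by selective-opening PRF security), and then argues that \emph{withholding} that last corrupted node's key $\sk_i$ cannot change the probability of the bad events, because once all remaining coins are uniformly random those events are already determined independently of anything the adversary could do with $\sk_i$. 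Only after the key is withheld is that node's PRF replaced by random as well. This ``drop the key, then randomize'' induction is the missing idea; it is precisely what lets one get to truly random coins without ever having to manufacture a $\sk_i$ consistent with previously exposed uniform $\rho$'s.
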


\begin{theorem}[Sub-quadratic BA under Partial Synchrony]
Let $\pi_{\rm partialsync}$
be the protocol obtained by applying the above compiler
to the protocol in Section~\ref{sec:psync13}, 
and assume that the cryptographic building blocks employed
are secure.
Then, for any arbitrarily small positive constant
$\epsilon$, any $n \in \N$,
$\pi_{\rm partialsync}$ satisfies consistency and validity,
and tolerates $f<(\frac13-\epsilon)n$ adaptive corruptions,
except for $\negl(\kappa)$ probability.
Further, $\pi_{\rm partialsync}$ achieves 
expected $\Delta \cdot \poly\log(\kappa)$ rounds and 
$\chi \cdot \poly\log(\kappa) \cdot \log\Delta$ multicast complexity. 
In the above, $\chi$ is a security parameter
related to the hardness of the cryptographic building blocks;
$\chi=\poly(\kappa)$ under standard cryptographic assumptions and 
$\chi=\poly\log(\kappa)$ if we assume sub-exponential security of the cryptographic primitives employed.
\end{theorem}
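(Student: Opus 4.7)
The plan is to prove the theorem by a hybrid/simulation argument that lifts the ideal-world guarantees of the \Fsort-hybrid protocol in Section~\ref{sec:psync13} to the real-world compiled protocol $\pi_{\rm partialsync}$. Concretely, I would show that for any non-uniform \ppt adversary $\algA$ and environment $\algZ$ in the real world, there is a corresponding ideal-world adversary $\algA'$ such that the two views are computationally indistinguishable; the desired consistency, validity, round complexity, and multicast complexity bounds then follow from Theorems~\ref{thm:safety-psync13},~\ref{thm:validity-psync13},~and~\ref{thm:efficiency-psync13} in Section~\ref{sec:psync13:proof}, with an additive $\negl(\kappa)$ slack.

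The hybrid sequence I would use has three main transitions. First, I would replace the honestly generated commitment CRS $\crs_{\rm comm}$ and NIZK CRS $\crs_{\rm nizk}$ with simulated ones $(\crs_0, \trap_0)$ produced by $\keygen_0$, replace each honest public key $\pk_i$ by a simulated commitment, and replace each honest NIZK proof by one produced by ${\sf P}_0$ with randomness that is later \emph{explained} using ${\sf Explain}$ once (and if) node $i$ is adaptively corrupted. Indistinguishability of this hop is precisely the non-erasure computational zero-knowledge property of the NIZK composed with the selective-opening hiding of the commitment scheme --- this is the critical place where adaptive security of the cryptographic primitives is needed, because corruption times are chosen by $\algA$ based on the view. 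Second, in the hybrid where honest keys no longer encode real PRF secrets, I would replace each honest PRF evaluation $\prf_{\sk_i}(\msg)$ by a truly random value, invoking PRF security (with selective opening handled since secret keys are only revealed upon corruption, and opening is handled in the previous hybrid); for corrupt nodes, perfect knowledge extraction of the NIZK lets me extract their committed PRF keys from $\crs_1$-generated CRS, so that corrupt mining attempts remain well-defined as calls to a truly random function. Third, I would re-interpret each mining attempt $\prf_{\sk_i}(\msg) < D_p$ as a fresh $\mathsf{Bernoulli}(p)$ coin, matching \Fsort exactly.

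Given these hybrids, I would define $\algA'$ in the \Fsort-hybrid world to internally simulate the final hybrid, forwarding all network messages to the real $\algA$ and translating $\algA$'s corruption decisions into \Fsort-hybrid corruptions. The resulting joint view of honest outputs and transcripts in the \Fsort-hybrid execution is statistically $\negl(\kappa)$-close to the real-world execution. Consistency and validity in the \Fsort-hybrid world hold except with $\exp(-\Omega(\lambda))$ probability by Theorems~\ref{thm:safety-psync13} and~\ref{thm:validity-psync13}; choosing $\lambda = \poly\log(\kappa)$ makes this probability $\negl(\kappa)$, and together with the hybrid indistinguishability yields the claimed failure bound. The round and multicast complexity bounds transfer directly from Theorem~\ref{thm:efficiency-psync13}, since the compiler only augments each multicast message with a NIZK proof and per-message verification, both of polynomial size in $\chi$; substituting $\lambda = \poly\log(\kappa)$ into $O(\lambda\Delta)$ expected rounds and $O(\lambda^2 \log \Delta)$ expected multicasts gives the stated expressions, while the per-bit overhead contributes the $\chi$ factor in the communication bound.

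The main obstacle is the first hybrid: handling \emph{adaptive} corruptions requires that the simulated commitments and NIZK proofs produced for an honest node can later be opened consistently to the actual PRF secret key once that node is corrupted, possibly after the node has already multicast many valid mining tuples. This is exactly what rules out standard VRF constructions (e.g.,~\cite{VRF,nirvrf,brentvrf}) whose security games fix the secret key up front, and what forces the use of non-erasure zero-knowledge plus selective-opening-secure commitments. Once that hop is in place, the remaining transitions are fairly mechanical applications of PRF security and statistical accounting, and the efficiency bookkeeping simply inherits the ideal-world analysis with a $\poly(\kappa)$ or $\poly\log(\kappa)$ blowup depending on whether standard or sub-exponential hardness is assumed.
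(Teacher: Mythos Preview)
Your plan is essentially the paper's own argument (Appendix~\ref{sec:realproofs}): the same hybrid ladder built from PRF selective-opening security, non-erasure zero-knowledge of the NIZK, selective-opening hiding and perfect binding of the commitment, and perfect knowledge extraction, combined with the \Fsort-hybrid analysis of Section~\ref{sec:psync13}. The ingredients, the identification of adaptive key-revelation as the crux, and the efficiency bookkeeping all match.

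One refinement is worth flagging: you phrase the goal as ``the two views are computationally indistinguishable,'' but the paper deliberately does \emph{not} prove that. Consider a node~$i$ that is so-far-honest when it multicasts $(\msg,\rho,\pi)$ with $\rho=\prf_{\sk_i}(\msg)$ and is adaptively corrupted afterward; once $\algA$ learns $\sk_i$ it can recompute $\prf_{\sk_i}(\msg)$ and distinguish a real $\rho$ from a random one, so your second hybrid hop as stated would be detectable. The paper sidesteps this by tracking only a polynomial-time checkable predicate~$F$ over the mining outcomes (Lemmas~\ref{lemma:chernoff}--\ref{lemma:terminate-psync13}) and arguing, via the inner hybrids $1.f,1.f',1.f'',\ldots$ that peel off corruptions one at a time, that $\Pr[F]$ is preserved even when the last-corrupted key is withheld --- because by that point all coins relevant to~$F$ are already fixed. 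Your plan goes through once you weaken the target from full indistinguishability to preservation of these bad-event probabilities.
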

\begin{proof}
Proofs for the above two theorems can be obtained
by combining the \Fsort-hybrid analysis 
in Section~\ref{sec:sync12}~or~\ref{sec:psync13} 
with Appendix~\ref{sec:realproofs}
where we show that the relevant security 
properties are preserved in by the real  
world protocol.
\end{proof}


\section{Real World is as Secure as the \Fsort-Hybrid World}
\label{sec:realproofs}

\subsection{Preliminary: PRF's Security Under Selective Opening}
Our proof will directly rely on the security of a PRF  
under selective opening attacks. 
We will prove that any secure PRF family is secure under selective opening 
with a polynomial loss in the security.

\paragraph{Pseudorandomness under selective opening.}
We consider a selective opening adversary that interacts with a challenger.
The adversary can request to create new PRF instances, query
existing instances with  
specified messages, 
selectively corrupt instances and obtain the secret keys of these instances,
and finally, we would like to claim that for instances
that have not been corrupt, the adversary is unable to distinguish
the PRFs' evaluation 
outcomes on any future message from random values from an appropriate domain.
More formally, we consider the following game between a challenger
$\algC$
and an adversary $\algA$.

\vspace{6pt}
\noindent \underline{${\sf Expt}^{\algA}_b(1^\kappa)$}:

\begin{itemize}[leftmargin=5mm]
\item 
$\algA(1^\kappa)$ can adaptively interact with $\algC$ through 
the following queries:
\begin{itemize}[leftmargin=5mm]
\item 
{\it Create instance.} The challenger $\algC$ creates a new PRF instance
by calling the honest ${\sf Gen}(1^\kappa)$. 
Henceforth, the instance will be assigned an index that corresponds
to the number of ``create instance'' queries made so far.
The $i$-th instance's secret key will be denoted $\sk_i$.
\item 
{\it Evaluate.}
The adversary $\algA$ specifies an index $i$ that corresponds
to an instance already created and a message $\msg$, and
the challenger computes 
$r \leftarrow {\sf PRF}_{\sk_i}(\msg)$
and returns $r$ to $\algA$.
\item 
{\it Corrupt.} 
The adversary $\algA$ specifies an index $i$, and the challenger $\algC$
returns $\sk_i$ to $\algA$ (if the $i$-th instance has been created).
\item 
{\it Challenge.}
The adversary $\algA$ 
specifies an index $i^*$ that must have been created and 
a message $\msg$. If $b = 0$, the challenger returns a completely random
string of appropriate length.
If $b = 1$, the challenger computes 
$r \leftarrow {\sf PRF}_{\sk_{i^*}}(\msg)$ 
and returns $r$ to the adversary.
\end{itemize}
\end{itemize}

We say that $\algA$ is compliant iff with probability $1$, every 
challenge tuple $(i^*, \msg)$ it submits 
satisfies the following: 1) $\algA$ does not make a corruption query on
$i^*$ 
throughout the game; and 2) $\algA$ does not make any evaluation
query on the tuple $(i^*, \msg)$.

\begin{definition}[Selective opening security of a PRF family]
We say that a PRF scheme 
satisfies pseudorandomness 
under selective opening iff 
for any compliant \ppt adversary $\algA$, 
its views in  
${\sf Expt}^{\algA}_0(1^\kappa)$
and 
${\sf Expt}^{\algA}_1(1^\kappa)$
are computationally indistinguishable.
\label{defn:selectiveopen}
\end{definition}

\begin{theorem}
Any secure PRF family 
satisfies pseudorandomness 
under selective opening by Definition~\ref{defn:selectiveopen} (with 
polynomial loss in the security reduction).
\label{thm:selectiveopen}
\end{theorem}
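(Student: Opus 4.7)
The plan is to prove this by a direct reduction from the standard (single-instance) PRF security notion, with a polynomial security loss that comes from a guessing step. Since $\algA$ is \ppt, there is a polynomial $Q = Q(\kappa)$ that upper bounds the number of ``create instance'' queries $\algA$ makes. Similarly, there is a polynomial $T$ upper bounding the number of challenge queries. I would first argue that without loss of generality we may assume a single-challenge variant of the experiment: a standard hybrid over the $T$ challenges reduces multi-challenge indistinguishability to single-challenge indistinguishability with loss at most $T$.

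Given the single-challenge version, I would construct a reduction $\algB$ that breaks standard PRF security using $\algA$ as a subroutine. Upfront, $\algB$ samples a guess $i^* \getr [Q]$, intended to be the index of the instance on which $\algA$ eventually issues its challenge query. $\algB$ simulates the selective-opening experiment for $\algA$ as follows: for every instance $i \neq i^*$, $\algB$ runs the honest ${\sf Gen}(1^\kappa)$ and answers evaluation and corruption queries on instance $i$ using $\sk_i$ itself; for instance $i^*$, $\algB$ forwards every evaluation query $(i^*, \msg)$ to its external PRF challenger and relays the answer. If $\algA$ ever issues a corruption query on $i^*$, or a challenge query on some $i \neq i^*$, then $\algB$ aborts and outputs a random bit. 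When $\algA$ issues a challenge on $(i^*, \msg^*)$, $\algB$ forwards $\msg^*$ to its PRF challenger and relays the response; finally $\algB$ outputs whatever $\algA$ outputs.

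The correctness of the simulation when $\algB$ does not abort is the easy part: for $i \neq i^*$, $\algB$ knows $\sk_i$ and perfectly emulates the honest challenger, while for $i^*$ the external PRF challenger is queried exactly as the real challenger would be (and by compliance, $\algA$ never asks the tuple $(i^*, \msg^*)$ as an evaluation query, so this is consistent with the PRF game). The main obstacle, and the source of the polynomial loss, is the guessing step: the event that $\algB$'s guess $i^*$ matches the instance on which $\algA$'s challenge occurs is independent of $\algA$'s view up to the point of abort, so it happens with probability at least $1/Q$. Conditioning on a correct guess, $\algB$'s advantage in the PRF game equals $\algA$'s single-challenge selective-opening advantage.

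Combining the two reductions, any \ppt $\algA$ with selective-opening advantage $\delta(\kappa)$ yields a \ppt adversary against the PRF with advantage at least $\delta(\kappa) / (Q \cdot T)$. Since $Q \cdot T = \poly(\kappa)$, standard PRF security (which forces the latter to be negligible) implies $\delta(\kappa)$ is negligible, establishing the theorem with the stated polynomial loss. I do not foresee any subtle issue beyond verifying that the guessing/abort step is information-theoretically independent of $\algA$'s observable transcript, which follows because $\algB$'s behavior conditioned on ``not yet aborting'' is identical regardless of the value of $i^*$.
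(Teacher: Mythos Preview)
Your overall strategy---reduce to a single challenge by a hybrid, then guess the challenge index and embed the standard PRF game there---is exactly the route the paper takes (the paper phrases the guessing step as a separate ``single-selective-challenge'' intermediate notion, but the content is the same). However, there is a genuine gap in your reduction $\algB$.

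You claim that, conditioned on a correct guess, $\algB$'s PRF advantage \emph{equals} $\algA$'s single-challenge selective-opening advantage. This is not true. In your reduction, $\algB$ forwards \emph{both} evaluation queries and the challenge query on instance $i^*$ to the external PRF challenger. When that challenger holds a real key, $\algA$ indeed sees ${\sf Expt}_1$. But when the challenger holds a random function, $\algA$'s \emph{evaluation} queries on $i^*$ are answered by the random function as well---whereas in ${\sf Expt}_0$, evaluation queries are always answered by the real ${\sf PRF}_{\sk_{i^*}}$ and only the challenge is replaced by a fresh random string. So the ``random'' side of your reduction simulates a world that is \emph{not} ${\sf Expt}_0$, and the equality of advantages fails whenever $\algA$ makes evaluate queries on the challenge index (which compliance permits, as long as the message differs from $\msg^*$).

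The fix is an extra hybrid, and this is precisely what the paper does: it interposes a third world---random function for both evaluate and challenge on $i^*$---between ${\sf Expt}_0$ (real PRF for evaluate, random for challenge) and ${\sf Expt}_1$ (real PRF for both). Each adjacent pair is then bridged by one invocation of standard PRF security. Your reduction as written covers only the step between the middle world and ${\sf Expt}_1$; you need one more application of PRF security to reach ${\sf Expt}_0$. With that additional step (costing another factor of~$2$ in the loss, which is harmless), your argument goes through and matches the paper's.
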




\begin{proof}
\vspace{3pt} \noindent{\bf Single-selective-challenge selective opening security.}
In the single-selective challenge version of the game, the adversary
commits to a challenge identifier $i^*$ upfront during the security game,
such that later, 
challenge queries can only be made for the committed index $i^*$.

First, we can show that any secure PRF family
would satisfy single-selective-challenge selective opening security.
Suppose that there is an efficient adversary $\algA$ that can 
break the single-selective-challenge selective opening security game
for some PRF family.
We construct a reduction $\algR$ that leverages $\algA$ to break
the PRF's security.
The reduction $\algR$ interacts with a PRF challenger as well as $\algA$.
$\algR$ generates PRF keys for all instances other than $i^*$
and answers 
non-$i^*$ 
evaluation and corruption queries honestly.
For $i^*$, $\algA$'s evaluation requests 
are forwarded to the 
PRF challenger. 

We consider the following three hybrids:
\begin{enumerate}[leftmargin=5mm]
\item 
The PRF challenger has a real, randomly sampled PRF function from
the corresponding family, and  
$\algR$ answers $\algA$'s challenge queries on $i^*$ with 
random answers;
\item 
The PRF challenger has a random function, and  
$\algR$ answers $\algA$'s challenge queries on $i^*$ by forwarding 
the PRF challenger's answers (or equivalently by relying with random answers);
and 
\item 
The PRF challenger has a real, randomly sampled PRF function from
the corresponding family, and  
$\algR$ answers $\algA$'s challenge queries on $i^*$ by forwarding 
the PRF challenger's answers.
\end{enumerate}
It is not difficult to see that $\algA$'s view 
in hybrid 1 is identical to its view in the 
single-selective challenge selective opening security game when 
$b = 0$; its view in hybrid 3
is identical to its view in the 
single-selective challenge selective opening security game when 
$b = 1$.
Due to the 
security of the PRF, it is not difficult to see that any adjacent pair
of hybrids are indistinguishable.

\paragraph{Single-challenge selective opening security.}
In the single-challenge selective opening version of the game, 
the adversary can only make challenge queries for a single $i^*$
but it need not commit to $i^*$ upfront at the beginning of the security game. 

We now argue that any PRF that satisfies 
single-selective-challenge selective opening security
must satisfy single-challenge selective opening security
with a polynomial security loss.
The proof of this is straightforward. 
Suppose that there is an efficient adversary $\algA$
that can break the 
single-challenge selective opening security
of some PRF family, 
we can then construct an efficient reduction $\algR$
that breaks the single-selective-challenge selective opening security
of the PRF family.
Basically the reduction $\algR$ guesses at random upfront
which index $i^*$ the adversary  $\algA$ 
will choose for challenge queries. 
$\algR$ then forwards all of $\algA$'s queries
to the challenger of the 
single-selective-challenge selective opening security
security game.
If the guess later turns out to be wrong,
the reduction 
simply aborts and outputs a random guess $b'$.
Otherwise, it outputs the same output as $\algA$.
Suppose that $\algA$ creates $q$ instances of PRFs 
then we can conclude that $\algR$ guesses correctly with probability
at least $1/q$.
Thus whatever advantage $\algA$ has in breaking 
the 
single-challenge selective opening security, 
$\algR$ has 
an advantage that is 
$1/q$ fraction of $\algA$'s advantage in breaking the  
single-selective-challenge selective opening security of the PRF family.

\paragraph{Selective opening security.}
Finally, we show that 
any PRF family that satisfies single-challenge selective opening security
must also satisfy selective opening 
security (i.e., Definition~\ref{defn:selectiveopen})
with a polynomial security loss.
This proof can be completed through a standard hybrid argument in which  
we replace 
the challenge queries from real to random one index at a time (where
replacement is performed for all
queries of the $i$-th new index that 
appeared in some challenge query).
\end{proof}


\subsection{Definition of Polynomial-Time Checkable Stochastic Bad Events}
In all of our \Fsort-hybrid protocols earlier, 
some stochastic bad events
related to \Fsort's random coins 
can lead 
to the breach of protocol security (i.e., consistency, validity, or termination)
These stochastic
bad events are of the form imprecisely speaking: 
either there are too few honest mining successes
or there are too many corrupt mining successes.
More formally,
for the honest majority protocol, the stochastic
bad events are stated in
Lemmas~\ref{lemma:chernoff},~\ref{lemma:honest-unique-leader},~\ref{lemma:terminate},~and~\ref{lemma:terminate-psync13}.
\ignorepsync{
and for the partially synchronous protocol
the stochastic bad events are stated in
Lemma~\ref{lemma:choice_epochs} and~\ref{lemma:other_msg}.
}

For these stochastic bad events, 
there is a polynomial-time predicate
henceforth denoted $F$, that takes in 
1) all honest and corrupt mining attempts and the rounds in which the attempts
are made (for a fixed $\view$) and
2) \Fsort's coins as a result of these mining attempts, 
and outputs $0$ or $1$, indicating whether the bad events
are true for this specific $\view$.
Recall that $\view$ denotes an execution trace.

In our earlier \Fsort-world analyses (in Section~
\ref{sec:sync12}), \ignorepsync{and \ref{sec:psync13}),}
although we have not pointed out this explicitly, but 
our proofs actually suggest that the stochastic bad events
defined by $F$ 
happen with small probability 
{\it even when $\algA$ and \algZ are computationally
unbounded}.

The majority of this section will focus on bounding the second category
of failures, i.e., stochastic bad events  
defined by the polynomial-time predicate $F$ (where $F$ may
be a different predicate for each protocol).

For simplicity, 
we shall call our \Fsort-hybrid protocol $\protideal$ --- 
for the three different protocols, $\protideal$ is a different protocol;
nonetheless, the same proofs hold for all three protocols. 

\subsection{Hybrid 1}

Hybrid 1 is defined just like our earlier \Fsort-hybrid protocol but
with the following modifications:

\begin{itemize}[leftmargin=5mm]
\item 
\Fsort chooses random PRF keys for all nodes at
the very beginning, and let $\sk_i$ denote
the PRF key chosen for the $i$-th node.
\item 
Whenever a node 
$i$ makes a ${\tt mine}(\msg)$
query, 
\Fsort determines the outcome of the coin flip as follows:
compute $\rho \leftarrow {\sf PRF}_{\sk_i}(\msg)$
and use $\rho < D_p$ as the coin.

\item 
Whenever $\algA$ adaptively corrupts a node $i$, 
\Fsort  discloses $\sk_i$ to $\algA$.
\end{itemize}

\begin{lemma}
For any \ppt $\AZ$, there exists a negligible function $\negl(\cdot)$ such that
for any $\kappa$, 
the bad events defined by $F$ do not happen
in Hybrid 1 with probability $1 - \negl(\kappa)$.
\label{lem:manyhybrids}
\end{lemma}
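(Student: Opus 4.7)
My plan is to prove Lemma~\ref{lem:manyhybrids} by reduction to the selective-opening (SO) security of the PRF (Theorem~\ref{thm:selectiveopen}). The earlier \Fsort-world analyses (Lemmas~\ref{lemma:chernoff},~\ref{lemma:honest-unique-leader},~\ref{lemma:terminate},~\ref{lemma:terminate-psync13}) show that the stochastic bad predicate $F$ fails with probability only $\exp(-\Omega(\numrounds))$ in the genuine \Fsort world, and their proofs go through against an unbounded $\AZ$. Consequently it suffices to bound, for any \ppt\ $\AZ$, the quantity $\big|\Pr[F{=}1 \text{ in Hybrid 1}] - \Pr[F{=}1 \text{ in the }\Fsort\text{ world}]\big|$ by $\negl(\secparam)$.

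I would introduce $n+1$ intermediate hybrids $H_0, \ldots, H_n$ that interpolate between the \Fsort world ($H_0$) and Hybrid 1 ($H_n$): in $H_k$, whenever node $i \leq k$ makes its first $\mine(\msg)$ query the coin $\Coin[\msg,i]$ is set to $(\prf_{\sk_i}(\msg) < D)$ for a freshly sampled $\sk_i$, while for $i > k$ the coin is an independent Bernoulli with the prescribed probability. In either case, upon adaptive corruption of node $i$, the simulation hands $\sk_i$ over to $\algA$ exactly as in Hybrid 1 (sampling a fresh $\sk_i$ lazily for $i>k$ if it has not been materialized yet). A standard hybrid argument reduces any non-negligible $F$-gap between $H_0$ and $H_n$ to a $1/\poly(\secparam)$ gap on some adjacent pair $H_{k^\star-1}, H_{k^\star}$. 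To distinguish that pair I would build an SO attacker $\algB$: $\algB$ asks the SO challenger to create $n$ PRF instances, invokes Challenge on instance $k^\star$ the first time $k^\star$ mines a new $\msg$ (while $k^\star$ is still so-far-honest), invokes Evaluate for every other mining query, and invokes Corrupt whenever $\algA$ adaptively corrupts any node $\neq k^\star$. $\algB$ finally outputs $F$ on the simulated transcript. If the SO bit is $1$, $\algB$ perfectly simulates $H_{k^\star}$; if it is $0$, $\algB$ simulates $H_{k^\star-1}$ --- provided $k^\star$ is never corrupted.

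The hard part will be handling the event that $\algA$ adaptively corrupts the challenged node $k^\star$, since SO compliance forbids issuing Corrupt on a previously Challenged instance, and naively revealing a fresh $\sk_{k^\star}$ would trivially distinguish $b=0$ (random coins) from $b=1$ (PRF coins) by letting $\algA$ recheck past coins against $\prf_{\sk_{k^\star}}$. My plan is to have $\algB$ abort and output a default value whenever $\algA$ demands the corruption of $k^\star$, and then argue that the abort probability is computationally equal under $b=0$ and $b=1$: up to the moment of the requested corruption the adversary's view is SO-pseudorandom in the bit, so $\algA$'s schedule for corrupting $k^\star$ cannot correlate with $b$ more than negligibly. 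Randomizing $k^\star$ uniformly over $[n]$ guarantees $\Pr[k^\star \text{ never corrupted}] \geq (n-f)/n = \Omega(\epsilon)$, so the non-abort event has inverse-polynomial probability. The cumulative loss across the $n$ hybrid steps is only $\poly(n)$, which preserves negligibility and contradicts Theorem~\ref{thm:selectiveopen}, completing the proof.
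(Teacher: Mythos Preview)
Your per-node hybrid has a genuine gap at the step where the challenged node $k^\star$ gets adaptively corrupted. In $H_{k^\star-1}$ the key you hand over is a freshly sampled $\sk_{k^\star}$ that is \emph{inconsistent} with the random Bernoulli coins $k^\star$ has already produced, whereas in $H_{k^\star}$ the revealed key is consistent with all past coins; $\algA$ can test this in one PRF evaluation and then behave arbitrarily differently in the two hybrids for the remainder of the run. Your reduction $\algB$, by aborting on corruption of $k^\star$, therefore only controls $\big|\Pr_{H_{k^\star}}[F{=}1 \wedge k^\star\text{ never corrupted}] - \Pr_{H_{k^\star-1}}[F{=}1 \wedge k^\star\text{ never corrupted}]\big|$. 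Showing that $\Pr[\text{abort}]$ is the same under $b=0$ and $b=1$ does not bound $\big|\Pr_{H_{k^\star}}[F{=}1 \wedge k^\star\text{ corrupted}] - \Pr_{H_{k^\star-1}}[F{=}1 \wedge k^\star\text{ corrupted}]\big|$, which is exactly where the adversary's post-corruption divergence lives. Randomizing $k^\star$ does not rescue this either: the telescoping sum $\sum_k(\Pr_{H_k}[F{=}1]-\Pr_{H_{k-1}}[F{=}1])$ may place all of its mass on indices $k$ that $\algA$ corrupts with probability one, and for those indices your reduction has advantage zero regardless of the true gap.

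The paper sidesteps this obstacle by hybridizing \emph{backward over the sequence of adaptive corruptions} rather than over node identities. Starting from Hybrid~1, it first replaces with true randomness all coins of forever-honest nodes occurring \emph{after} the last corruption (an SO-PRF step with no compliance issue, since those keys are never opened). It then argues---using the specific structure of $F$, namely that once the corruption set is fixed and, without loss of generality, all corrupt mining attempts are issued, $F$ is a function of the coins alone---that withholding the last revealed key cannot change $\Pr[F{=}1]$. With that key withheld, another SO step replaces the last-corrupted node's post-corruption coins by random, and the process recurses to the second-to-last corruption, and so on down to the \Fsort world. The crucial ingredient your proposal is missing is precisely this ``$F$ is already determined, so the key is useless'' step; without it, any hybrid that reveals a key inconsistent with prior coins is distinguishable and the argument breaks.
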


\begin{proof}
Let $f$ be the number of adaptive corruptions made by $\algA$.
To prove this lemma we must go through a sequence of inner hybrids
over the number of adaptive corruptions made by the 
adversary $\algA$.

\paragraph{Hybrid $1.f$.}
Hybrid $1.f$ is defined almost identically as 
Hybrid 1 except the following modifications:
Suppose that $\algA$ makes the last corruption query 
in round $t$ and for node $i$.
Whenever the ideal functionality \Fsort 
in Hybrid 1 would have called ${\sf PRF}_{\sk_j}(\msg)$
for any $j$
that is honest-forever and in some round $t' \geq t$, 
in Hybrid $1.f$, 
we replace this call with a random string.

\begin{claim}
Suppose that the PRF scheme satisfies pseudorandomness under selective opening. 
Then, if for any \ppt $\AZ$ 
and any $\kappa$, 
the bad events defined by $F$ do not happen in 
Hybrid $1.f$ with probability 
at least $\mu(\kappa)$, then 
for any 
\ppt $(\algA, \algZ)$ and $\kappa$, 
the bad events defined by $F$ do not happen 
in Hybrid $1$ 
with probability at least $\mu(\kappa) - \negl(\kappa)$.
\end{claim}

\begin{proof}
Suppose for the sake of contradiction that the claim does not hold.
We can then construct a PRF 
adversary $\algA'$
that breaks pseudorandomness under selective opening with non-negligible
probability.
$\algA'$ plays $\Fsort$ when interacting with $\algA$.
$\algA'$ is also interacting with a PRF challenger. 
In the beginning, for every node, $\algA'$
asks the PRF challenger to create a PRF instance for that node.
Whenever
\Fsort needs to evaluate a PRF, 
$\algA'$ forwards the query 
to the PRF challenger. 
This continues until $\algA$ makes the last corruption query, i.e.,
the $f$-th corruption query --- suppose this last corruption
query is made in round $t$ and the node to corrupt is $i$.
At this moment, 
$\algA'$ discloses $\sk_i$ to the adversary. 
However, 
whenever Hybrid 1 
would have needed to compute 
${\sf PRF}_{\sk_j}(\msg)$ 
for any $j$ that is honest-forever and in some round $t' \geq t$,
$\algA'$ makes a challenge query to the PRF challenger 
for the $j$-th PRF instance
and on the message queried.
Notice that if the PRF challenger returned random answers to challenges,
$\algA$'s view 
in this interation would be identically distributed as Hybrid $1.f$.
Otherwise, if the PRF challenger returned 
true answers to challenges,
$\algA$'s view 
in this interation would be identically distributed as Hybrid $1$.
\end{proof}

\paragraph{Hybrid $1.f'$.}
Hybrid $1.f'$ is defined almost identically as 
Hybrid $1.f$ except the following modification:
whenever $\algA$ makes the last corruption query ---
suppose that this query is to corrupt node $i$ 
and happens in round $t$ --- 
the ideal functionality \Fsort does not disclose
$\sk_i$ to $\algA$.

\begin{claim}
If for any \ppt $\AZ$ 
and any $\kappa$, 
the bad events defined by $F$ do not happen in 
Hybrid $1.f'$ with probability 
at least $\mu(\kappa)$, then 
for any \ppt $(\algA, \algZ)$ and $\kappa$, 
the bad events defined by $F$ do not happen in Hybrid $1.f$ 
with probability at least $\mu(\kappa)$.
\end{claim}
\begin{proof}
We observe the following: once the last corruption 
query is made in round $t$ for node $i$,  
given that for any $t' \geq t$, any honest-forever node's coins are 
completely random.
Thus whether or not the adversary receives the last corruption key does not
help it to cause the relevant bad events to occur.
Specifically in this case, 
at the moment the last corruption
query is made --- without loss of generality assume
that the adversary makes all possible corrupt mining attempts ---
then 
whether the polynomial-checkable bad events 
defined by $F$ take place
is fully determined 
by \Fsort's random coins 
and independent of any further actions of the adversary at this point.
\end{proof}

\paragraph{Hybrid $1.f''$.}
Hybrid $1.f''$ is defined almost identically as 
Hybrid $1.f'$ except the following modification:
suppose that the last corruption query is to corrupt node $i$ and happens
in round $t$;
whenever the ideal functionality \Fsort in Hybrid $1.f'$ 
would have called ${\sf PRF}(\sk_i, \msg)$ in some round $t' \geq t$ (for the node
$i$ that is last corrupt),
in Hybrid $1.f''$, 
we replace this call's outcome with a random string.

\begin{claim}
Suppose that the PRF scheme satisfies pseudorandomness under selective opening. 
Then, if for any \ppt $\AZ$ 
and any $\kappa$, 
the bad events defined by $F$ do not happen in 
Hybrid $1.f''$ with probability 
at least $\mu(\kappa)$, then 
for any 
\ppt $(\algA, \algZ)$ and $\kappa$, 
the bad events defined by $F$ do not happen
in Hybrid $1.f'$ 
with probability at least $\mu(\kappa) - \negl(\kappa)$.
\label{claim:corrupttorand}
\end{claim}

\begin{proof}
Suppose for the sake of contradiction that the claim does not hold.
We can then construct a PRF 
adversary $\algA'$
that breaks  pseudorandomness under selective opening with non-negligible
probability.
$\algA'$ plays the $\Fsort$ 
when interacting with $\algA$.
$\algA'$ is also interacting with a PRF challenger. 
In the beginning, for every 
node, 
$\algA'$
asks the PRF challenger to create a PRF instance for that node.
Whenever
\Fsort needs to evaluate a PRF, 
$\algA'$ forwards the query 
to the PRF challenger. 
This continues until $\algA$ makes the last corruption query, i.e.,
the $f$-th corruption query --- suppose this last corruption
query is made in round $t$ and the node to corrupt is $i$.
At this moment, 
$\algA'$ does not disclose $\sk_i$ to the adversary and does
not query the PRF challenger to corrupt $i$'s secret key either.
Furthermore, whenever Hybrid $1.f'$ would have called
${\sf PRF}_{\sk_i}(\msg)$ in some round $t' \geq t$, 
$\algA$ now calls the PRF challenger for the $i$-th PRF instance
and on the specified challenge message, it uses the answer
from the PRF challenger 
to replace 
the 
${\sf PRF}_{\sk_i}(\msg)$ call.
Notice that if the PRF challenger returned random answers to challenges,
$\algA$'s view 
in this interation would be identically distributed as Hybrid $1.f''$.
Otherwise, if the PRF challenger returned 
true answers to challenges,
$\algA$'s view 
in this interation would be identically distributed as Hybrid $1.f'$.
\end{proof}

We can extend the same argument continuing with the 
following sequence of hybrids such that we can replace more and more
PRF evaluations at the end with random coins, and withhold more and more
PRF secret keys from $\algA$ upon adaptive corruption queries --- and nonetheless
the probability that the security properties get broken will not be 
affected too much.

\paragraph{Hybrid $1.(f-1)$.}
Suppose that $\algA$ makes the last but second corruption query for node
$i$ and in round $t$. Now, for any node $j$ that is still honest
in round $t$ (not including node $i$), if ${\sf PRF}_{\sk_j}(\msg)$
is needed by the ideal functionality
in some round $t' \geq t$,
the PRF call's outcome will be replaced with random.
Otherwise Hybrid $1.(f-1)$ is the same as Hybrid $1.f''$.

\begin{claim}
Suppose that the PRF scheme satisfies pseudorandomness under selective opening.
Then, if for any \ppt $\AZ$
and any $\kappa$, 
the bad events defined by $F$ do not happen in 
Hybrid $1.(f-1)$ with probability
at least $\mu(\kappa)$, then
for any
\ppt $(\algA, \algZ)$ and $\kappa$, 
the bad events defined by $F$ do not happen in Hybrid $1.f''$
with probability at least $\mu(\kappa) - \negl(\kappa)$.
\end{claim}
\begin{proof}
Similar to the reduction between the \Fsort-hybrid protocol and Hybrid $1.f$.
\end{proof}

\ignore{
\begin{fact}
If there is a \ppt adversary $\algA$ that
can cause the bad events defined by $F$ to occur 
with probability $\mu$ for Hybrid $1.(f-1)$, 
then there is
another \ppt adversary $\algA'$ such that upon making the last but second
corruption query, it would immediately
make the last corruption query
in the same round as $t$ corrupting an arbitrary node (say, the one
with the smallest index and is not corrupt yet),
and $\algA'$ can
cause the bad events defined by $F$ to occur
with probability at least $\mu$ in Hybrid
$1.(f-1)$.
\label{fct:corruptearly}
\end{fact}
\begin{proof}
Without loss of generality, we can assume that whenever a node
becomes corrupt, it will  
Note that polynomial-time checkable function is insensitive to the naming of nodes. 
Further, for a node 
$i^*$ that is adaptively corrupt after the last corruption query, consider the set
of mining queries 
$i^*$ makes while still being honest after the last corrupt query.
Now, instead of treating this as an honest mining attempt
we treat it as a corrupt mining attempt  
in the input to $F$.
\elaine{FILL}
\end{proof}
}

\paragraph{Hybrid $1.(f-1)'$.}
Almost the same as Hybrid $1.(f-1)$, but without disclosing the secret key
to $\algA$ upon the last but second corruption query.

\begin{claim}
If for any \ppt $\AZ$ 
and any $\kappa$, 
the bad events defined by $F$ do not happen in 
Hybrid $1.(f-1)'$ with probability 
at least $1 - \mu(\kappa)$, then 
for any 
\ppt $(\algA, \algZ)$ and $\kappa$, 
the bad events defined by $F$ do not happen in Hybrid $1.(f-1)$ 
with probability at least $1 - \mu(\kappa)$.
\end{claim}

\begin{proof}
The proof is similar to the reduction between Hybrid $1.f$ and Hybrid $1.f'$, but
with one more subtlety:
in Hybrid $1.(f-1)$,
upon making the last but second
adaptive corruption query for node $i$ in round $t$,
for any $t' \geq t$ and any node honest in round $t$
(not including $i$ but including the last node to corrupt), all coins are random. 
Due to this, we observe that 
if there is a \ppt adversary $\algA$ that
can cause the bad events defined by $F$ to occur 
with probability $\mu$ for Hybrid $1.(f-1)$, 
then there is
another \ppt adversary $\algA'$ such that upon making the last but second
corruption query, it would immediately
make the last corruption query
in the same round as $t$ corrupting an arbitrary node (say, the one
with the smallest index and is not corrupt yet),
and $\algA'$ can
cause the bad events defined by $F$ to occur
with probability at least $\mu$ in Hybrid
$1.(f-1)$.

Now, we argue that if such an $\algA'$ can cause the bad events defined
by $F$ to occur in Hybrid $1.(f-1)$
with probability $\mu$, there must be an adversary
$\algA''$  
that can cause the bad events defined by $F$ to occur in Hybrid $1.(f-1)'$
with probability $\mu$ too. 
In particular, $\algA''$ will simply run $\algA'$
until $\algA'$ makes the last but second corruption query. At this point
$\algA''$ 
makes an additional corruption query for an arbitrary 
node that is not yet corrupt. 
At this point, clearly whether
bad events defined by $F$ would occur is independent of any further 
action of the adversary --- and although in Hybrid $1.(f-1)'$, $\algA''$ 
does not get to see the secret key corresponding to the 
last but second query,
it still has the same probability of causing the relevant bad events
to occur as the adversary $\algA'$ in Hybrid $1.(f-1)$.
\end{proof}

\paragraph{Hybrid $1.(f-1)''$.}
Suppose that $\algA$ makes the last but second corruption query for node
$i$ and in round $t$.
Now, for any node $j$ that is still honest
in round $t$ as well as node $j = i$, 
if the ideal functionality needs to call ${\sf PRF}_{\sk_j}(\msg)$
in some round $t' \geq t$
the PRF's outcome will be replaced with random.
Otherwise Hybrid $1.(f-1)''$ is identical to $1.(f-1)'$.

Due to the same argument as that of 
Claim~\ref{claim:corrupttorand}, we may conclude that 
if for any \ppt $\AZ$ 
and any $\kappa$, 
the bad events defined by $F$ do not happen in 
Hybrid $1.(f-1)''$ with probability 
at least $\mu(\kappa)$, then 
for any 
\ppt $(\algA, \algZ)$ and $\kappa$, 
the bad events defined by $F$ do not happen
in Hybrid $1.(f-1)'$ 
with probability at least $\mu(\kappa) - \negl(\kappa)$.

In this manner, we define a sequence of hybrids till
in the end, we reach the following hybrid:

\paragraph{Hybrid $1.0$.}
All PRFs evaluations in Hybrid 1 are replaced with random, and
no secret keys are disclosed to $\algA$ upon any adaptive corruption query.

It is not difficult to see that Hybrid $1.0$ is identically distributed
as the \Fsort-hybrid protocol.
We thus conclude the proof of Lemma~\ref{lem:manyhybrids}.

\end{proof}

\subsection{Hybrid 2}
Hybrid 2 is defined almost identically as Hybrid 1, except that now
the following occurs:
\begin{itemize}[leftmargin=5mm,itemsep=1pt]
\item 
Upfront, \Fsort 
generates an honest CRS for the commitment scheme and 
the NIZK scheme
and discloses the CRS to  $\algA$.
\item 
Upfront, \Fsort not only chooses secret keys for all nodes, but commits 
to the secret keys of these nodes, 
and reveals the commitments to $\algA$. 
\item 
Every time \Fsort receives a ${\tt mine}$ query 
from a so-far-honest node $i$ 
and for the message $\msg$, 
it evaluates $\rho \leftarrow {\sf PRF}_{\sk_i}(\msg)$
and compute a NIZK proof denoted $\pi$ to vouch for $\rho$.
Now, \Fsort returns $\rho$ and $\pi$ to $\algA$.
\item 
Whenever a node $i$ becomes corrupt, \Fsort
reveals all secret randomness node 
$i$ has used in commitments and NIZKs
so far to $\algA$ in addition to revealing its PRF secret key $\sk_i$.
\end{itemize}

\begin{lemma}
Suppose that the commitment scheme 
is computationally adaptive hiding under selective opening, 
and the NIZK scheme is  
non-erasure computational zero-knowledge,
Then, 
for any \ppt $\AZ$, there exists a negligible function $\negl(\cdot)$ such that
for any $\kappa$, 
the bad events defined by $F$ do not happen
in Hybrid 2 with probability $1 - \negl(\kappa)$.
\label{lem:zk}
\end{lemma}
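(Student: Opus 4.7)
The plan is to reduce Hybrid~2 to Hybrid~1 via a sequence of sub-hybrids in which the real commitments and real NIZK proofs produced by $\Fsort$ are replaced, one primitive at a time, with their simulated counterparts. Since the bad events defined by~$F$ depend only on the honest/corrupt mining attempts and on $\Fsort$'s coins (and not on the auxiliary cryptographic transcripts), once both primitives are simulated the stochastic process governing $F$ in the resulting hybrid is identical to that of Hybrid~1, at which point Lemma~\ref{lem:manyhybrids} finishes the argument.

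Concretely, I would introduce two intermediate hybrids. In Hybrid~$2.a$, replace every NIZK proof $\pi$ that $\Fsort$ hands to $\algA$ in response to a so-far-honest mining query by a simulated proof generated by ${\sf P}_0$ under a simulated CRS $(\crs_0,\trap_0)\leftarrow\keygen_0(1^\kappa)$; whenever a node $i$ is later adaptively corrupted, use the $\sf Explain$ algorithm of the NIZK to produce, for each previously issued simulated proof involving $\sk_i$, prover randomness consistent with the now-revealed witness $(\sk_i,s_i)$. Non-erasure computational zero-knowledge says precisely that the joint distribution of CRS, proofs, and explained randomness is indistinguishable from the real one; a standard sequence of $q$ sub-hybrids (one per NIZK query) together with the usual reduction to the zero-knowledge game shows that the probability of triggering the polynomial-time checkable predicate $F$ in Hybrid~$2.a$ differs from that in Hybrid~$2$ by at most a negligible amount.

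In Hybrid~$2.b$, starting from Hybrid~$2.a$, replace each commitment $c_i=\comm(\crs_{\rm comm},\sk_i,s_i)$ by a simulated commitment $c_i\leftarrow\comm_0(\crs_0,\trap_0,\rho_i)$ under a simulated commitment CRS, and upon adaptive corruption of node~$i$ use the commitment's $\sf Explain$ algorithm to produce randomness $s_i$ consistent with the honestly generated $\sk_i$. Computational hiding under selective opening provides exactly the guarantee needed to conclude that Hybrid~$2.b$ and Hybrid~$2.a$ are indistinguishable with respect to the event that $F$ fires. In Hybrid~$2.b$, both $c_i$ and the proofs~$\pi$ are sampled independently of the actual PRF keys (they depend only on simulation trapdoors and fresh randomness) until the moment a node is corrupted, at which point $\algA$ receives $\sk_i$ plus consistent openings; this is informationally equivalent to the view $\algA$ obtains in Hybrid~$1$ augmented with independent auxiliary data that does not influence $\Fsort$'s coins.

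The hard part is handling the adaptivity correctly: each corruption simultaneously reveals a PRF key, a commitment opening, and explanations for every prior NIZK that used that key, so the simulator has to produce consistent values \emph{after the fact}. This is precisely what ``non-erasure'' ZK and selective-opening hiding are designed for, but one has to be careful to order the sub-hybrids so that at the moment of a corruption query the reduction can invoke $\sf Explain$ with the honest witness $(\sk_i,s_i)$ still available. Once this bookkeeping is set up, the three indistinguishability claims chain together and yield
\[
\Pr[F\text{ fires in Hybrid }2]\;\le\;\Pr[F\text{ fires in Hybrid }1]+\negl(\kappa),
\]
and Lemma~\ref{lem:manyhybrids} completes the proof.
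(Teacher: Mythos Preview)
Your proposal is correct and follows essentially the same route as the paper: first replace the NIZK CRS and proofs by simulated ones (your Hybrid~$2.a$, the paper's Hybrid~$2.A$), then replace the commitment CRS and commitments by simulated ones (your Hybrid~$2.b$, the paper's Hybrid~$2.B$), and finally observe that in the fully simulated hybrid the auxiliary cryptographic data is independent of the PRF keys, so an explicit reduction wraps any Hybrid-$2.b$ adversary into a Hybrid-$1$ adversary and Lemma~\ref{lem:manyhybrids} applies. The only cosmetic difference is that the paper phrases the last step as a constructed reduction $\algA'$ rather than your ``informationally equivalent to Hybrid~1 plus independent auxiliary data,'' but the content is the same.
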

\begin{proof}
The proof is standard and proceeds in the following internal hybrid steps.

\begin{itemize}[leftmargin=5mm]
\item 
{\bf Hybrid 2.A.}
Hybrid 2.A is the same as Hybrid 2 but with the following modifications.
\Fsort calls simulated NIZK key generation instead of the real one, 
and for nodes that remain honest so-far, \Fsort simulate their NIZK proofs
without needing the nodes' PRF secret keys.
Whenever an honest 
node $i$ becomes corrupt, \Fsort explains node $i$'s 
simulated NIZKs using node $i$'s 
real $\sk_i$ and randomness used in its commitment, and supplies
the explanations to $\algA$. 

\begin{claim}
Hybrid 2.A and Hybrid 2 are computationally indistinguishable
from the $\view$ of $\algZ$.
\end{claim}
\begin{proof}
Straightforward due to the non-erasure computational zero-knowledge property
of the NIZK.
\end{proof}

\item 
{\bf Hybrid 2.B.}
Hybrid 2.B is almost identical to 
Hybrid 2.A but with the following modifications.
\Fsort calls the simulated CRS generation for the commitment scheme.
When generating public keys for nodes, it computes simulated
commitments without using the nodes' real $\sk_i$'s.
When a node $i$ becomes corrupt,    
it will use the real $\sk_i$ to compute an explanation for the earlier
simulated commitment.
Now this explanation is supplied to the NIZK's explain algorithm
to explain the NIZK too. 

\begin{claim}
Hybrid 2.A and Hybrid 2.B are computationally indistinguishable 
from the view of the environment $\algZ$.
\end{claim}
\begin{proof}
Straightforward by the ``computational hiding under selective opening''
property of the commitment scheme.
\end{proof}
\end{itemize}

\begin{claim}
If for any \ppt $\AZ$ 
and any $\kappa$, 
the bad events defined by $F$ do not happen in 
Hybrid $1$ with probability 
at least $\mu(\kappa)$, then 
for any \ppt $(\algA, \algZ)$ and $\kappa$, 
then the bad events defined by $F$ do not happen in Hybrid $2.B$ 
with probability at least $\mu(\kappa)$. 
\end{claim}
\begin{proof}
Given an adversary $\algA$ 
that attacks Hybrid 2.B, 
we can construct an adversary $\algA'$ that attacks Hybrid 1.
$\algA'$ will run 
$\algA$ internally. 
$\algA'$ runs the simulated CRS generations algorithms for the commitment
and NIZK, and sends the simulated CRSes to $\algA$.
It then runs the simulated commitment scheme and sends
simulated commitments to $\algA$ (of randomly chosen $\sk_i$ for every $i$).
Whenever $\algA$ tries to mine a message,  
$\algA'$ can intercept this mining request,
forward it to its own \Fsort.
If successful, $\algA'$ 
can sample a random number $\rho > D_p$; else
it samples a random number $\rho \leq D_p$. 
It then calls the simulated NIZK prover using $\rho$ 
to simulate a NIZK proof and
sends it to $\algA$.
Whenever $\algA$ wants to corrupt a node $i$,
$\algA'$ corrupts it with its \Fsort, obtains $\sk_i$,
and then runs the ${\sf Explain}$ algorithms of the commitment and NIZK schemes 
and discloses the explanations to $\algA$.
Clearly 
$\algA$'s view in this protocol is identically distributed as in Hybrid 2.B.
Moreover, if $\algA$ succeedings in causing the bad events defined by $F$ to happen,
clearly 
$\algA'$ will too.
\end{proof}

\end{proof}

\subsection{Hybrid 3}
Hybrid 3 is almost identical as Hybrid 2 except 
with the following modifications.
Whenever an already corrupt node
makes a mining query to \Fsort, 
it must supply a $\rho$ and a NIZK proof $\pi$. 
\Fsort then verifies the NIZK proof $\pi$, and if verification
passes, it uses $\rho < D_p$ as the result of the coin flip.


\begin{lemma}
Assume that the commitment scheme is  
perfectly binding, and the NIZK scheme satisfies perfect knowledge extraction. 
Then, for any \ppt $(\algA, \algZ)$,  
there exists a negligible function $\negl(\cdot)$
such that for any $\kappa$, 
the bad events defined by $F$ do not happen 
in Hybrid 3 except 
with probability $\negl(\kappa)$.
\end{lemma}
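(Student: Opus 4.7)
The plan is to argue that Hybrid~3 is essentially indistinguishable from Hybrid~2, so that the bound established in Lemma~\ref{lem:zk} carries over with only a negligible loss. The only operational difference between the two hybrids is how \Fsort responds to a ${\tt mine}(\msg)$ query from an \emph{already corrupt} node~$i$: in Hybrid~2 the functionality uses its internal knowledge of $\sk_i$ to compute $\rho \leftarrow \prf_{\sk_i}(\msg)$ and then uses $\rho < D_p$ as the coin; in Hybrid~3 the adversary must itself supply a pair $(\rho,\pi)$ and \Fsort uses $\rho < D_p$ as the coin provided $\pi$ verifies as a valid NIZK for the statement $(\rho, \pk_i, \crs_{\rm comm}, \msg)$. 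So I need to show that with overwhelming probability the value of $\rho$ declared by the corrupt node in Hybrid~3 equals the value $\prf_{\sk_i}(\msg)$ that Hybrid~2 would have produced; once this is established, the two hybrids induce identically distributed coin flips, and hence the predicate $F$ evaluates identically on the two views, so the negligible bound from Lemma~\ref{lem:zk} transfers.

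To establish this equality, first I would switch the NIZK common reference string from the simulated CRS (as used in Hybrid~2.B) to the \emph{extraction} CRS $(\crs_1,\trap_1)$ guaranteed by the perfect knowledge extraction property of the NIZK scheme; by that property the distribution of $\crs_1$ is identical to that of an honest CRS, and in particular computationally close to the simulated CRS used in Hybrid~2 (so the replacement costs at most a negligible term in the distinguishing probability). In this modified hybrid, every time a corrupt node submits a tuple $(\rho,\pi)$ that verifies, run ${\sf Extr}(\crs_1,\trap_1,\stmt,\pi)$ to obtain a witness $w = (\sk',s')$ satisfying, with probability~$1$, both $\pk_i = \comm(\crs_{\rm comm},\sk',s')$ and $\prf_{\sk'}(\msg) = \rho$.

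The key step is then to invoke the perfect binding property of the commitment scheme: since $\pk_i$ was produced by \Fsort as $\comm(\crs_{\rm comm},\sk_i,s_i)$ for the genuine key $\sk_i$, and since $(\sk',s')$ is another valid opening of the same commitment, perfect binding forces $\sk' = \sk_i$. Hence the extracted $\rho$ satisfies $\rho = \prf_{\sk_i}(\msg)$, exactly the value Hybrid~2 would have computed internally. Consequently the coin flip $[\rho < D_p]$ in Hybrid~3 equals the coin flip that Hybrid~2 would have made, round for round and query for query; the joint distribution of all honest and corrupt mining outcomes is therefore identical to that of Hybrid~2 (up to the negligible CRS-switching step), and applying $F$ to the two views yields the same probability of bad events up to a negligible difference.

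The main obstacle is this perfect-binding / extraction alignment: we must rule out the possibility that the adversary forges a NIZK for a $\rho \ne \prf_{\sk_i}(\msg)$. Perfect knowledge extraction handles this cleanly (with zero error, hence no probabilistic slack), and perfect binding converts the extracted witness into the unique legitimate $\sk_i$; the remaining work is just the bookkeeping hybrid step of swapping the simulated CRS for the extraction CRS, which is handled by the indistinguishability of these CRS distributions promised by the NIZK definition.
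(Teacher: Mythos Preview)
Your approach is essentially the paper's: switch to the extraction CRS, extract a witness from any verifying $(\rho,\pi)$, and use perfect binding to conclude the extracted secret key must equal the $\sk_i$ that \Fsort chose, so the supplied $\rho$ coincides with $\prf_{\sk_i}(\msg)$ and the coin flips agree with Hybrid~2.

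One small correction: Hybrid~2 (and hence Hybrid~3) is defined with an \emph{honest} CRS, not the simulated one; Hybrids~2.A and~2.B were internal steps in the proof of Lemma~\ref{lem:zk}, and that lemma's conclusion is stated for Hybrid~2 proper. Since the paper's perfect knowledge extraction property guarantees $\crs_1$ is \emph{identically} distributed to the honest CRS, the switch costs nothing at all---no computational indistinguishability step is needed here. This actually simplifies your argument: the coin flips in Hybrid~3 match Hybrid~2 with probability~$1$ (not merely overwhelming), and the only negligible term comes from Lemma~\ref{lem:zk} itself.
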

\begin{proof}
We can replace the NIZK's CRS generation $\keygen$
with $\keygen_1$ 
which generates a CRS that is identically 
distributed as the honest $\keygen$, but additionally generates
an extraction trapdoor denoted $\trap_1$. 
Now, upon receiving $\algA$'s NIZK proof $\pi$, \Fsort performs extraction.
The lemma follows by observing that 
due to the perfect knowledge extraction of the NIZK
and the perfect binding property of the commitment scheme, 
it holds except with negligible probability
that the extracted witness does not match the 
node's PRF secret key  
that $\Fsort$ had chosen upfront.
\end{proof}


In the 
lemma below, when we say
that ``{\it assume that the cryptographic building blocks employed
are secure}'', we formally mean that
the pseudorandom function family employed is secure;
the non-interactive zero-knowledge proof system
that satisfies non-erasure computational zero-knowledge
and perfect knowledge extraction; 
the commitment scheme is 
computationally hiding under
selective opening and 
perfectly binding;
and for the synchronous 
honest majority protocol, additionally assume that
the signature scheme is secure.

\begin{lemma}
Assume the cryptographic building blocks employed are secure.
Then, for any \ppt $(\algA, \algZ)$,  
there exists a negligible function $\negl(\cdot)$ such that 
for any $\kappa \in \N$, 
relevant security properties (including consistency, validity, and termination) are preseved  
with all but $\negl(\kappa)$ probability in Hybrid 3. 
\end{lemma}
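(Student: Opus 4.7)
The plan is to reduce security of Hybrid~3 to two orthogonal conditions: (i) the stochastic predicate $F$ evaluates to~$0$, i.e., no ``bad mining imbalance'' occurs; and (ii) the adversary never manages to produce an accepting $(\rho,\pi)$ witnessing a successful mining attempt for a message~$\msg$ on behalf of node~$i$ unless $\rho = {\sf PRF}_{\sk_i}(\msg)$ with the genuine $\sk_i$ chosen at setup. The previous lemma already bounds~(i) by $\negl(\kappa)$, so the main work is to show that~(ii) fails only with negligible probability, and that $\neg(i) \wedge \neg(ii)$ together imply the relevant security properties.

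First I would argue~(ii). In Hybrid~3, \Fsort accepts a corrupt-side mining query only when the supplied NIZK proof~$\pi$ verifies under the extraction CRS (which is statistically indistinguishable from the honest CRS). By perfect knowledge extraction, from any accepting $\pi$ on the statement $(\rho, \pk_i, \crs_{\rm comm}, \msg)$ we can extract a witness $(\sk', s')$ with $\pk_i = \comm(\crs_{\rm comm}, \sk', s')$ and ${\sf PRF}_{\sk'}(\msg)=\rho$. By perfect binding of the commitment, $\sk' = \sk_i$, the unique PRF key committed during trusted setup. Hence the coin that \Fsort assigns to every corrupt mining attempt is forced to be ${\bf 1}[{\sf PRF}_{\sk_i}(\msg) < D_p]$, which is exactly the coin it would assign in Hybrid~2 (and thus in Hybrid~1, where honest-side and corrupt-side coins are both PRF-derived). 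Thus condition~(ii) holds with probability~$1$ in Hybrid~3 (modulo the negligible CRS-switching gap).

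Next I would import the \Fsort-hybrid analyses of Theorems~\ref{thm:safety},~\ref{thm:validity},~\ref{thm:efficiency} (or the partially synchronous analogues in Section~\ref{sec:psync13:proof}). A key observation is that those proofs use only two ingredients: that the number of successful honest vs.\ corrupt mining attempts for each message type is within the Chernoff-good window (i.e., $F=0$), and that corrupt nodes can only advertise mining successes that actually occurred in \Fsort's book-keeping. Under~(i) and~(ii), the trace of Hybrid~3 is therefore a valid trace of the \Fsort-hybrid protocol on which consistency, validity, and termination all hold deterministically. A union bound over the negligible failure of~(i) (previous lemma) and the negligible CRS/extraction gap from~(ii) gives the claimed $\negl(\kappa)$ bound.

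The main obstacle will be pinning down~(ii) cleanly, since it is tempting but wrong to rely on zero-knowledge or hiding here; those were already consumed in moving from the \Fsort-hybrid world to Hybrid~2. In Hybrid~3 the adversary is no longer given simulated proofs, so we must use the \emph{soundness / extractability} side of the NIZK (together with perfect binding) rather than zero-knowledge. Once this dichotomy ``use ZK going one direction, extraction going the other'' is set up, the remaining step is a straightforward bookkeeping argument: identify the finitely many ``cryptographic events'' (CRS switch, an extraction failure, a binding collision, a PRF distinguisher used in Hybrids~1--2) and union-bound them against the stochastic event handled by the prior lemma.
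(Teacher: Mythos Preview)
Your approach is essentially the paper's own: identify the finite list of failure modes whose complement implies that the $\Fsort$-hybrid analyses of Sections~\ref{sec:sync12:proof} and~\ref{sec:psync13:proof} go through verbatim, and take a union bound. The paper's proof is two sentences: it names the two modes as (1)~signature failure and (2)~the $F$-events, invokes the preceding lemma for~(2), and concludes.

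Two remarks on where your write-up diverges. First, your condition~(ii) --- that any accepted $(\rho,\pi)$ forces $\rho = {\sf PRF}_{\sk_i}(\msg)$ via perfect knowledge extraction plus perfect binding --- is precisely the argument the \emph{preceding} lemma already made to conclude that $F$-events are negligible in Hybrid~3. So~(ii) is not ``the main work'' remaining for this lemma; it has already been absorbed into the bound on~(i) that you cite. Your re-derivation is not wrong, just redundant. Second, you omit signature unforgeability, which the paper lists as a separate failure mode (and which the assumption paragraph immediately preceding the lemma explicitly includes for the synchronous honest-majority protocol). The $\Fsort$-hybrid proofs of consistency and validity rely on the fact that a message attributed to node~$i$ was actually sent by~$i$; the mining mechanism authenticates the mined portion, but the underlying protocol also signs messages, and a forgery there would break the quorum-intersection arguments even when $F=0$. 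Adding signature failure to your union bound closes this gap.
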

\begin{proof}
As mentioned, only two types of bad events can possibly lead to breach of the relevant
security properties: 1) signature failure; and 2) bad events defined by $F$.
Thus the lemma follows in a straightforward fashion by taking a union
bound over the two.
\end{proof}

\subsection{Real-World Execution}

We now show that the real-world
protocol is just as secure as Hybrid 3 --- recall
that the security properties we care about include consistency, validity,
and termination.

\begin{lemma}
If there is some \ppt $(\algA, \algZ)$
that causes 
the relevant security properties to be broken 
in the real world with probability $\mu$, then
there is some \ppt $\algA'$ such that $(\algA', \algZ)$
can cause 
the relevant security properties to be broken
in Hybrid 3 with probability at least $\mu$.
\end{lemma}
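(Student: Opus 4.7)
The plan is to show that Hybrid~3 is, from $\algA$'s perspective, indistinguishable from a real-world execution: both views can be generated by exactly the same efficient procedure, just run in different places. Given a real-world adversary $\algA$ that breaks consistency, validity, or termination with probability~$\mu$, I would build a \ppt simulator $\algA'$ that plays in Hybrid~3 against $\Fsort$ while internally driving a faithful real-world execution in which $\algA$ participates, with $\algZ$ unchanged.

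Concretely, $\algA'$ forwards the CRS and the public keys (commitments to PRF secret keys) it receives from $\Fsort$ to $\algA$ as the real-world setup; since Hybrid~3 uses the honest CRS and commitment algorithms, these are distributed exactly as in a real-world trusted PKI. Whenever a so-far-honest node $i$ is scheduled to mine some message $\msg$, $\algA'$ calls $\Fsort.\mine(\msg)$ on $i$'s behalf, obtains the $(\rho,\pi)$ that $\Fsort$ computes with the real PRF and real NIZK prover, and embeds it in the outgoing simulated multicast exactly as an honest node would do in the real world. Whenever $\algA$ makes a corrupt node $i$ emit a purportedly mined tuple $(\msg,i,\rho,\pi)$, $\algA'$ submits $(\rho,\pi)$ to $\Fsort$ as a corrupt mining query on $i$'s behalf; by construction of Hybrid~3, $\Fsort$ then verifies the NIZK and records $\Coin[\msg,i]$ as the bit $\rho<D_p$, which is exactly the check a real-world verifier performs on such a tuple. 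On an adaptive corruption of $i$ by $\algA$, $\algA'$ corrupts $i$ in Hybrid~3 and relays the returned $\sk_i$ (together with commitment and NIZK randomness) to $\algA$.

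A direct inspection then shows that every bit $\algA$ sees --- setup, mined pairs, responses to adaptive corruptions, and subsequent messages from forever-honest nodes --- is sampled by the same randomized procedure as in a genuine real-world run, so $\algA$'s internal view is identically distributed to its real-world view. In particular, forever-honest nodes' outputs in the simulated execution coincide with their outputs in a real-world run under the same randomness. Hence the event ``consistency, validity, or termination is broken'' occurs in $\algA'$'s Hybrid~3 interaction with probability at least~$\mu$.

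The main subtlety I expect is the bookkeeping for corrupt mining: $\algA'$ must submit each fresh $(\msg,i)$ tuple produced by $\algA$ for a corrupt~$i$ to $\Fsort$ exactly once, and must do so before any simulated honest node attempts to verify that tuple, so that the single value $\Coin[\msg,i]$ assigned by $\Fsort$ is what every subsequent honest verifier in the simulation consistently sees. This is handled by having $\algA'$ maintain a small table of already-submitted pairs. No new cryptographic reduction is needed at this final step; all computational assumptions have already been absorbed in Hybrids~1, 2, and~3, so this last lemma is a purely syntactic simulation argument.
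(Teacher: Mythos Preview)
Your simulation-based approach is essentially the paper's: $\algA'$ relays the CRS and PKI from $\Fsort$, obtains $(\rho,\pi)$ from $\Fsort$ for honest mining, submits corrupt nodes' $(\rho,\pi)$ back to $\Fsort$, and forwards corruption queries. The paper's proof has the same structure.

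There is, however, one case you do not cover and which the paper treats explicitly. When $\algA$ injects a mined tuple $(\msg,i,\rho,\pi)$ whose purported miner $i$ is a \emph{so-far-honest} node that has never called $\Fsort.\mine(\msg)$, the two worlds can diverge: in the real world the recipient accepts the tuple iff the NIZK verifies and $\rho<D_p$, whereas in Hybrid~3 the recipient calls $\Fsort.\ver(\msg,i)$, which returns~$0$ because $i$ has not mined. Your blanket claim that $\algA$'s view is ``identically distributed'' to a real-world run therefore fails precisely on these sample paths, and your description of corrupt-side bookkeeping (``whenever $\algA$ makes a corrupt node $i$ emit\ldots'') never reaches this case because $i$ is honest.

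The paper's $\algA'$ handles this by having the simulator itself check the NIZK on every incoming mined tuple; if the proof verifies yet $i$ is honest and has not mined $\msg$ with $\Fsort$, $\algA'$ records a \emph{forgery} event and drops the message before forwarding to the Hybrid~3 honest recipient. The argument then splits: on traces with no forgery the two executions coincide exactly (your identical-distribution claim is correct there), and on traces with a forgery the paper argues separately that a security failure in the real world still forces one in Hybrid~3. You should add this case distinction; without it the ``direct inspection'' step does not go through.
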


\begin{proof}
We construct the following $\algA'$:
\begin{itemize}[leftmargin=5mm]
\item 
$\algA'$ obtains 
CRSes for the NIZK and the commitment scheme 
from its $\Fsort$ and forwards them to $\algA$.
$\algA'$ also forwards the PKI it learns from \Fsort
to $\algA$.
\item 
Whenever $\algA$ corrupts some node, $\algA'$ does the same
with its \Fsort, and forwards whatever learned to $\algA$.
\item 
Whenever $\algA$ sends some message to an honest node, 
for any portion of the message that is a ``mined message'' of any type,
let $(\msg, \rho, \pi)$ denote this mined message --- we assume that $\msg$ contains
the purported miner of this message denoted $i$.

\begin{itemize}[leftmargin=5mm]
\item 
$\algA'$ checks the validity of $\pi$ and that $\rho < D_p$ for 
an appropriate choice of $p$ depending on the message's type;
ignore the message if the checks fail;
\item 
if the purported sender $i$ is an honest node and 
node $i$ has not successfully mined $\msg$ with \Fsort, 
record a {\sf forgery} event and simply ignore this message.
Otherwise, continue with the following steps.
\item 
if the purported sender $i$ is a corrupt node: 
$\algA'$ issues a corresponding mining
attempt to $\Fsort$ on behalf of $i$ 
with the corresponding $\rho$ and $\pi$ 
if no such mining attempt has been made before;
\item 
Finally, $\algA'$ forwards $\msg$ to the 
destined honest on behalf of the corrupt sender. 
\end{itemize}
\item 
Whenever $\algA'$ 
receives some message from an honest node (of Hybrid 3):
for every portion of the message that 
is a ``mined message'' of any type,  
at this point $\algA'$
must have heard from \Fsort 
the corresponding $\rho$, 
and $\pi$ terms.  
$\algA'$ augments the message with these terms and forwards the resulting
message to $\algA$.
\end{itemize}

Note that conditioned on $\view$s (determined by all randomness
of the execution) with no {\sf forgery} event 
then either the relevant bad events occur
both in Hybrid 3 and the real-world execution, or occur in neither. 
For $\view$s with {\sf forgery} events, 
it is not difficult to see that if Hybrid 3 (on this $\view$)
does not incur the relevant bad events, 
then neither would the real-world 
execution (for this $\view$).
\end{proof}



\end{document}